\def\calC{\mathcal{C}}
\def\calI{\mathcal{I}}
\def\calL{\mathcal{L}}
\newtheorem{observation}{Observation}
\begin{document}

\title{Separating Overlapped Intervals on a Line\thanks{This research was supported in part by NSF under Grant CCF-1317143.}
}

\author{
Shimin Li
\and
Haitao Wang
}

\institute{
Department of Computer Science\\
Utah State University, Logan, UT 84322, USA\\
\email{shiminli@aggiemail.usu.edu, haitao.wang@usu.edu}\\
}

\maketitle

\pagestyle{plain}
\pagenumbering{arabic}
\setcounter{page}{1}

\begin{abstract}
Given $n$ intervals on a line $\ell$,
we consider the problem of moving these intervals on $\ell$
such that after the movement no two intervals overlap and the maximum moving distance of
the intervals is minimized.
The difficulty for solving the
problem lies in determining the order of the intervals in an optimal
solution. By interesting observations, we show that it is sufficient
to consider at most $n$ ``candidate'' lists of ordered intervals.
Further, although explicitly maintaining these lists takes
$\Omega(n^2)$ time and space, by more observations and
a pruning technique, we present an algorithm that can compute an optimal solution in
$O(n\log n)$ time and $O(n)$ space. We also prove an $\Omega(n\log n)$ time lower
bound for solving the problem, which implies the optimality of our algorithm.
\end{abstract}

\section{Introduction}
\label{sec:intro}
Let $\calI$ be a set of $n$ intervals on a real line $\ell$.
We say that two intervals {\em overlap} if their intersection contains
more than one point. In this paper, we consider an {\em
interval separation problem}: move the intervals of $\calI$ on $\ell$
such that
no two intervals overlap and the maximum moving distance of these intervals is minimized.

If all intervals of $\calI$ have the same length, then after the left
endpoints of the intervals are sorted, the problem can be solved
in $O(n)$ time by an easy greedy algorithm \cite{ref:LiAl15}. For the
general problem where intervals may have different
lengths, to the best of our knowledge, the problem has not been
studied before. In this paper, we present an $O(n\log n)$ time and
$O(n)$ space algorithm for it. We also show an $\Omega(n\log n)$ time lower
bound for solving the problem under the algebraic decision tree model, and thus
our algorithm is optimal.

As a basic problem and
like many other interval problems, the interval separation problem
potentially has many applications. For example, one possible application is on scheduling, as follows.
Suppose there are $n$ jobs that need to be completed on a machine. Each job
requests a starting time and a total time for using the machine
(hence it is a time interval). The machine can only
work on one job at any time, and once it works on one job, it is not
allowed to switch to other jobs until the job is finished. If the
requested time intervals of the jobs have any overlap, then we have to
change the requested starting times of some intervals. In order to
minimize deviations from their requested time intervals, one scheduling
strategy could be changing the requested starting times (either advance
or delay) such that the maximum difference between the requested
starting times and the scheduled starting times of all jobs is minimized.
Clearly, the problem is an instance of the interval separation
problem.
The problem also has applications in the following
scenario. Suppose a wireless sensor network has $n$ wireless mobile devices on a line and each
device has a transmission range. We want to move the devices along the
line to eliminate the interference such that the maximum moving
distance of the devices is minimized (e.g., to save the energy).
This is also an instance of the interval separation problem.

\subsection{Related Work}
Many interval problems have been used to model scheduling problems. We
give a few examples. Given $n$ jobs, each job requests a time
interval to use a machine. Suppose there is only one machine and the goal
is to find a maximum number of jobs whose requested time intervals do
not have any overlap (so that they can use the machine). The problem
can be solved in $O(n\log n)$ time by an easy greedy
algorithm~\cite{ref:KleinbergAl05Ch4}. Another related problem is to find
a minimum number of machines such that all jobs can be
completed~\cite{ref:KleinbergAl05Ch4}.
Garey et al.~\cite{ref:GareySc81} studied a scheduling problem, which is essentially
the following problem. Given $n$ intervals on a line, determine whether it is
possible to find a unit-length sub-interval in each input interval,
such that no two sub-intervals overlap.
An $O(n\log n)$ time algorithm was given in \cite{ref:GareySc81} for
it. An optimization version of the problem was also
studied \cite{ref:ChrobakOn07,ref:VakhaniaA13},
where the goal is to find a maximum number of intervals that contain non-overlapping
unit-length sub-intervals.
Other scheduling problems on intervals have also been considered,
e.g., see \cite{ref:ChrobakA06,ref:GareySc81,ref:KleinbergAl05Ch4,ref:LangSc76,ref:LawlerSe93,ref:SimonsA78,ref:VakhaniaMi13}.

Many problems on wireless sensor networks are also modeled as interval
problems. For example, a mobile sensor barrier coverage problem can be modeled as
the following interval problem. Given on a line $n$ intervals (each interval is
the region covered by a sensor at the center of the
interval) and another segment $B$ (called ``barrier''), the
goal is to move the intervals such that the union of the intervals
fully covers $B$ and the maximum moving distance of all intervals is
minimized. If all intervals have the same length, Czyzowicz et
al.~\cite{ref:CzyzowiczOn09} solved the problem in $O(n^2)$ time and
later Chen et al.~\cite{ref:ChenAl13} improved it to $O(n\log n)$ time.
If intervals have different lengths, Chen et
al.~\cite{ref:ChenAl13} solved the problem in $O(n^2\log n)$ time. The
min-sum version of the problem has also been considered. If
intervals have the same length, Czyzowicz et al.~\cite{ref:CzyzowiczOn10}
gave an $O(n^2)$ time algorithm, and Andrews and
Wang~\cite{ref:AndrewsMi15} solved the problem in $O(n\log n)$ time.
If intervals have different lengths, then the problem becomes
NP-hard~\cite{ref:ChenAl13}. Refer
to~\cite{ref:Bar-NoyMa13,ref:BhattacharyaOp09,ref:ChenOp15,ref:LiMi11,ref:MehrandishOn11,ref:MehrandishMi11}
for other interval problems on mobile sensor barrier coverage.

Our interval separation problem may also be considered as a
coverage problem in the sense that we want to move intervals of
$\calI$ to cover a total of maximum length
of the line $\ell$ such that the maximum moving distance of the
intervals is minimized.

\subsection{Our Approach}
\label{sec:approach}
We consider a {\em one-direction} version of the problem in which
intervals of $\calI$ are only allowed to move rightwards. We show (in Section~\ref{sec:pre}) that
the original ``two-direction'' problem can be reduced to the
one-direction problem in the following way: If OPT is an optimal
solution of the one-direction problem and $\delta_{opt}$ is the
maximum moving distance of all intervals in OPT, then we can obtain an
optimal solution for the two-direction problem by moving each interval
in OPT leftwards by $\delta_{opt}/2$.

Hence, it is sufficient to solve the one-direction problem. It turns
out that the difficulty is mainly on determining the order of intervals
of $\calI$ in OPT.
Indeed, once such an ``optimal order'' is known,
it is quite straightforward to compute the positions of the
intervals in OPT in additional $O(n)$ time
(i.e., consider the intervals in the order one by
one and put each interval ``as left as possible'').
If all intervals have the same length, then such an
optimal order is obvious, which is the order of the intervals sorted
by their left endpoints in the input. Indeed, this is how the $O(n)$
time algorithm in \cite{ref:LiAl15} works.

However, if the intervals have
different lengths, which is the case we consider in this paper, then determining
an optimal order is substantially more challenging. At first glance, it seems
that we have to consider all possible orders of the intervals, whose
number is exponential. By several interesting (and even surprising)
observations, we show that we only need to consider at most $n$
ordered lists of intervals. Consequently, a straightforward algorithm
can find and maintain these ``candidate'' lists in $O(n^2)$ time
and space.  We call it the ``preliminary algorithm'', which is
essentially a greedy algorithm. The algorithm is
relatively simple but it is quite involved to prove its correctness.
To this end, we extensively use the ``exchange argument'', which is a standard
technique for proving correctness of greedy algorithms (e.g.,
see~\cite{ref:KleinbergAl05Ch4}).

To further improve the preliminary algorithm, we discover more
observations, which help us ``prune'' some ``redundant'' candidate lists. More
importantly, the remaining lists have certain monotonicity properties
such that we are able to implicitly
compute and maintain them in $O(n\log n)$ time and $O(n)$
space, although the number of the lists can still be $\Omega(n)$.
Although the correctness analysis is fairly complicated, the
algorithm is still quite simple and easy to implement
(indeed, the most ``complicated'' data structure is a binary search tree).

The rest of the paper is organized as follows. In Section~\ref{sec:pre}, we give notation
and reduce our problem to the one-direction case. In
Section~\ref{sec:prealgo}, we give our
preliminary algorithm, whose correctness is proved in
Section~\ref{sec:preproof}. The improved algorithm is presented in
Section~\ref{sec:improve}. In Section~\ref{sec:conclude}, we conclude the paper and
prove the $\Omega(n\log n)$ time lower bound by a reduction from the integer
element distinctness problem  \cite{ref:LubiwA91,ref:YaoLo91}.

\section{Preliminaries}
\label{sec:pre}

We assume the line $\ell$ is the $x$-axis.
The {\em one-direction} version of the interval separation problem is
to move intervals of $\calI$ on $\ell$ in one direction (without
loss of generality, we assume it is the right direction) such
that no two intervals overlap and the maximum moving distance of the
intervals is minimized. Let OPT denote an optimal solution of the
one-direction version and let $\delta_{opt}$ be the maximum moving
distance of all intervals in OPT. The following lemma gives a
reduction from the general ``two-direction'' problem to the
one-direction problem.

\begin{lemma}\label{lem:reduction}
An optimal solution for the interval separation problem can be
obtained by moving every interval in OPT leftwards by $\delta_{opt}/2$.
\end{lemma}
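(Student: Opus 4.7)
The plan is to prove the lemma by showing two matching bounds on $\delta^*$, the optimal maximum moving distance for the original two-direction problem. Namely, I would show (i) that the displayed solution is feasible for the two-direction problem and has cost $\delta_{opt}/2$, giving $\delta^* \le \delta_{opt}/2$, and (ii) that any two-direction solution of cost $\delta$ can be converted into a one-direction solution of cost at most $2\delta$, giving $\delta_{opt} \le 2\delta^*$.

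For (i), I would observe that uniformly translating every interval of OPT leftwards by $\delta_{opt}/2$ preserves pairwise non-overlapping, since relative positions are unchanged. For each interval $I$, let $d_I \in [0,\delta_{opt}]$ be its rightward displacement in OPT. After the uniform left shift, its net displacement from the original position equals $d_I - \delta_{opt}/2 \in [-\delta_{opt}/2,\ \delta_{opt}/2]$, so $|d_I - \delta_{opt}/2| \le \delta_{opt}/2$. Hence the proposed solution is feasible for the two-direction problem and its cost is at most $\delta_{opt}/2$.

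For (ii), I would take any feasible two-direction solution SOL with maximum moving distance $\delta$, and consider the solution SOL$'$ obtained by shifting all intervals of SOL rightwards by $\delta$. As before, pairwise non-overlapping is preserved. Each interval of SOL$'$ has moved from its original position purely to the right by an amount in $[0,2\delta]$, so SOL$'$ is a feasible solution of the one-direction problem with maximum moving distance at most $2\delta$. By optimality of OPT, $\delta_{opt} \le 2\delta$, and taking $\delta = \delta^*$ yields $\delta_{opt} \le 2\delta^*$. Combined with (i), this gives $\delta^* = \delta_{opt}/2$, so the solution described in the lemma is optimal.

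I do not anticipate a serious obstacle; the proof is essentially an averaging/shifting argument. The only point that warrants care is verifying in step (ii) that after the rightward shift every interval has truly moved rightward from its original position (not just from its SOL position), which follows because each interval's SOL position is within $\delta$ of its original position on either side, and the additional shift by $\delta$ absorbs any leftward component.
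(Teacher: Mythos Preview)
Your proof is correct and takes essentially the same approach as the paper's: both establish that the shifted solution has cost at most $\delta_{opt}/2$ and that any two-direction solution of cost $\delta$ yields a one-direction solution of cost at most $2\delta$ by a uniform rightward shift. The only cosmetic difference is that the paper phrases the second step as a proof by contradiction, whereas you state the two matching inequalities $\delta^*\le \delta_{opt}/2$ and $\delta_{opt}\le 2\delta^*$ directly.
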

\begin{proof}
Let $SOL$ be the solution obtained
by moving every interval in OPT leftwards by $\delta_{opt}/2$.
Our goal is to show that $SOL$ is an optimal solution for our original problem. Let
$\delta$ be the maximum moving distance of all intervals in $SOL$. Since
no intervals in OPT have been moved leftwards (with respect to their
input positions), we have
$\delta=\delta_{opt}/2$.

Assume to the contrary that $SOL$ is not optimal. Then, there exists
another solution $SOL'$ for the original problem
in which the maximum interval moving distance is
$\delta'<\delta$. By moving every interval of $SOL'$ rightwards by
$\delta'$, we can obtain a feasible solution $SOL''$ for the
one-direction problem in which no
interval has been moved leftwards (with respect to their input
positions) and the maximum interval moving distance of $SOL''$ is at
most $2\delta'$, which is smaller than $\delta_{opt}$ since $\delta'<\delta$.
However, this contradicts with that OPT is an optimal solution for the
one-direction case.
\qed
\end{proof}

By Lemma~\ref{lem:reduction}, once we have an optimal solution for the
one-direction problem, we can obtain an optimal solution
for our original problem in additional $O(n)$ time. In the following,
we will focus on solving the one-direction case.

We first sort all intervals of $\calI$ by their left endpoints.
For ease of exposition, we assume no two intervals have their left
endpoints located at the same position (otherwise we
could break ties by also sorting their right endpoints). Let
$\calI=\{I_1,I_2,\ldots,I_n\}$ be the sorted intervals by their
left endpoints from left to right.
For each (integer) $i\in [1,n]$, denote by $l_i$ and $r_i$ the (physical) left and right
endpoints of $I_i$, respectively. Denote by $x_i^l$ and $x_i^r$ the
$x$-coordinates of $l_i$ and $r_i$ in the input, respectively. Note that for each $i\in [1,n]$, the two physical endpoints $l_i$ and $r_i$ may be moved during the algorithm, but the two coordinates  $x_i^l$ and $x_i^r$ are always fixed.
Denote by $|I_i|$ the length of $I_i$, i.e., $|I_i|=x_i^r-x_i^l$.

For convenience, when we say the {\em position} of an interval, we refer
to the position of the left endpoint of the interval.

With respect to a subset $\calI'$ of $\calI$, by a {\em configuration}
of $\calI'$, we refer to a specification of the position of each
interval of $\calI'$. For example, in the input configuration of
$\calI$, interval $I_i$ is at  $x_i^l$ for each $i\in [1,n]$. Given a configuration
$\calC$ of $\calI'$, for each
interval $I_i\in \calI'$, if $l_i$ is at $x$ in $\calC$, then we call
the value $x-x_i^l$ the {\em displacement} of $I_i$, denoted by
$d(i,\calC)$, and if $d(i,\calC)\geq 0$, then we say that $I_i$ is {\em valid} in $\calC$.
We say that $\calC$ is {\em feasible} if the displacement of every
interval of $\calI'$ is valid and no two intervals of $\calI'$ overlap
in $\calC$. The maximum displacement of the intervals of $\calI'$ in
$\calC$ is called the {\em max-displacement} of $\calC$, denoted by
$\delta(\calC)$. Hence, finding an optimal solution for the
one-direction problem is equivalent to computing a feasible
configuration of $\calI$ whose max-displacement is minimized; such a
configuration is also called an {\em optimal configuration}.

For convenience of discussion, depending on the context, we will use
the intervals $I_i$ of $\calI$ and their indices $i$ interchangeably.
For example, $\calI$ may also refer to the set of indices
$\{1,2,\ldots,n\}$.

Let $L_{opt}$ be the list of intervals of $\calI$ in an optimal configuration sorted from left to right. We call $L_{opt}$ an {\em optimal list}.
Given $L_{opt}$, we can compute an optimal configuration in
$O(n)$ time by an easy greedy algorithm, called the {\em left-possible placement strategy}: Consider the intervals
following their order in $L_{opt}$, and for each interval, place it
on $\ell$ as left as possible so that it does not overlap with the
intervals that are already placed on $\ell$ and its displacement is
non-negative. The following lemma formally gives the algorithm and
proves its correctness.

\begin{lemma}\label{lem:left}
Given an optimal list $L_{opt}$, we can compute an optimal
configuration in $O(n)$ time by the left-possible placement strategy.
\end{lemma}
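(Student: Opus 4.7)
The plan is to establish the two assertions of the lemma separately: the $O(n)$ time bound and the optimality of the output. The running time is immediate, since the strategy sweeps through $L_{opt}$ once and only needs to remember the right endpoint of the most recently placed interval in order to compute the placement of the next one in $O(1)$ time; the total cost is therefore $O(n)$.

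For optimality, let $\calC$ denote the configuration built by the strategy and let $\calC_{opt}$ be an optimal configuration inducing $L_{opt}$, and write $L_{opt}=I_{k_1},I_{k_2},\ldots,I_{k_n}$. I would prove by induction on $j$ that $d(k_j,\calC)\leq d(k_j,\calC_{opt})$ for every $j\in [1,n]$. The base case $j=1$ holds because the strategy places $I_{k_1}$ at its input position, giving $d(k_1,\calC)=0$, and $\calC_{opt}$ satisfies $d(k_1,\calC_{opt})\geq 0$ by non-negativity of displacements. For the inductive step, the strategy puts the left endpoint of $I_{k_j}$ in $\calC$ at $\max(x_{k_j}^l, r_{k_{j-1}}^{\calC})$, where $r_{k_{j-1}}^{\calC}$ is the right endpoint of $I_{k_{j-1}}$ in $\calC$. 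On the other hand, feasibility of $\calC_{opt}$ together with the fact that $I_{k_{j-1}}$ immediately precedes $I_{k_j}$ in $L_{opt}$ forces the left endpoint of $I_{k_j}$ in $\calC_{opt}$ to be at least $\max(x_{k_j}^l, r_{k_{j-1}}^{\calC_{opt}})$. The inductive hypothesis gives $r_{k_{j-1}}^{\calC}\leq r_{k_{j-1}}^{\calC_{opt}}$ (since the two intervals have the same length $|I_{k_{j-1}}|$), so the inequality propagates to $I_{k_j}$. Taking the maximum over $j$ yields $\delta(\calC)\leq \delta(\calC_{opt})$, and hence $\calC$ is itself optimal.

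I do not expect a substantive obstacle; the main sanity checks are that $\calC$ is feasible and genuinely realizes the order $L_{opt}$. Feasibility is built into the strategy: clamping to $x_{k_j}^l$ ensures non-negative displacements, and starting $I_{k_j}$ no earlier than $r_{k_{j-1}}^{\calC}$ forbids overlap with the previous interval (and hence, by transitivity, with any earlier one). The order is preserved because intervals are processed in the order of $L_{opt}$ and each new interval is placed weakly to the right of its predecessor. With these two checks in hand, the induction above finishes the proof.
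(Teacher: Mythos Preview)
Your proof is correct and follows essentially the same approach as the paper: compare the greedy configuration $\calC$ to an optimal configuration $\calC_{opt}$ realizing $L_{opt}$, and argue that each interval in $\calC$ sits no farther right than in $\calC_{opt}$, hence $\delta(\calC)\le\delta(\calC_{opt})$. The only difference is that you spell out the left-to-right induction explicitly, whereas the paper simply asserts the monotonicity of positions as ``not difficult to see.''
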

\begin{proof}
We first describe the algorithm and then prove its correctness.

We consider the indices one by one
following their order in $L_{opt}$. Consider any index $i$. If $I_i$ is
the first interval of $L_{opt}$, then we place $I_i$ at $x_i^l$ (i.e.,
$I_i$ stays at its input position).
Otherwise, let $I_j$ be the previous interval of $I_i$ in $L_{opt}$. So $I_j$ has
already been placed on $\ell$. Let $x$ be the current $x$-coordinate
of the right endpoint $r_j$ of $I_j$. We place the left endpoint
$l_i$ of $I_i$ at $\max\{x_i^l,x\}$.
If $I_i$ is the last interval of $L_{opt}$, then we finish the algorithm.
Clearly, the algorithm can be easily implemented in $O(n)$ time.

Let $\calC$ be the configuration of all intervals obtained by the above algorithm.
Recall that $\delta(\calC)$ denote the max-displacement of $\calC$.
Below, we show that $\calC$ is an optimal configuration.

Indeed, since
$L_{opt}$ is an optimal list, there exists an optimal configuration
$\calC'$ in which the order of the indices of $\calI$ follows that in
$L_{opt}$. Hence, the max-displacement of $\calC'$ is $\delta_{opt}$.
According to our greedy strategy for computing $\calC$, it is not difficult to
see that the position of each interval $I_i$ of $\calI$ in $\calC$ cannot be strictly to
the right of its position in $\calC'$. Therefore, the displacement of
each interval in $\calC$ is no larger than that in $\calC'$. This
implies that $\delta(\calC)\leq \delta_{opt}$. Therefore, $\calC$ is
an optimal configuration.
\qed
\end{proof}

Due to Lemma~\ref{lem:left}, we will focus on computing an
optimal list $L_{opt}$.


For any subset $\calI'$ of $\calI$, an {\em (ordered) list} of $\calI'$
refers to a permutation of the indices of $\calI'$.
Let $L$ be a list of $\calI$ and let $L'$ be a list of $\calI'$ with
$\calI'\subseteq \calI$. We say that $L'$ is {\em consistent with} $L$
if the relative order of indices of $\calI'$ in $L$ is the same as
that in $L'$. If $L'$ is consistent with an optimal list $L_{opt}$ of
$\calI$, then we call $L'$ a {\em canonical list} of $\calI'$.

For any $1\leq i\leq j\leq n$, we use $\calI[i,j]$ to
denote the subset of consecutive intervals of $\calI$ from $i$ to $j$,
i.e, $\{i,i+1,\ldots,j\}$.

\section{The Preliminary Algorithm}
\label{sec:prealgo}

In this section, we describe an algorithm that can compute an optimal
list in $O(n^2)$ time and space. The correctness of the algorithm is
mainly discussed in Section~\ref{sec:preproof}.


Our algorithm considers the intervals of $\calI$ one by one by their
index order. After each interval $I_i$ is processed, we obtain a set
$\calL$ of at most $i$ lists of the indices of $\calI[1,i]$, such that
$\calL$ contains at least one canonical list of $\calI[1,i]$.
For each list $L\in \calL$, a feasible configuration $\calC_L$ of the intervals
of $\calI[1,i]$ is also maintained. As will be clear later,
$\calC_L$ is essentially the configuration obtained by applying the
left-possible placement strategy on the intervals of $\calI[1,i]$
following their order in $L$. For each $j\in [1,i]$, we let $x_j^l(\calC_L)$
and $x_j^r(\calC_L)$ respectively denote the $x$-coordinates of $l_j$
and $r_j$ in $\calC_L$ (recall that $l_j$ and $r_j$ are the left and
right endpoints of the interval $I_j$, respectively). Recall that
$\delta(\calC_L)$ denotes the max-displacement of $\calC_L$, i.e, the maximum displacement of the intervals
of $\calI[1,i]$ in $\calC_L$.

Initially when $i=1$, we have only one list $L=\{1\}$ and let $\calC_L$ consist
of the single interval $I_1$ at its input position, i.e.,
$x_1^l(\calC_L)=x_1^l$. Clearly, $\delta(\calC_L)=0$. We let $\calL$
consist of the only list $L$. It is vacuously true that $L$ is a canonical list of
$\calI[1,1]$.

In general, assume interval $I_{i-1}$ has been processed and we have
the list set $\calL$ as discussed above. In the following, we give our
algorithm for processing $I_{i}$. Consider a list $L\in \calL$. Note
that $\calC_L$ has been computed, which is a feasible configuration of
$\calI[1,i-1]$. The value $\delta(\calC_L)$ is also maintained.
Let $m$ be the last index in $L$. Note that $m<i$.
Depending on the values
of $x_i^l$, $x_i^r$, $x_m^r$, and $x_m^l(\calC_L)$, there are three main cases
(e.g. see Fig.~\ref{fig:maincases}).

\begin{figure}[t]
\begin{minipage}[t]{\textwidth}
\begin{center}
\includegraphics[height=0.8in]{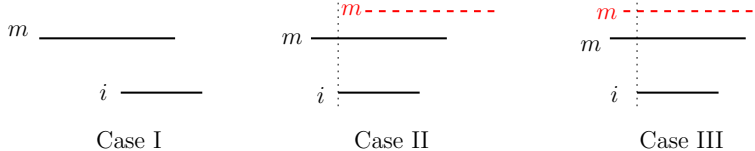}
\caption{\footnotesize Illustrating the three main cases. The (black)
solid segments show intervals in their input positions
and the (red) dashed segments shows interval $I_m$ in $\calC_L$.}
\label{fig:maincases}
\end{center}
\end{minipage}
\vspace{-0.15in}
\end{figure}

\paragraph{Case I: $x_{i}^r \geq x_m^r$ (i.e., the right endpoint $r_i$ of $I_{i}$ is to
the right of $r_m$ in the input).} In this case, we update
$L$ by appending $i$ to the end of $L$. Further, we update the configuration
$\calC_L$ by placing $l_{i}$ at
$\max\{x_m^r(\calC_L),x_{i}^l\}$ (which follows the left-possible
placement strategy). We let $L'$ denote the original list of $L$ before $i$ is
inserted and let $\calC_{L'}$ denote the original configuration of $\calC_L$.
We update $\delta(\calC_L)$ by the following observation.

\begin{observation}\label{obser:case1}
$\calC_L$ is a feasible configuration and
$\delta(\calC_L)=\max\{\delta(\calC_{L'}),x_{i}^l(\calC_{L})-x_{i}^l\}$.
\end{observation}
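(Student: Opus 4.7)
My plan is to prove the two assertions (feasibility and the max-displacement formula) separately, using the inductive fact that $\calC_{L'}$ is already a feasible configuration of the intervals of $\calI[1,i-1]$ in the order prescribed by $L'$, produced by the left-possible placement strategy.

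First I will verify feasibility. There are two things to check: that every interval has nonnegative displacement, and that no two intervals overlap. For displacement, note that the positions of all intervals $I_j$ with $j\in L'$ are unchanged in passing from $\calC_{L'}$ to $\calC_L$, so their displacements are nonnegative by the feasibility of $\calC_{L'}$. For $I_i$ itself, the placement rule sets $x_i^l(\calC_L)=\max\{x_m^r(\calC_{L'}),x_i^l\}\geq x_i^l$, so $d(i,\calC_L)\geq 0$. For non-overlap, the only possible new violation involves $I_i$ and some $I_j$ with $j\in L'$. Since $x_i^l(\calC_L)\geq x_m^r(\calC_{L'})$, the interval $I_i$ lies entirely to the right of $I_m$, so it does not overlap $I_m$. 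For any other $I_j$ with $j\in L'\setminus\{m\}$, because $\calC_{L'}$ is feasible and $I_m$ is the last index of $L'$ (hence the rightmost interval in $\calC_{L'}$ under the left-possible placement), we have $x_j^r(\calC_{L'})\leq x_m^l(\calC_{L'})<x_m^r(\calC_{L'})\leq x_i^l(\calC_L)$, so $I_j$ does not overlap $I_i$.

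Next I will derive the formula for $\delta(\calC_L)$. By definition, $\delta(\calC_L)=\max_{j\in\{1,\ldots,i\}} d(j,\calC_L)$. As observed above, $d(j,\calC_L)=d(j,\calC_{L'})$ for every $j\in[1,i-1]$, so the maximum over these indices equals $\delta(\calC_{L'})$. The remaining term $d(i,\calC_L)=x_i^l(\calC_L)-x_i^l$ contributes the second quantity inside the $\max$. Combining these gives exactly the claimed identity.

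I do not anticipate a real obstacle here; the only point that requires care is the non-overlap check between $I_i$ and intervals appearing before $I_m$ in $L'$, which is resolved by remarking that $I_m$ is the rightmost interval of $\calC_{L'}$ because $L'$ was constructed by the left-possible placement strategy (as explicitly noted in the algorithm description preceding the observation). Everything else is a direct unpacking of the definitions of displacement, feasibility, and max-displacement together with the inductive hypothesis on $\calC_{L'}$.
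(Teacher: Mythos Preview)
Your proof is correct and follows essentially the same approach as the paper's own argument, just with more explicit detail on the non-overlap check (the paper simply asserts that $I_i$ ``does not overlap with any other interval in $\calC_L$'' without spelling out why intervals before $I_m$ in $L'$ cannot overlap $I_i$). One tiny nitpick: the strict inequality $x_m^l(\calC_{L'})<x_m^r(\calC_{L'})$ tacitly assumes $|I_m|>0$; using $\leq$ throughout keeps the chain valid for degenerate intervals and changes nothing else.
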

\begin{proof}
By our way of setting $I_i$ in $\calC_L$, $I_{i}$ is valid and does not overlap with any other interval in
$\calC_L$. Hence, $\calC_L$ is feasible. Comparing with $\calC_{L'}$,
$\calC_L$ has one more interval $I_{i}$. Therefore, $\delta(\calC_L)$
is equal to the larger value of $\delta(\calC_{L'})$ and the displacement
of $I_{i}$ in $\calC_L$, which is $x_{i}^l(\calC_L)-x_{i}^l$.
\qed
\end{proof}

The following lemma will be used to show the correctness of our
algorithm and its proof is deferred to Section~\ref{sec:preproof}.
\begin{lemma}\label{lem:case1}
If $L'$ is a canonical list of $\calI[1,i-1]$, then $L$ is a canonical
list of $\calI[1,i]$.
\end{lemma}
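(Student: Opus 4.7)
The plan is to prove the lemma by an exchange argument. Let $L_{opt}$ be an optimal list of $\calI$ whose restriction to $\calI[1,i-1]$ is $L'$; such a list exists because $L'$ is canonical. If $I_i$ already appears after $I_m$ in $L_{opt}$, then the restriction of $L_{opt}$ to $\calI[1,i]$ is precisely $L$ and we are done. Otherwise $I_i$ precedes $I_m$ in $L_{opt}$; form $L_{opt}^{*}$ by extracting $I_i$ from its current position and reinserting it immediately after $I_m$. The restriction of $L_{opt}^{*}$ to $\calI[1,i-1]$ is still $L'$ (only $I_i \notin \calI[1,i-1]$ was moved) and $I_i$ now follows $I_m$, so the restriction of $L_{opt}^{*}$ to $\calI[1,i]$ equals $L$. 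Everything hinges on showing $\delta(\calC_{opt}^{*}) \leq \delta_{opt}$, where $\calC_{opt}^{*}$ is the left-placement configuration of $L_{opt}^{*}$.

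The central inequality combines feasibility with the Case I hypothesis. Because $I_i$ precedes $I_m$ in $L_{opt}$, feasibility of $\calC_{opt}$ forces $x_m^l(\calC_{opt}) \geq x_i^l(\calC_{opt}) + |I_i| \geq x_i^l + |I_i| = x_i^r$, and by Case I $x_i^r \geq x_m^r$, so $x_m^l(\calC_{opt}) \geq x_m^r$ and $d(m,\calC_{opt}) \geq |I_m|$. This substantial pre-existing displacement of $I_m$ is exactly what creates the room to place $I_i$ right after $I_m$ in $\calC_{opt}^{*}$ without exceeding $\delta_{opt}$.

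Writing $L_{opt} = A \cdot I_i \cdot B \cdot I_m \cdot C$ where $A, B, C$ are possibly empty sublists, the intervals of $A$ have identical positions in both configurations. For the intervals of $B$ together with $I_m$, the chain in $\calC_{opt}^{*}$ no longer carries the push of $I_i$, so a straightforward induction along the chain yields $x_j^l(\calC_{opt}^{*}) \leq x_j^l(\calC_{opt})$ for every $j \in B \cup \{m\}$; in particular their displacements cannot grow. For $I_i$ itself, in the tight-chain case $x_m^l(\calC_{opt}^{*}) = x_m^l(\calC_{opt}) - |I_i|$, and a direct computation gives
\[
d(i,\calC_{opt}^{*}) = x_m^r(\calC_{opt}^{*}) - x_i^l = d(m,\calC_{opt}) + (x_m^r - x_i^r) \leq d(m,\calC_{opt}) \leq \delta_{opt},
\]
where the last inequality uses Case I. For the intervals of $C$, one derives the companion bound $x_i^r(\calC_{opt}^{*}) \leq x_m^r(\calC_{opt})$ (again from the tightness of the chain and the key inequality), which says the left-placement chain serving $C$ in $\calC_{opt}^{*}$ begins no later than the one serving $C$ in $\calC_{opt}$; therefore all displacements in $C$ are bounded by their $\calC_{opt}$ values.

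The main obstacle I anticipate is the bookkeeping when $B$ is nonempty and the chain from $I_i$ through $B$ to $I_m$ in $\calC_{opt}$ is not tight, i.e., when some interval of $B$ has its position pinned by its own input coordinate rather than by its predecessor's right endpoint. In that regime, deleting $I_i$ does not slide $I_m$ leftward by the full amount $|I_i|$, so both the $I_i$-displacement bound and the companion bound for $C$ must be recovered by a segment-by-segment comparison along $B$, invoking $x_i^r \geq x_m^r$ at each breakpoint to absorb the shortfall. A plausible alternative, if this direct bookkeeping gets unwieldy, is to show that in the non-tight regime $L_{opt}$ admits a strict improvement by relocating $I_m$ (or $I_i$) elsewhere, contradicting the optimality of $L_{opt}$ to begin with and so making this subcase vacuous.
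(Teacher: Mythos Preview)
Your approach differs from the paper's in a substantive way. The paper invokes a general Inversion Lemma (Lemma~\ref{lem:inversion}): for the inversion $(m,i)$ it produces an optimal list $L'_{opt}$ in which \emph{all} indices $t\ge i$ that lay between $i$ and $m$ in $L_{opt}$ are moved to after $m$ (and all indices $t<i$, $t\neq m$ in that range are kept before $m$). Your $L_{opt}^{*}$ moves only $I_i$, leaving every other element of $B$ untouched before $I_m$. That simplification is exactly what creates the obstacle you flag.

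The gap is real and your two proposed patches do not close it. In the non-tight regime there is some pinned $I_k\in B$ with $x_k^l(\calC_{opt})=x_k^l$, and then removing $I_i$ leaves $I_m$ fixed: $x_m^r(\calC_{opt}^{*})=x_m^r(\calC_{opt})$. Consequently $x_i^r(\calC_{opt}^{*})=x_m^r(\calC_{opt})+|I_i|$, and the chain feeding $C$ starts strictly later than in $\calC_{opt}$ by $|I_i|$; displacements in $C$ can genuinely grow. The hypothesis $x_i^r\ge x_m^r$ relates input right endpoints and gives you nothing on the \emph{right} side of $I_m$ in the configuration, so your Fix~1 (``absorb the shortfall at each breakpoint using $x_i^r\ge x_m^r$'') has no leverage here. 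Even the bound on $d(i,\calC_{opt}^{*})$ itself is in doubt: from $d(m,\calC_{opt})\le\delta_{opt}$ and $x_m^r\le x_i^r$ you only get $d(i,\calC_{opt}^{*})\le x_m^r(\calC_{opt})-x_i^l\le \delta_{opt}+|I_i|$, not $\delta_{opt}$. Your Fix~2 (non-tight $\Rightarrow$ $L_{opt}$ not optimal) is plausible---a large pinned gap before $I_m$ does make $d(m,\calC_{opt})$ look wastefully big---but turning that intuition into a proof amounts to showing you can relocate $I_m$ (and perhaps the large-index intervals of $B$) without harm, which is precisely the content of the paper's Inversion Lemma. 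In short, the single-interval move is not enough; you need the more elaborate rearrangement that also pushes the indices $\ge i$ in $B$ past $m$.
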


\paragraph{Case II: $x_{i}^r < x_m^r$ and $x_{i}^l\leq x_m^l(\calC_L)$.} In this case,
we update $L$ by inserting $i$ right before $m$. Let $x=x_m^l(\calC_L)$.
We update $\calC_L$ by setting $l_{i}$ at $x$ and setting
$l_m$ at $x+|I_{i}|$. We let $L'$ denote the original list of $L$ before inserting $i$ and let
$\calC_{L'}$ denote the original $\calC_L$.
We update $\delta(\calC_L)$ by the following observation.
Note that $x_m^l(\calC_L)$ now refers to
the position of $l_m$ in the updated $\calC_L$.

\begin{observation}\label{obser:case2}
$\calC_L$ is a feasible configuration and
$\delta(\calC_L)=\max\{\delta(\calC_{L'}),x_m^l(\calC_L)-x_m^l\}$.
\end{observation}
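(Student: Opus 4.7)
My plan is to mirror the short structure of Observation~\ref{obser:case1}: first verify feasibility of $\calC_L$ by locally checking the two intervals whose positions change, and then compute $\delta(\calC_L)$ by comparing the displacements of every interval in $\calC_L$ with those in $\calC_{L'}$.

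For feasibility, let $x=x_m^l(\calC_{L'})$. All intervals other than $I_m$ and $I_i$ sit at their $\calC_{L'}$-positions, so their validity and their pairwise non-overlap with each other is inherited from $\calC_{L'}$. For $I_i$: it is valid because the case hypothesis gives $x_i^l \leq x = x_m^l(\calC_{L'})$; it does not overlap the interval preceding $m$ in $L'$ (call it $I_j$) because in $\calC_{L'}$ we already had $x_j^r(\calC_{L'}) \leq x = l_i$; and it does not overlap $I_m$ since we set $l_m$ to exactly $x + |I_i| = r_i$. For $I_m$: its new position $x + |I_i|$ is at least its old position $x$, which was already valid in $\calC_{L'}$, and since $m$ is the last element of $L$ there is no following interval to collide with. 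This gives feasibility.

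For the max-displacement, the only intervals whose displacements differ between $\calC_{L'}$ and $\calC_L$ are $I_m$ (newly shifted right by $|I_i|$) and $I_i$ (new). Hence
\[
\delta(\calC_L) \;=\; \max\{\delta(\calC_{L'}\setminus\{I_m\}),\ x-x_i^l,\ x+|I_i|-x_m^l\}.
\]
Since $|I_i|\geq 0$, the new displacement $x+|I_i|-x_m^l$ of $I_m$ dominates its old displacement $x-x_m^l$, so merging these two terms absorbs $\delta(\calC_{L'}\setminus\{I_m\})$ together with the old displacement of $I_m$ into $\delta(\calC_{L'})$. It remains to bound the $I_i$-term.

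The one step I expect to need a small observation is showing $x - x_i^l \leq x+|I_i|-x_m^l$, i.e.\ $x_m^l \leq x_i^r$. For this I would use that $m$ is an index of $\calI[1,i-1]$, so $m<i$, and the inputs are sorted by left endpoint, giving $x_m^l < x_i^l \leq x_i^r$. Thus the displacement of $I_i$ is always dominated by the new displacement of $I_m$, and we conclude
\[
\delta(\calC_L)=\max\{\delta(\calC_{L'}),\ x_m^l(\calC_L)-x_m^l\},
\]
as claimed.
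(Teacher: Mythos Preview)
Your proof is correct and follows essentially the same approach as the paper: verify feasibility by checking validity of $I_i$ (from the case hypothesis $x_i^l\le x$) and of $I_m$ (moved rightwards), and for the max-displacement observe that the displacement of $I_i$ is dominated by that of $I_m$ using the sorting assumption $x_m^l<x_i^l$ (since $m<i$). The paper phrases this last step slightly more directly---noting that $l_m$ is left of $l_i$ in the input but right of $l_i$ in $\calC_L$---but the content is identical.
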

\begin{proof}
Since $x_{i}^l\leq x$ and $l_{i}$ is at
$x$ in $\calC_L$, $I_{i}$ is valid in $\calC_L$.
Comparing with its position in $\calC_{L'}$, $I_m$ has been moved
rightwards; since $I_m$ is valid in $\calC_{L'}$,
$I_m$ is also valid in $\calC_L$. Note that no two intervals
overlap in $\calC_L$.  Therefore, $\calC_L$ is a feasible
configuration.

Comparing with $\calC_{L'}$, $\calC_L$ has one more interval $I_{i}$ and
$I_m$ has been moved rightwards in $\calC_L$. Therefore, $\delta(\calC_L)$
is equal to the maximum of the following three values:
$\delta(\calC_{L'})$, the displacement of $I_{i}$ in $\calC_L$, and
the displacement of $I_{m}$ in $\calC_L$. Observe that the
displacement of $I_{i}$ is smaller than that of  $I_{m}$. This is
because $l_{m}$ is to the left of $l_{i}$ in the input (since
$m<i$) while $l_m$ is to the right of $l_{i}$ in $\calC_L$. Thus, it holds that
$\delta(\calC_L)=\max\{\delta(\calC_{L'}),x_m^l(\calC_L)-x_m^l\}$.
\qed
\end{proof}

The proof of the following lemma is deferred to Section~\ref{sec:preproof}.
\begin{lemma}\label{lem:case2}
If $L'$ is a canonical list of $\calI[1,i-1]$, then $L$ is a canonical list of $\calI[1,i]$.
\end{lemma}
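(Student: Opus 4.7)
The strategy is an exchange argument. By hypothesis, $L'$ is canonical, so there is an optimal list $\hat{L}_{opt}$ of $\calI$ whose restriction to $\calI[1,i-1]$ equals $L'$; let $\hat{\calC}$ denote the corresponding optimal configuration obtained by left-possible placement, so $\delta(\hat{\calC}) = \delta_{opt}$. The goal is to produce an optimal list of $\calI$ whose restriction to $\calI[1,i]$ equals $L$, i.e., an optimal list in which $i$ is placed immediately before $m$. Before doing the exchange, I would record a monotonicity observation: because $\hat{L}_{opt}$ restricted to $\calI[1,i-1]$ equals $L'$, interleaving the additional intervals of $\calI[i,n]$ in $\hat{L}_{opt}$ can only push each interval of $\calI[1,i-1]$ at least as far right as it sits in $\calC_{L'}$. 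In particular $x_m^l(\hat{\calC}) \geq x_m^l(\calC_{L'})$, and combining this with the Case~II hypothesis $x_i^l \leq x_m^l(\calC_{L'})$ yields $x_m^l(\hat{\calC}) \geq x_i^l$.

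Next I would split the exchange according to the position of $i$ in $\hat{L}_{opt}$ relative to $m$. In the main case where $i$ lies after $m$, write $\hat{L}_{opt} = \alpha \, m \, \beta \, i \, \gamma$; since $m$ is the last element of $\calI[1,i-1]$ in $\hat{L}_{opt}$, every index in $\beta$ is strictly greater than $i$. I move $i$ to immediately precede $m$, yielding $\tilde{L}_{opt} = \alpha \, i \, m \, \beta \, \gamma$ with configuration $\tilde{\calC}$, and argue $\delta(\tilde{\calC}) \leq \delta_{opt}$. For $\beta$ empty the calculation is clean: letting $R_\alpha$ denote the right endpoint of $\alpha$'s last interval in $\hat{\calC}$, the monotonicity observation forces $R_\alpha \geq x_i^l$, so that $r_m(\tilde{\calC}) = R_\alpha + |I_i| + |I_m| = r_i(\hat{\calC})$ and the intervals of $\gamma$ are placed identically in $\tilde{\calC}$ and $\hat{\calC}$; moreover $d(i,\tilde{\calC}) < d(i,\hat{\calC})$, and a short arithmetic check based on the Case~II condition $x_i^r < x_m^r$ gives $d(m,\tilde{\calC}) - d(i,\hat{\calC}) = x_i^r - x_m^r < 0$, so both are bounded by $\delta_{opt}$. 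For $\beta$ non-empty, one iterates, swapping $i$ with its immediate predecessor (an index strictly greater than $i$) step by step until $i$ is adjacent to $m$. The symmetric case, in which $i$ lies before $m$ in $\hat{L}_{opt}$ so that $\hat{L}_{opt} = \alpha \, i \, \beta \, m \, \gamma$, proceeds by moving $i$ rightward to just before $m$, again one adjacent swap at a time.

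The main obstacle is the bookkeeping in the iteration step when $\beta$ is non-empty. Each adjacent swap must preserve the max-displacement, and verifying this forces a careful comparison of the two left-possible placements; the key tool throughout is the monotonicity observation, which rules out the sub-configurations in which $I_m$ would be pinned near its input position and thereby block the swap. Once $\tilde{L}_{opt}$ is shown to satisfy $\delta(\tilde{\calC}) \leq \delta_{opt}$, it is an optimal list of $\calI$ whose restriction to $\calI[1,i]$ is $L$, so $L$ is canonical, completing the proof.
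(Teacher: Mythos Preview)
Your exchange-argument instinct is right, but the specific target list you aim for does not work. In the main case you take $\hat L_{opt}=\alpha\,m\,\beta\,i\,\gamma$ and propose $\tilde L_{opt}=\alpha\,i\,m\,\beta\,\gamma$. Under left-possible placement, every $k\in\beta$ then sits $|I_i|$ further right than in $\hat\calC$, so $d(k,\tilde\calC)=d(k,\hat\calC)+|I_i|$, which can exceed $\delta_{opt}$. The monotonicity observation $x_m^l(\hat\calC)\ge x_i^l$ controls only the positions of $I_i$ and $I_m$; it says nothing about the displacements of the $\beta$-intervals, so it cannot close this gap. The iterated adjacent-swap version has the same defect: after swapping $i$ leftward past some $j\in\beta$ with $x_j^r<x_i^r$, one gets $d(j,\calC')=x_i^r(\calC)-x_j^r>d(i,\calC)$, and there is no interval to the right of $j$ against which to bound this.

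The paper's proof avoids this by \emph{not} pushing $\beta$ past $m$. It first applies the inversion lemma (Lemma~\ref{lem:inversion}) to send every $j\in\beta$ with $x_j^r\ge x_i^r$ to the right of $i$; what remains between $m$ and $i$ is a set $S_0$ of indices all satisfying $I_j\subset I_i$. Then a single block exchange of $m$ and $i$ produces the order $\ldots,i,S_0,m,\ldots$. Now each $j\in S_0$ lies to the \emph{left} of $m$, so $d(j)\le d(m)$, and $d(m)\le d(i,\hat\calC)\le\delta_{opt}$ follows from the Case~II hypothesis $x_m^r\ge x_i^r$. Since $S_0\subset\calI[i+1,n]$, restricting this list to $\calI[1,i]$ still gives $L$, which is all that canonicity requires. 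The analogous issue arises in your ``symmetric'' case when $\beta$ contains indices larger than $i$; the paper again handles it via Lemma~\ref{lem:inversion} followed by a block swap rather than adjacent swaps.
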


\paragraph{Case III: $x_{i}^r < x_m^r$ and $x_{i}^l> x_m^l(\calC_L)$.} In this case,
we first update $L$ by appending $i$ to the end of $L$ and update
$\calC_L$ by placing the left endpoint of $I_{i}$ at
$x_m^r(\calC_L)$. Let $L'$ be the original list $L$ before we insert
$i$ and let $\calC_{L'}$ be the original configuration of $\calC_L$.

Further, we create a new list $L^*$, which is the same as $L$
except that we switch the order of $i$ and $m$. Thus, $m$ is the last
index of $L^*$. Correspondingly, the configuration $\calC_{L^*}$ is
the same as $\calC_L$ except that $l_{i}$ is at $x_{i}^l$, i.e., its
position in the input, and $l_m$ is at $x_{i}^r$. We say that $L^*$ is the
{\em new list generated} by $L'$.
We do not put $L^*$ in the set $\calL$ at this moment (but $L$ is in
$\calL$).

\begin{observation}\label{obser:case3}
Both $\calC_L$ and $\calC_{L^*}$ are feasible;
$\delta(\calC_L)=\max\{\delta(\calC_{L'}),x_{i}^l(\calC_L)-x_{i}^l\}$
and
$\delta(\calC_{L^*})=\max\{\delta(\calC_{L'}),x_{m}^l(\calC_{L^*})-x_{m}^l\}$.
\end{observation}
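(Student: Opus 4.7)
The plan is to verify feasibility and the two max-displacement formulas separately for $\calC_L$ and $\calC_{L^*}$, exploiting that $\calC_{L'}$ is feasible by induction and that Case III imposes a strong ``nesting'' relationship between $I_i$ and $I_m$ in the input.

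First I would establish a key geometric inequality that is used repeatedly: $x_m^r(\calC_L) \geq x_i^r > x_i^l$. Since $m<i$, we have $x_m^l < x_i^l$; combined with the Case III hypothesis $x_i^r < x_m^r$, the input copy of $I_i$ sits strictly inside the input copy of $I_m$. Because $I_m$ is valid in $\calC_{L'}$, $x_m^l(\calC_L) \geq x_m^l$, so
\[
x_m^r(\calC_L) = x_m^l(\calC_L) + |I_m| \geq x_m^l + |I_m| = x_m^r > x_i^r > x_i^l.
\]

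For $\calC_L$: placing $l_i$ at $x_m^r(\calC_L)$ makes $I_i$ valid by the inequality above, and its left endpoint coincides with the right endpoint of $I_m$ in $\calC_L$, so $I_i$ does not overlap $I_m$; all other intervals retain their positions from the feasible configuration $\calC_{L'}$, hence $\calC_L$ is feasible. Since $I_i$ is the only newly added or moved interval, $\delta(\calC_L)=\max\{\delta(\calC_{L'}),\,x_i^l(\calC_L)-x_i^l\}$, exactly as in Observation~\ref{obser:case1}.

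For $\calC_{L^*}$: $I_i$ sits at its input coordinate $x_i^l$, so it is trivially valid. Its predecessor in $L^*$ is the predecessor of $m$ in $L'$, call it $I_k$; because $\calC_{L'}$ is feasible and $I_m$ immediately follows $I_k$ there, $x_k^r(\calC_{L'}) \leq x_m^l(\calC_{L'})=x_m^l(\calC_L) < x_i^l$ by the Case III hypothesis, so $I_k$ and $I_i$ do not overlap. Next, $I_m$ is placed with $l_m$ at $x_i^r$; since $x_m^l < x_i^l < x_i^r$, $I_m$ is valid, and $l_m = x_i^r = x_i^l(\calC_{L^*}) + |I_i|$, so $I_m$ abuts but does not overlap $I_i$. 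All intervals preceding $I_k$ keep their $\calC_{L'}$ positions, so feasibility is preserved. Since in $\calC_{L^*}$ only $I_m$ moves relative to $\calC_{L'}$ (rightwards) and $I_i$ has displacement $0$, the max-displacement is $\max\{\delta(\calC_{L'}),\,x_m^l(\calC_{L^*})-x_m^l\}$.

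The only genuinely subtle step is the initial geometric inequality $x_m^r(\calC_L)>x_i^l$; once this is in hand, feasibility of both configurations reduces to the same bookkeeping used in Observations~\ref{obser:case1} and~\ref{obser:case2}, and the displacement formulas drop out immediately from the fact that exactly one interval is newly inserted or moved in each case.
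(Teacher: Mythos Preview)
Your proof is correct and takes essentially the same approach as the paper, which simply says the two claims follow by the arguments of Observations~\ref{obser:case1} and~\ref{obser:case2} respectively and omits details. You have filled in precisely those details; the one ingredient you make explicit that the paper leaves implicit is the inequality $x_m^r(\calC_{L'})>x_i^l$ (needed because in Case~III we set $x_i^l(\calC_L)=x_m^r(\calC_{L'})$ rather than $\max\{x_m^r(\calC_{L'}),x_i^l\}$), and your derivation of it from the Case~III hypotheses is fine.
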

\begin{proof}
By a similar argument as in Observation~\ref{obser:case1}, $\calC_L$ is
feasible and
$\delta(\calC_L)=\max\{\delta(\calC_{L'}),x_{i}^l(\calC_L)-x_{i}^l\}$.
By a similar argument as in Observation~\ref{obser:case2}, $\calC_{L^*}$ is
feasible and
$\delta(\calC_{L^*})=\max\{\delta(\calC_{L'}),x_{m}^l(\calC_{L^*})-x_{m}^l\}$.
We omit the details.
\qed
\end{proof}

The proof of the following lemma is deferred to Section~\ref{sec:preproof}.
\begin{lemma}\label{lem:case3}
If $L'$ is a canonical list of $\calI[1,i-1]$, then one of $L$ and
$L^*$ is a canonical list of $\calI[1,i]$.
\end{lemma}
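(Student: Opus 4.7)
The plan is an exchange argument in the same spirit as the proofs of Lemmas~\ref{lem:case1} and~\ref{lem:case2}. Since $L'$ is canonical, I would fix an optimal list $L_{opt}$ of $\calI$ whose restriction to $\calI[1,i-1]$ equals $L'$. Write $L' = j_1, j_2, \ldots, j_{i-1}$ with $j_{i-1}=m$. In the restriction $\tilde L := L_{opt}|_{[1,i]}$, the index $i$ appears at some position $p\in\{1,\ldots,i\}$, so that $\tilde L = j_1, \ldots, j_{p-1}, i, j_p, \ldots, j_{i-1}$. If $p=i$ then $\tilde L = L$ and $L$ is canonical, while if $p=i-1$ then $\tilde L = L^*$ and $L^*$ is canonical. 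So I would assume $p\le i-2$ and show that $L_{opt}$ can be transformed into another optimal list, still consistent with $L'$, in which $i$ sits at position $p+1$ of the restricted list; iterating then produces $\tilde L = L^*$.

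The single-step transformation removes $i$ from its current location in $L_{opt}$ and reinserts it immediately after $j_p$. Since only indices larger than $i$ may lie between $i$ and $j_p$ in $L_{opt}$, the relative order of $\calI[1,i-1]$ is preserved and the new list stays consistent with $L'$. To show that this swap preserves optimality, I would compare the configurations $\calC$ and $\calC''$ produced by the left-possible strategy on $L_{opt}$ and on the transformed list, using Lemma~\ref{lem:left} to identify $\delta(\calC)=\delta_{opt}$. Intervals placed before $i$ in $L_{opt}$ have identical positions in the two configurations; intervals lying strictly between $i$ and $j_p$ in $L_{opt}$, as well as $I_{j_p}$ itself, can only shift leftward in $\calC''$ once $I_i$ is removed, so their displacements do not grow. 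After reinserting $i$ immediately after $j_p$, the left endpoint of $I_i$ in $\calC''$ lands at $\max\{x_i^l, y\}$, where $y$ is the position of $r_{j_p}$ in $\calC''$.

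The hard part will be controlling how the reinserted $I_i$ affects every subsequent interval in $\calC''$. The goal is to show that $r_i$ in $\calC''$ lies no farther right than $r_{j_p}$ did in $\calC$, from which the entire tail of $\calC''$ is dominated by $\calC$ and the max-displacement does not increase. This is where the Case III hypotheses must enter essentially: the inequality $x_i^r < x_m^r$ caps how far $I_i$'s right endpoint can ever be pushed, while $x_i^l > x_m^l(\calC_{L'})$, combined with invariants of the left-possible strategy on $L'$, controls the relationship between the old and new right endpoints. Handling the propagation correctly---especially when several indices larger than $i$ sit between $i$ and $j_p$ in $L_{opt}$ and are themselves displaced by the reinserted $I_i$---is the technically delicate part of the argument.
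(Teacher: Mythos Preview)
Your high-level plan---fix an optimal list $L_{opt}$ consistent with $L'$ and push $i$ rightward one position of $L'$ at a time until it sits just before $m$---is exactly the route the paper takes (it simply invokes the ``second case'' argument from the proof of Lemma~\ref{lem:case2}). The execution, however, diverges from the paper's in a way that leaves a genuine gap.

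Your single-step argument compares the left-possible configurations $\calC$ and $\calC''$ and rests on the claim that $r_i$ in $\calC''$ lies no farther right than $r_{j_p}$ did in $\calC$. This claim is not true in general. If some interval $I_{k_t}$ with $k_t>i$ that sits between $i$ and $j_p$ in $L_{opt}$ is at its input position in $\calC$ (a ``gap'' in the left-possible placement), then removing $I_i$ does not shift $I_{k_t}$ or $I_{j_p}$ left at all, so reinserting $I_i$ after $j_p$ pushes the whole tail right by $|I_i|$. You flag the propagation as ``technically delicate,'' but you have not supplied a substitute bound, and the Case~III inequalities $x_i^r<x_m^r$ and $x_i^l>x_m^l(\calC_{L'})$ do not rescue it---in fact the paper's argument never uses the latter inequality at all.

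The paper handles this by \emph{not} comparing left-possible configurations and by \emph{not} leaving the $k_t$'s unconstrained. Before swapping $i$ past $j_p$ it applies Lemma~\ref{lem:inversion} repeatedly to the inversions $(j_p,k_t)$, obtaining an optimal list $L_0$ in which every index $k$ strictly between $i$ and $j_p$ satisfies $x_k^r<x_{j_p}^r$. It then constructs the swapped configuration $\calC'$ explicitly (set $x_{j_p}^l(\calC')=x_i^l(\calC)$, shift the $k$'s by $|I_i|-|I_{j_p}|$, set $x_i^r(\calC')=x_{j_p}^r(\calC)$), which makes $r_i(\calC')=r_{j_p}(\calC)$ hold by construction. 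The displacement bounds then follow from the containment $x_k^r<x_{j_p}^r$ and from $m$ sitting to the right of $i$ in $\calC'$. The missing ingredient in your plan is precisely this preprocessing via Lemma~\ref{lem:inversion}.
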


After each list $L$ of $\calL$ is processed as above, let $\calL^*$
denote the set of all new generated lists in Case III. Recall that no list
of $\calL^*$ has been added into $\calL$ yet. Let $L^*_{min}$ be the list of
$\calL^*$ with the minimum value $\delta(\calC_{L^*_{min}})$.
The proof of the following lemma is deferred to Section~\ref{sec:preproof}.
\begin{lemma}\label{lem:prune}
If $\calL^*$ has a canonical list of $\calI[1,i]$, then $L^*_{min}$
is a canonical list of $\calI[1,i]$.
\end{lemma}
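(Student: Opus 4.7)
The plan is to use an exchange argument. Assume the hypothesis: some $L^*_1 \in \calL^*$ is a canonical list of $\calI[1,i]$, so there is an optimal list $L_{opt}$ of $\calI$ whose restriction to $\calI[1,i]$ coincides with $L^*_1$. Let $\calC_1$ be the optimal configuration produced from $L_{opt}$ by the left-possible strategy (Lemma~\ref{lem:left}). I will construct an optimal configuration $\calC'$ of $\calI$ whose restriction to $\calI[1,i]$ is consistent with $L^*_{min}$, proving that $L^*_{min}$ is canonical.

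As a preliminary step I would verify the chain $\delta(\calC_{L^*_{min}}) \le \delta(\calC_{L^*_1}) \le \delta_{opt}$. The first inequality is the very definition of $L^*_{min}$. For the second, the restriction of $\calC_1$ to $\calI[1,i]$ is a feasible configuration of $\calI[1,i]$ consistent with $L^*_1$, so by the same left-possible argument used in the proof of Lemma~\ref{lem:left}, every interval of $\calI[1,i]$ in $\calC_1$ lies at a position no smaller than in $\calC_{L^*_1}$, whence $\delta(\calC_{L^*_1}) \le \delta(\calC_1)=\delta_{opt}$.

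Next I would build $\calC'$ directly: set the positions of the intervals in $\calI[1,i]$ equal to their positions in $\calC_{L^*_{min}}$, and then place the intervals of $\calI[i+1,n]$ one by one in the order they appear in $L_{opt}$, each as far left as possible without overlapping or moving left of its input position. Feasibility of $\calC'$ is immediate by construction. The displacement of every interval of $\calI[1,i]$ in $\calC'$ equals its displacement in $\calC_{L^*_{min}}$, hence is at most $\delta(\calC_{L^*_{min}}) \le \delta_{opt}$.

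The main obstacle is bounding the displacement of each interval of $\calI[i+1,n]$ in $\calC'$ by $\delta_{opt}$. I would proceed by induction along the tail order, comparing the position of each tail interval in $\calC'$ to its position in $\calC_1$. Let $m_{min}$ and $m_1$ denote the last indices of $L^*_{min}$ and $L^*_1$ respectively. When the right-end of $\calI[1,i]$ in $\calC'$ does not exceed that in $\calC_1$---which occurs in particular whenever $|I_{m_{min}}| \le |I_{m_1}|$---the induction is monotone, every tail interval sits no farther right in $\calC'$ than in $\calC_1$, and its displacement is at most $\delta_{opt}$. The delicate case is when $|I_{m_{min}}| > |I_{m_1}|$, where a direct monotone comparison fails. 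Here the plan is to exploit the bound $x_i^r - x_{m_{min}}^l \le \delta(\calC_{L^*_{min}}) \le \delta_{opt}$ supplied by Observation~\ref{obser:case3}, together with the validity of $I_{m_1}$ in $\calC_1$, to show that any rightward push incurred by a tail interval in $\calC'$ relative to $\calC_1$ remains absorbed within $\delta_{opt}$. Once these estimates are combined one obtains $\delta(\calC') \le \delta_{opt}$, so $\calC'$ is optimal and $L^*_{min}$ is a canonical list of $\calI[1,i]$.
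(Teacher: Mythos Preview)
Your construction has a genuine gap, and it stems from a misreading of the structure of $\calL^*$. Every list in $\calL^*$ has the \emph{same} last index $m$: a list $L'\in\calL$ only enters Case~III when its last index $m$ satisfies $x_i^r<x_m^r$, and one checks (this is essentially the case analysis in the proof of Lemma~\ref{lem:endindex}) that at most one of the at most two possible last indices of $\calL$ can satisfy this together with $x_i^l>x_m^l(\calC_{L'})$. Hence $m_{min}=m_1=m$, the ``delicate case'' $|I_{m_{min}}|>|I_{m_1}|$ never arises, and moreover $x(\calC_{L^*_{min}})=x(\calC_{L^*_1})=x_i^r+|I_m|$. So the dichotomy you set up is the wrong one.

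The actual obstruction is that in $L_{opt}$ some indices $j\in\calI[i+1,n]$ may sit \emph{before} $m$, and your $\calC'$ forces every such $j$ to the right of $I_m$. This can push $j$ too far. Concretely, take three intervals with $m=1,\ i=2,\ j=3$ and input positions $I_1=[0,10]$, $I_2=[1,2]$, $I_3=[3,4]$. Then $\delta_{opt}=4$, realised by the order $2,3,1$ with $I_1$ shifted to $[4,14]$. Here $\calL^*=\{L^*\}$ with $L^*=(2,1)$, so $L^*_{min}=L^*_1$; your $\calC'$ places $I_2$ at $[1,2]$, $I_1$ at $[2,12]$, and then $I_3$ at $[12,13]$, giving $d(3,\calC')=9>\delta_{opt}$. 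So $\calC'$ is not optimal and the argument collapses. No inductive comparison with $\calC_1$ can repair this, because in $\calC_1$ the interval $I_j$ lies to the \emph{left} of $I_m$, whereas in $\calC'$ it lies to the right; the extra $|I_m|$ of displacement is real.

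The paper's proof avoids this by an exchange argument \emph{before} substituting $L^*_{min}$: starting from $L_{opt}$, it repeatedly moves each $j\in\calI[i+1,n]$ that precedes $i$ to the slot immediately after $i$ but still \emph{before} $m$. Because $x_m^l\le x_j^l$ and $I_j$ remains to the left of $I_m$, one gets $d(j,\cdot)\le d(m,\cdot)\le\delta_{opt}$, so optimality is preserved. After these exchanges the indices preceding $i$ are exactly $\calI[1,i-1]\setminus\{m\}$, and only then does the paper overwrite their positions with those from $\calC_{L^*_{min}}$ (leaving $I_i$ at $x_i^l$ and everything from $I_m$ onward untouched). The point is that the tail intervals of $\calI[i+1,n]$ are never moved past $I_m$; your construction does exactly that, and that is where it fails.
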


Due to Lemma~\ref{lem:prune}, among all lists of $\calL^*$, we only
need to keep $L_{min}^*$.  So we add $L^*_{min}$ to
$\calL$ and ignore all other lists of $\calL^*$. We call $L^*_{min}$ a
{\em new list} of $\calL$ produced by our algorithm for processing
$I_{i}$ and all other lists of $\calL$ are considered as the {\em old
lists}.

\paragraph{Remark.}
Lemma~\ref{lem:prune} is a key observation that helps avoid
maintaining an exponential number of lists.

This finishes our algorithm for processing the
interval $I_{i}$. Clearly, $\calL$ has at most one more new list.
After $I_n$ is processed, the list $L$ of $\calL$ with minimum
$\delta(\calC_L)$ is an optimal list.

According to our above description, the algorithm can be easily
implemented in $O(n^2)$ time and space. The proof of
Theorem~\ref{theo:pre} gives the details and also shows the correctness
of the algorithm based on Lemmas~\ref{lem:case1},
\ref{lem:case2}, \ref{lem:case3}, and \ref{lem:prune}.

\begin{theorem}\label{theo:pre}
An optimal solution for the one-direction problem can be found in
$O(n^2)$ time and space.
\end{theorem}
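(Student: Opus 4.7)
The plan is to combine an inductive correctness argument with a direct complexity analysis.

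For correctness, I would fix an optimal list $L_{opt}$ of $\calI$ and prove by induction on $i$ the invariant that, after processing $I_i$, the set $\calL$ contains a canonical list of $\calI[1,i]$ (i.e., a list consistent with $L_{opt}$ restricted to $\calI[1,i]$). The base case $i=1$ is immediate since $\calL=\{\{1\}\}$. For the inductive step, assume $\calL$ contains a canonical list $L'$ of $\calI[1,i-1]$ before processing $I_i$, and consider how $L'$ is handled. If the update falls into Case I or Case II, Lemma~\ref{lem:case1} or Lemma~\ref{lem:case2} respectively guarantees that the updated list $L$ is a canonical list of $\calI[1,i]$, and $L\in\calL$ afterwards. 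If the update falls into Case III, Lemma~\ref{lem:case3} says that one of $L$ or the newly generated $L^*$ is canonical. In the first subcase, $L$ remains in $\calL$; in the second, $L^*$ lies in $\calL^*$, so $\calL^*$ has a canonical list, and Lemma~\ref{lem:prune} then implies $L^*_{min}$ is canonical. Since $L^*_{min}$ is added to $\calL$, the invariant is preserved.

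After processing $I_n$, the invariant yields a canonical list of $\calI$ in $\calL$; that canonical list is an optimal list, so its associated configuration has max-displacement $\delta_{opt}$, while Observations~\ref{obser:case1}--\ref{obser:case3} ensure that every list $L\in\calL$ has $\calC_L$ feasible with $\delta(\calC_L)\geq\delta_{opt}$. Therefore, returning the list of $\calL$ minimizing $\delta(\calC_L)$ yields an optimal list, and then Lemma~\ref{lem:left} produces an optimal configuration from it in additional $O(n)$ time.

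For the complexity analysis, observe that processing each interval $I_i$ adds at most one new list to $\calL$ (namely $L^*_{min}$, and only in Case III for some $L'$), so $|\calL|\le i\le n$ throughout the execution. For each $L\in\calL$, updating $L$ and $\calC_L$ in any of the three cases touches only one or two interval positions (the newly inserted $I_i$ and possibly $I_m$), and the new value of $\delta(\calC_L)$ is computed in $O(1)$ by the formulas in Observations~\ref{obser:case1}--\ref{obser:case3}; the minimum $\delta(\calC_{L^*})$ over $\calL^*$ is then selected in $O(|\calL^*|)=O(i)$ time. So processing $I_i$ costs $O(i)$ time, summing to $\sum_{i=1}^{n} O(i)=O(n^2)$ overall, and explicitly storing each of the $O(n)$ lists together with their configurations needs $O(n)$ per list, hence $O(n^2)$ space in total.

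The main subtlety is ensuring that the inductive invariant really is preserved across Case III: one must be careful that when $L^*$ (rather than $L$) is the canonical extension, the pruning step of Lemma~\ref{lem:prune} does not discard it. The formulation of Lemma~\ref{lem:prune} was chosen precisely for this purpose, so once the case analysis is organized around which of $L$ or $L^*$ Lemma~\ref{lem:case3} delivers as canonical, the induction goes through cleanly. The remaining details (book-keeping the positions, maintaining $\delta(\calC_L)$, tie-breaking when selecting $L^*_{min}$) are routine.
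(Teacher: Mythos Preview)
Your proposal is essentially the same approach as the paper's: an inductive invariant that $\calL$ contains a canonical list, driven by Lemmas~\ref{lem:case1}--\ref{lem:prune}, followed by a per-step $O(n)$ complexity count.

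One correction worth flagging: you should not \emph{fix} an optimal list $L_{opt}$ up front and phrase the invariant as ``$\calL$ contains a list consistent with this particular $L_{opt}$ restricted to $\calI[1,i]$.'' The paper's notion of canonical list means ``consistent with \emph{some} optimal list,'' and the proofs of Lemmas~\ref{lem:case1}--\ref{lem:case3} explicitly transform one optimal list into a different one (via Lemma~\ref{lem:inversion}) when establishing that the updated $L$ is canonical. So the witnessing optimal list can change from step to step, and an induction tied to a single fixed $L_{opt}$ would not go through with the lemmas as stated. The fix is immediate: just drop the word ``fix'' and carry the existential quantifier through the induction; at $i=n$ any canonical list of $\calI[1,n]$ is itself an optimal list, so the conclusion is unaffected.

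On the complexity side, your $O(i)$-per-step bound is correct, but note (as the paper does) that one must \emph{not} materialize every generated $L^*$ in Case~III---only compute $\delta(\calC_{L^*})$ in $O(1)$ via Observation~\ref{obser:case3}, select $L^*_{\min}$, and then copy that single list in $O(i)$ time. Your wording already gestures at this, but it is the point where a careless implementation would blow up to $O(i^2)$ per step.
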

\begin{proof}
To implement the algorithm, we can use a linked list to represent
each list of $\calL$. Consider a general step for processing interval
$I_{i}$.

For any list $L\in \calL$, inserting
$i$ to $L$ can be easily done in $O(1)$ time for each of the
three cases. The configuration $\calC_L$ and the value $\delta(\calC_L)$
can also be updated in $O(1)$ time.
If $L$ generates a new list $L^*$, then we do not
explicitly construct $L^*$ but only compute the value $\delta(\calC_{L^*})$,
which can be done in $O(1)$ time by Observation~\ref{obser:case3}.
Once every list $L\in \calL$ has been processed, we find the list
$L^*_{min}\in \calL^*$. Then, we
explicitly construct $L^*$ and $\calC_{L^*}$, in
$O(n)$ time.

Hence, each general step for processing $I_{i}$ can be done in
$O(n)$ time since $\calL$ has at most $n$ lists. Thus, the total time
and space of the algorithm is $O(n^2)$.

For the correctness,
after a general step for processing $I_{i}$, Lemmas~\ref{lem:case1},
\ref{lem:case2}, \ref{lem:case3}, and \ref{lem:prune} together guarantee
that the set $\calL$ has at least one canonical list of $\calI[1,i]$. After $I_n$ is processed, since $\calC_L$ is essentially obtained by the left-possible placement strategy for each list $L\in \calL$, if
$L$ is the list of $\calL$ with the smallest $\delta(\calC_L)$, then $L$ is an optimal list and
$\calC_L$ is an optimal configuration by Lemma~\ref{lem:left}.
\qed
\end{proof}

\section{The Correctness of the Preliminary Algorithm}
\label{sec:preproof}

In this section, we establish the correctness of our preliminary
algorithm. Specifically, we will
prove Lemmas~\ref{lem:case1}, \ref{lem:case2}, \ref{lem:case3}, and
\ref{lem:prune}. The major analysis technique is the exchange argument,
which is quite standard for proving correctness of greedy
algorithms (e.g., see \cite{ref:KleinbergAl05Ch4}).


Let $L$ be a list of all indices of $\calI$.
For any two indices $j,k\in [1,n]$, let $L[j,k]$ denote the sub-list
of all indices of $L$ between $j$ and $k$ (including $j$ and $k$).

For any $1\leq j<k\leq n$, we say that $(j,k)$ is an {\em inversion}
of $L$ if $x_j^r\leq x_k^r$ and $k$ is before $j$
in $L$ ($k$ and $j$ are not necessarily consecutive in $L$; e.g., see
Fig.~\ref{fig:exchange} with $L=L_{opt}$).
For an inversion $(j,k)$, we
further introduce two sets of indices $L^1[j,k]$ and $L^2[j,k]$
as follows (e.g., see Fig.~\ref{fig:exchange} with $L=L_{opt}$). Let $L^1[j,k]$ consist of all indices
$i\in L[j,k]$ such that $i<k$ and $i\neq j$; let
$L^2[j,k]$ consist of all indices
$i\in L[j,k]$ such that $i\geq k$. Hence,
$L^1[j,k]$, $L^2[j,k]$, and $\{j\}$ form a partition of the indices of
$L[j,k]$.

\begin{figure}[t]
\begin{minipage}[t]{\textwidth}
\begin{center}
\includegraphics[height=1.0in]{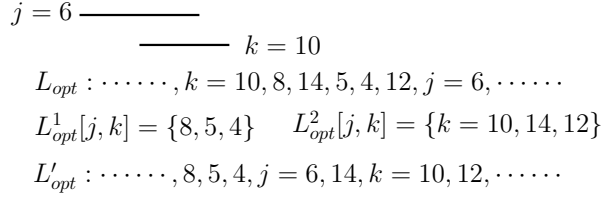}
\caption{\footnotesize Illustrating an inversion $(j,k)$ of $L_{opt}$
and an example for
Lemma~\ref{lem:inversion}: the intervals $j$ and $k$ are shown in
their input positions.}
\label{fig:exchange}
\end{center}
\end{minipage}
\vspace{-0.15in}
\end{figure}

We first give the following lemma, which will be extensively used later.

\begin{lemma}\label{lem:inversion}
Let $L_{opt}$ be an optimal list of all indices of $\calI$. If $L_{opt}$ has an
inversion $(j,k)$, then there exists another optimal list $L_{opt}'$ that is
the same as $L_{opt}$ except that the sublist $L_{opt}[j,k]$ is changed to the
following: all indices of $L_{opt}^1[j,k]$ are before $j$ and all indices of
$L_{opt}^2[j,k]$ are after $j$ (in particular, $k$ is after $j$, so $(j,k)$
is not an inversion any more in $L'_{opt}$), and further, the relative order of the
indices of $L^1_{opt}[j,k]$ in $L'_{opt}$ is the same as that in
$L_{opt}$ (but this may not be the case for $L^2_{opt}[j,k]$).
E.g., see Fig.~\ref{fig:exchange}.
%
%
\end{lemma}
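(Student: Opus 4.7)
The plan is to prove the lemma by an exchange argument on the configurations produced by the left-possible placement strategy. By Lemma~\ref{lem:left}, the configuration $\calC_{opt}$ obtained from $L_{opt}$ via that strategy has max-displacement $\delta_{opt}$. I will exhibit a list $L'_{opt}$ of the required form and show that the configuration $\calC'$ produced by running the same strategy on $L'_{opt}$ also has max-displacement at most $\delta_{opt}$, which makes $L'_{opt}$ optimal.

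To construct $L'_{opt}$ I leave all indices outside $L_{opt}[j,k]$ unchanged. Since $(j,k)$ is an inversion, the sublist $L_{opt}[j,k]$ begins at $k$ and ends at $j$; I replace it by the indices of $L^1_{opt}[j,k]$ in their original relative order (as required by the statement), followed by $j$, followed by the indices of $L^2_{opt}[j,k]$ in an appropriately chosen order (exhibiting this ordering is part of the proof).

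I then bound $\delta(\calC')$ by comparing it with $\calC_{opt}$ region by region. Indices preceding $L_{opt}[j,k]$ sit at identical positions in $\calC'$ and $\calC_{opt}$ because the relevant prefixes of $L_{opt}$ and $L'_{opt}$ agree there, so their displacements are unchanged. Each index in $L^1_{opt}[j,k]$ has input left endpoint strictly smaller than $x_k^l$ and is placed earlier in $L'_{opt}$ than in $L_{opt}$ (where it was preceded by $k$ and possibly by some elements of $L^2_{opt}[j,k]$); a short induction over the $L^1_{opt}[j,k]$ block then shows its position in $\calC'$ is no farther right than in $\calC_{opt}$, so its displacement cannot grow. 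The key remaining sub-claim is that the right endpoint of the rearranged block in $\calC'$ is at most $x_j^r(\calC_{opt})$. Once this holds, the displacement of $j$ in $\calC'$ is at most that of $j$ in $\calC_{opt}$, every index of $L^2_{opt}[j,k]$ sits with right endpoint at most $x_j^r(\calC_{opt})$ (bounding its displacement by $\delta_{opt}$ via the per-interval right-endpoint gap), and every index after $L_{opt}[j,k]$ inherits a position in $\calC'$ no farther right than in $\calC_{opt}$.

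The main obstacle is precisely this sub-claim on the block's right extent. The inversion condition $x_j^r \le x_k^r$ is essential: it guarantees that when $j$ replaces $k$'s former anchoring role immediately before the $L^2_{opt}[j,k]$ portion, no extra width is contributed on the right. Choosing the ordering on $L^2_{opt}[j,k]$ carefully, together with the fact that every index in $L^2_{opt}[j,k]$ has input left endpoint at least $x_k^l$, is used to prevent any new ``left-possible gaps'' from opening within the repacked block. A case analysis on where such gaps can arise in each of the two packings, plus these ingredients, completes the argument.
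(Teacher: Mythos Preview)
Your outline has a genuine gap at exactly the step you flag as ``the main obstacle,'' and a second gap you do not flag.

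First, the displacement bound you sketch for indices $t\in L_{opt}^2[j,k]$ does not go through uniformly. Your argument is: if the rearranged block ends by $x_j^r(\calC_{opt})$, then $d(t,\calC')=x_t^r(\calC')-x_t^r\le x_j^r(\calC_{opt})-x_t^r$, and since $d(j,\calC_{opt})\le\delta_{opt}$ we are done. But the last step needs $x_t^r\ge x_j^r$; otherwise $x_j^r(\calC_{opt})-x_t^r$ can exceed $\delta_{opt}$. Indices $t\in L_{opt}^2[j,k]$ with $x_t^r<x_j^r$ exist in general (the paper calls this set $S_1$), and for them your ``per-interval right-endpoint gap'' bound fails. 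To handle $S_1$ you need a different comparison: place such $t$ to the \emph{left} of $k$ and use $x_t^l\ge x_k^l$ to get $d(t,\calC')\le d(k,\calC')$, then bound $d(k,\calC')$ separately via $x_k^r\ge x_j^r$. This forces a specific ordering on $L_{opt}^2[j,k]$ (namely $S_1$, then $k$, then $S_2=\{t:x_t^r>x_j^r\}$), which is precisely what you leave as ``an appropriately chosen order.'' Without this partition the proof cannot close.

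Second, even granting the partition, the ``key sub-claim'' on the block's right extent under the left-possible strategy is genuinely delicate: gaps in $\calC_{opt}$ caused by large $x_t^l$ for $t\in S_2$ do not obviously survive the rearrangement in a controllable way, and your ``case analysis on where such gaps can arise'' is not an argument. The paper sidesteps this entirely: rather than comparing two left-possible configurations, it performs a sequence of \emph{adjacent} exchanges ($L_{opt}\to L_0\to L_1\to L_2\to L_3$), at each step exhibiting an explicit feasible configuration (not the left-possible one) obtained by locally swapping two neighboring intervals while keeping the span they jointly occupy fixed. Each single swap is easy to certify as optimal using only one of the inequalities $x_g^l\le x_k^r$, $x_h^l\ge x_j^l$, $x_j^r\le x_k^r$, etc. The $S_1$/$S_2$ split is what makes each adjacent swap fall into a case where the needed inequality is available. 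Your global-repacking approach may be salvageable, but as written it is missing both the correct ordering and the displacement argument for $S_1$.
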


Many proofs given later in the paper will utilize Lemma~\ref{lem:inversion}
as a basic technique for ``eliminating'' inversions
in optimal lists. Before giving the proof of Lemma~\ref{lem:inversion}, which is
somewhat technical, lengthy, and tedious, we first show that Lemma~\ref{lem:case1} can be easily
proved with the help of Lemma~\ref{lem:inversion}.

\subsection{Proof of Lemma~\ref{lem:case1}.}

Assume $L'$ is a canonical list of $\calI[1,i-1]$. Our goal is to prove
that $L$ is a canonical list of $\calI[1,i]$.

Since $L'$ is a canonical list, by the definition of a canonical list, there exists an optimal configuration
$\calC$ in which the order of the intervals of $\calI[1,i-1]$ is the same
as that in $L'$. Let $L_{opt}$ be the list of indices of the intervals
of $\calI$ in $\calC$.  If ${i}$ is after $m$ in $L_{opt}$, then
$L$  is consistent with $L_{opt}$ and thus is a canonical list of $\calI[1,i]$.
In the following, we assume $i$ is before $m$ in $L_{opt}$.

Since $m<i$, $x_m^r\leq x_i^r$, and $i$ is before $m$ in $L_{opt}$, $(m,i)$ is an
inversion in $L_{opt}$. Let $L'_{opt}$ be another optimal list
obtained by applying Lemma~\ref{lem:inversion} on $(m,i)$. Refer to Fig.~\ref{fig:lemcase1}.
We claim that $L$ is consistent
with $L'_{opt}$, which will prove that $L$ is a canonical list. We prove
the claim below.

\begin{figure}[h]
\begin{minipage}[t]{\textwidth}
\begin{center}
\includegraphics[height=0.7in]{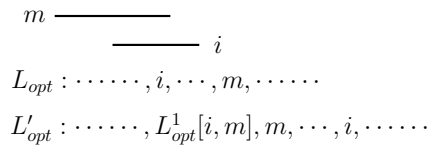}
\caption{\footnotesize Illustrating the proof of Lemma~\ref{lem:case1}.
The intervals $m$ and $i$ are shown in their input positions.}
\label{fig:lemcase1}
\end{center}
\end{minipage}
\vspace{-0.15in}
\end{figure}

Indeed, note that $L'$ is consistent with $L_{opt}$. Comparing with $L_{opt}$, by Lemma~\ref{lem:inversion}, only the
indices of the sublist $L_{opt}[m,i]$ have their relative order
changed in $L_{opt}'$. Since all indices of $L'$ are smaller than
$i$, by definition, all indices of $L'$ that are in $L_{opt}[m,i]$
are contained in $L_{opt}^1[m,i]$. By Lemma~\ref{lem:inversion}, the relative order of the indices of
$L_{opt}^1[m,i]$ in $L'_{opt}$ is the same as that in $L_{opt}$, and
further, all indices of $L_{opt}^1[m,i]$ are still before $m$ in
$L'_{opt}$. This implies that the relative order of the indices of
$L'$ does not change from $L_{opt}$ to $L'_{opt}$. Hence, $L'$ is
consistent with $L'_{opt}$. On the other hand, by Lemma~\ref{lem:inversion}, $i$
is after $m$. Thus, $L$ is consistent with $L'_{opt}$.
This proves the claim and thus proves Lemma~\ref{lem:case1}.

\subsection{Proof of Lemma~\ref{lem:inversion}}

In this section, we give the proof of Lemma~\ref{lem:inversion}.

We partition the set $L_{opt}^2[j,k]\setminus\{k\}$ into two sets $S_1$ and $S_2$,
defined as follows (e.g., see Fig.~\ref{fig:partition}).
Let $S_1$ consists of all indices $t$ of
$L_{opt}^2[j,k]\setminus\{k\}$ such that $x_t^r\leq x_j^r$ (i.e., $r_t$ is to the left of $r_j$
in the input). Let $S_2$ consists of all indices of $L_{opt}^2[j,k]\setminus\{k\}$ that are not in $S_1$.
Note that $L_{opt}[j,k]=L_{opt}^1[j,k]\cup S_1\cup S_2\cup \{j,k\}$.
To simplify the notation, let $S=L_{opt}[j,k]$ and $S_0=L_{opt}^1[j,k]$ (e.g., see Fig.~\ref{fig:partition}).

\begin{figure}[t]
\begin{minipage}[t]{\textwidth}
\begin{center}
\includegraphics[height=1.2in]{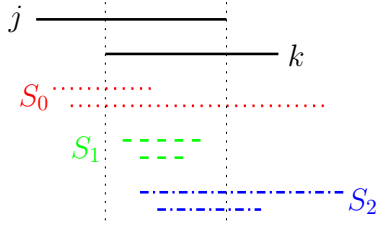}
\caption{\footnotesize Illustrating the intervals of $L_{opt}[j,k]$ in their input positions.
The two (red) dotted intervals are in $S_0=L_{opt}^1[j,k]$; the two (green) dashed intervals are in $S_1$;
the two (blue) dashed-dotted intervals are in $S_2$.}
\label{fig:partition}
\end{center}
\end{minipage}
\vspace{-0.15in}
\end{figure}

We only consider the general case where none of $S_0$, $S_1$, and
$S_2$ is empty since other cases can be analyzed by similar but
simpler techniques.

In the following, from $L_{opt}$, we will subsequently construct a sequence of
optimal lists $L_0,L_1,L_2,L_3$, such that eventually $L_3$ is the list $L'_{opt}$
specified in the statement of Lemma~\ref{lem:inversion} (e.g., see
Fig.~\ref{fig:lists}).

\subsubsection{The List $L_0$}
\label{sec:listl0}

For any adjacent indices $h$ and $g$ of $L_{opt}[j,k]\setminus\{j,k\}$
such that $h$ is before $g$ in $L_{opt}$,
we say that $(h,g)$ is an {\em exchangeable pair} if one of the three cases happen: $g\in S_0$ and $h\in S_1$; $g\in S_1$ and $h\in S_2$; $g\in S_0$ and $h\in S_2$.

In the following, we will perform certain ``exchange operations'' to
eliminate all exchangeable pairs of $L_{opt}$, after which we will
obtain another optimal list $L_0$ in which for any $i_0\in S_0$,
$i_1\in S_1$, $i_2\in S_2$, $i_0$ is before $i_1$ and $i_2$ is after
$i_1$, and all other indices of $L_0$ have the same positions as in $L_{opt}$ (e.g., see Fig.~\ref{fig:lists}).

\begin{figure}[t]
\begin{minipage}[t]{\textwidth}
\begin{center}
\includegraphics[height=0.9in]{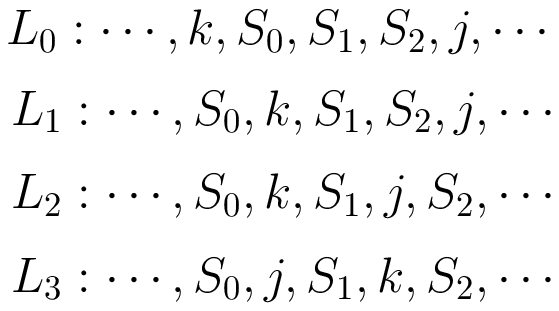}
\caption{\footnotesize Illustrating the relative order of
$k,j,S_0,S_1,S_2$ in the four lists $L_0,L_1,L_2.L_3$. }
\label{fig:lists}
\end{center}
\end{minipage}
\vspace{-0.15in}
\end{figure}

Consider any exchangeable pair $(h,g)$ of $L_{opt}$. 
Let $L'$ be another list that is the same as $L_{opt}$
except that $h$ and $g$ exchange their order.
We call this an {\em exchange operation}. In the following, we show that $L'$ is an optimal list.

Since $L_{opt}$ is an optimal list, there is an optimal configuration $\calC$ in
which the order of the intervals is the same as $L_{opt}$.
Consider the configuration $\calC'$ that is the same as $\calC$ except that we
exchange the order of $h$ and $g$ in the following way (e.g., see Fig~\ref{fig:exchangeconf}):
$x_{g}^l(\calC')=x_h^l(\calC)$ and $x_{h}^r(\calC')=x_g^r(\calC)$,
i.e., the left endpoint $l_g$ of $I_g$ in $\calC'$ is at the same position as
$l_h$ in $\calC$ and the right end point $r_h$ of $I_h$ in $\calC'$ is at the same
position as $r_g$ in $\calC$.
Clearly, the order of intervals in $\calC'$ is the same as that in $L'$.
In the following, we show that $\calC'$ is an optimal configuration, which will prove that $L'$
is an optimal list.

\begin{figure}[h]
\begin{minipage}[t]{\textwidth}
\begin{center}
\includegraphics[height=0.5in]{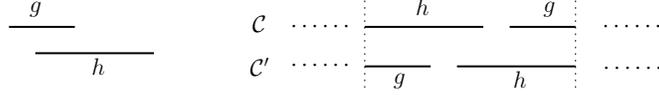}
\caption{\footnotesize Left: Illustrating the intervals $g$ and $h$ at their input positions.
Right: Illustrating the two intervals $h$ and $g$ in the
configurations $\calC$ and $\calC'$ (note that $h$ and $g$ do not have to be connected). }
\label{fig:exchangeconf}
\end{center}
\end{minipage}
\vspace{-0.15in}
\end{figure}

We first show that $\calC'$ is feasible. Recall that intervals $h$ and $g$ are adjacent
in $L_{opt}$ and also in $L'$. By our way of setting $I_g$ and $I_h$ in $\calC'$,
The segments of $\ell$ ``spanned'' by $I_h$ and $I_g$ in both
$\calC$ and $\calC'$ are exactly the same (e.g., the segments between the two vertical
dotted lines in Fig.~\ref{fig:exchangeconf}). Since no two intervals
of $\calI$ overlap in $\calC$, no two intervals overlap in $\calC'$ as well.

Next, we show that every interval of $\calI$ is valid in $\calC'$. To this end, it is
sufficient to show that $I_h$ and $I_g$ are valid in $\calC'$ since other intervals do
not change positions from $\calC$ to $\calC'$. For $I_h$, comparing with its position in $\calC$,
$I_h$ has been moved rightwards in $\calC'$, and thus $I_h$ is valid in $\calC'$.
For $I_g$, since $(h,g)$ is an exchangeable pair, $g$ is either in $S_0$ or in $S_1$. In either
case, $x_g^l\leq x_k^r$. On the other hand, $I_k$ is to the left of $I_g$ in $\calC'$, which
implies that $x_k^r(\calC')\leq x_g^l(\calC')$. Since $I_k$ does not change position from
$\calC$ to $\calC'$ and $I_k$ is valid in $\calC$, we have $x_k^r\leq x_k^r(\calC)=x_k^r(\calC')$.
Combining the above discussion, we have $x_g^l\leq x_k^r\leq x_k^r(\calC)=x_k^r(\calC')\leq x_g^l(\calC')$.
Thus, $I_g$ is valid in $\calC'$.
This proves that $\calC'$ is a feasible configuration.

We proceed to show that $\calC'$ is an optimal configuration by proving
that the max-displacement of $\calC'$ is no more than the
max-displacement of $\calC$, i.e., $\delta(\calC')\leq \delta(\calC)$.
Note that $\delta(\calC)=\delta_{opt}$ since $\calC$ is an optimal configuration.
Comparing with $\calC$, $I_g$ has been moved leftwards and $I_h$ has
been moved rightwards in $\calC'$. Therefore, to prove
$\delta(\calC')\leq \delta_{opt}$, it suffices to show that the
displacement of $I_h$ in $\calC'$, i.e., $d(h,\calC')$, is at most
$\delta_{opt}$.
Since $(h,g)$ is an exchangeable pair, $h$ is either in $S_1$ or in $S_2$. In either
case, $x_j^l\leq x_h^l$. On the other hand, $I_j$ is to the right of $I_h$ in $\calC'$, which
implies that $x_h^l(\calC')\leq x^l_j(\calC')$. Consequently, we have
$d(h,\calC')=x_h^l(\calC')-x_h^l \leq x_j^l(\calC')-x_j^l=d(j,\calC')$. Since $I_j$ does
not change position from $\calC$ to $\calC'$, $d(h,\calC')\leq d(j,\calC')
=d(j,\calC)\leq \delta_{opt}$.
This proves that $\calC'$ is an optimal configuration and $L'$ is an optimal list.

If $L'$ still has an exchangeable pair, then we keep applying the above
exchange operations until we obtain an optimal list $L_0$ that does
not have any exchangeable pairs. Hence, $L_0$ has the following
property:
for any $i_t\in S_t$ for $t=0,1,2$, $i_0$ is before $i_1$
and $i_2$ is after $i_1$, and all other indices of $L_0$ have the same
positions as in $L_{opt}$. Further, notice that our exchange operation
never changes the relative order of any two indices in $S_t$ for each
$0\leq t\leq 2$. In particular, the relative order of the indices of $S_0$
in $L_{opt}$ is the same as that in $L_0$.

\subsubsection{The List $L_1$}
\label{sec:listl1}

Let $L_1$ be another list that is the same
as $L_0$ except that $k$ is between the indices of $S_0$ and the indices of $S_1$ (e.g., see Fig.~\ref{fig:lists}).
In the following, we show that $L_1$ is also an optimal list.
This can be done by keeping performing exchange operations between
$k$ and its right neighbor in $S_0$ until all
indices of $S_0$ are to the left of $k$. The details are given below.

Let $g$ be the right neighboring index of $k$ in $L_0$ and $g$ is
in $S_0$. Let $L'$ be the list that is the same as $L_0$ except that
we exchange the order of $k$ and $g$. In the following, we show
that $L'$ is an optimal list.

Since $L_0$ is an optimal list, there is an optimal configuration
$\calC$ in which the order of the indices of the intervals is the same
as $L_0$. Consider the configuration $\calC'$ that is the same as
$\calC$ except that we exchange the order of $k$ and $g$ in the
following way: $x_g^l(\calC')=x_k^l(\calC)$ and
$x_k^r(\calC')=x_g^r(\calC)$ (e.g., see Fig.~\ref{fig:listl1}; similar to that in
Section~\ref{sec:listl0}).
In the following, we show that $\calC'$ is an optimal solution, which
will prove that $L'$ is an optimal list.

\begin{figure}[h]
\begin{minipage}[t]{\textwidth}
\begin{center}
\includegraphics[height=0.5in]{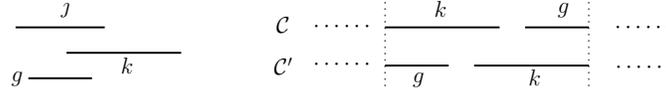}
\caption{\footnotesize Left: Illustrating the intervals $j$, $k$, and
$g$ at their input positions.
Right: Illustrating the two intervals $k$ and $g$ in
the configurations $\calC$ and $\calC'$. }
\label{fig:listl1}
\end{center}
\end{minipage}
\vspace{-0.15in}
\end{figure}

We first show that $\calC'$ is feasible. By the similar argument as
in Section~\ref{sec:listl0}, no two intervals overlap in $\calC'$. Next we show that every
interval is valid in $\calC$. It is sufficient to show that both $I_k$
and $I_g$ are valid. For $I_k$, comparing with its position in $\calC$, $I_k$ has been
moved rightwards in $\calC'$ and thus $I_k$ is valid in $\calC'$. For
$I_g$, since $g\in S_0$, by the definition of $S_0$, $x_g^l\leq x_k^l$
(e.g., see the left side of Fig.~\ref{fig:listl1}). Since
$x_k^l\leq x_k^l(\calC)= x_g^l(\calC')$, we obtain that $x_g^l\leq
x_g^l(\calC')$ and $I_g$ is valid in $\calC'$.

We proceed to show that $\calC'$ is an optimal configuration by proving
that $\delta(\calC')\leq \delta(\calC)= \delta_{opt}$. Comparing with $\calC$, $I_g$
has been moved leftwards and $I_k$ has been moved rightwards in
$\calC'$. Therefore, to prove $\delta(\calC')\leq \delta_{opt}$, it
suffices to show that $d(k,\calC')\leq \delta_{opt}$. Recall that
$l_j$ is to the left of $l_k$ in the input. Note that $k$ is to the
left of $j$ in $L'$. Hence, $l_k$ is to the left of $l_j$ in $\calC'$. Thus,
$d(k,\calC')\leq d(j,\calC')$. Note that $d(j,\calC')=d(j,\calC)$ since
the position of $I_j$ does not change from $\calC$ to $\calC'$.
Therefore, we obtain $d(k,\calC')\leq d(j,\calC)\leq \delta_{opt}$.
This proves that $\calC'$ is an optimal configuration and $L'$ is an
optimal list.

If the right neighbor of $k$ in $L'$ is still in $S_0$,
then we keep performing the above exchange until all indices of $S_0$
are to the left of $k$, at which moment we obtain the list $L_1$.
Thus, $L_1$ is an optimal list.

\subsubsection{The List $L_2$}
Let $L_2$ be another list that is the same
as $L_1$ except that $j$ is between the indices of $S_1$ and the
indices of $S_2$ (e.g., see Fig.~\ref{fig:lists}).
This can be done by keeping performing exchange
operations between $j$ and its left neighbor in $S_2$ until all
indices of $S_2$ are to the right of $j$, which is symmetric to that
in Section~\ref{sec:listl1}.
The details are given below.

Let $h$ be the left neighbor of $j$ in $L_1$ and $h$ is in $S_2$.
Let $L'$ be the list that is the same as $L_1$ except that we exchange
the order of $h$ and $j$. In the following, we show that $L'$ is an
optimal list.

Since $L_1$ is an optimal list, there is an optimal configuration
$\calC$ in which the order of the indices of the intervals is the same
as $L_1$. Consider the configuration $\calC'$ that is the same as $\calC$ except
that we exchange the order of $j$ and $h$ in the following way:
$x_j^l(\calC')=x_h^l(\calC)$ and $x_h^r(\calC')=x_j^r(\calC)$ (e.g.,
see Fig.~\ref{fig:listl2}).  In the following, we show that
$\calC'$ is an optimal solution, which will prove that $L'$ is an
optimal list.

\begin{figure}[h]
\begin{minipage}[t]{\textwidth}
\begin{center}
\includegraphics[height=0.5in]{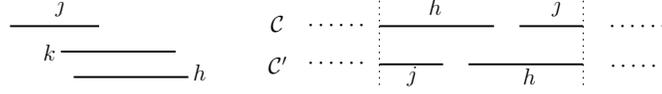}
\caption{\footnotesize Left: Illustrating the intervals $j$, $k$, and
$h$ at their input positions.
Right: Illustrating the two intervals $h$ and $j$ in
the configurations $\calC$ and $\calC'$. }
\label{fig:listl2}
\end{center}
\end{minipage}
\vspace{-0.15in}
\end{figure}

We first show that $\calC'$ is feasible. By the similar argument as
before, no two intervals overlap in $\calC'$. Next we show that every
interval is valid in $\calC'$. It is sufficient to show that both $I_j$
and $I_h$ are valid. For $I_h$, comparing with its position in $\calC$, $I_h$ has been
moved rightwards in $\calC'$ and thus $I_h$ is valid in $\calC'$. For
$I_j$, since $h\in S_2$, by the definition of $S_2$, $x_j^l\leq
x_h^l$. Since $x_h^l\leq x_h^l(\calC)= x_j^l(\calC')$, we obtain that $x_j^l\leq
x_j^l(\calC')$ and $I_j$ is valid in $\calC'$.

We proceed to show that $\calC'$ is an optimal configuration by proving
that $\delta(\calC')\leq \delta(\calC)= \delta_{opt}$. Comparing with $\calC$, $I_j$
has been moved leftwards and $I_h$ has been moved rightwards in
$\calC'$. Therefore, to prove $\delta(\calC')\leq \delta_{opt}$, it
suffices to show that $d(h,\calC')\leq \delta_{opt}$.
Since $h$ is in $S_2$, $x_j^r\leq x_h^r$. Since $x_h^r(\calC')=
x_j^r(\calC)$, we deduce $d(h,\calC')=x_h^r(\calC')-x_h^r\leq
x_j^r(\calC)-x_j^r=d(j,\calC)\leq \delta_{opt}$.
This proves that $\calC'$ is an optimal configuration and $L'$ is an
optimal list.

If the left neighbor of $j$ in $L'$ is still in $S_2$,
then we keep performing the above exchange until all indices of $S_2$
are to the right of $j$, at which moment we obtain the list $L_2$.
Thus, $L_2$ is an optimal list.

\subsubsection{The List $L_3$}
\label{sec:list3}
Let $L_3$ be the list that is the same as $L_2$ except that we exchange the order of $k$ and $j$, i.e.,
in $L_3$, the indices of $S_1$ are all after $j$ and before $k$ (e.g.,
see Fig.~\ref{fig:lists}). In the following, we prove that $L_3$ is an optimal list.

Since $L_2$ is an optimal list, there is an optimal configuration
$\calC$ in which the order of the indices of intervals is the same as
$L_2$. Consider the configuration $\calC'$ that is the same as $\calC$
except the following (e.g., see Fig.~\ref{fig:listl3}):
First, we set $x_j^l(\calC')=x_k^l(\calC)$;
second, we shift each interval of $S_1$ leftwards by distance
$|I_k|-|I_j|$ (if this value is negative, we actually shift
rightwards by its absolute value); third, we set
$x_k^r(\calC')=x_j^r(\calC)$ (i.e., $r_k$ is at the same position as
$r_j$ in $\calC$). Clearly, the interval order of $\calC'$ is the same
as $L_3$. In the following, we show that $\calC'$ is an optimal
configuration, which will prove that $L_3$ is an optimal list.

\begin{figure}[h]
\begin{minipage}[t]{\textwidth}
\begin{center}
\includegraphics[height=0.5in]{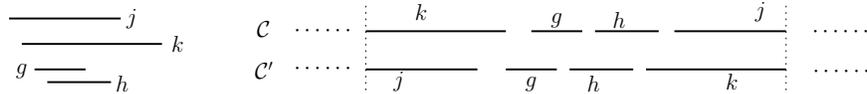}
\caption{\footnotesize Left: Illustrating the intervals $j$, $k$, $g$ and
$h$ at their input positions, where $S_1=\{g,h\}$.
Right: Illustrating the intervals of $S_1\cup \{j,k\}$ in
the configurations $\calC$ and $\calC'$. }
\label{fig:listl3}
\end{center}
\end{minipage}
\vspace{-0.15in}
\end{figure}

We first show that $\calC'$ is feasible.
By our way of setting positions of intervals in $S_1\cup \{j,k\}$,
One can easily verify that no two intervals of $\calC'$ overlap.
Next we show that  every interval is valid in $\calC'$.
It is sufficient to show that all intervals in
$S_1\cup\{j,k\}$ are valid. Comparing with $\calC$, $I_k$ has been moved
rightwards in $\calC'$. Thus, $I_k$ is valid in $\calC'$. Recall that
$x^l_j\leq x^l_k$ and $x_j^l(\calC')=x_k^l(\calC)$. Since
$x^l_k\leq x_k^l(\calC)$ (because $I_k$ is valid in $\calC$), we
obtain that $x^l_j\leq x^l_j(\calC')$ and $I_j$ is valid in $\calC'$.
Consider any index $t\in S_1$. By the definition of $S_1$, $x_t^l\leq
x_j^r$. 
Since $j$
is to the left of $t$ in $\calC'$, we have $x_j^r(\calC')\leq
x_t^l(\calC')$. Since $x_j^r\leq x_j^r(\calC')$ (because $I_j$ is
valid in $\calC'$), we obtain that $x_t^l\leq x_j^r \leq
x_j^r(\calC')\leq  x_t^l(\calC')$ and thus $I_t$
is valid in $\calC'$. This proves that $\calC'$ is feasible.

We proceed to show that $\calC'$ is an optimal configuration by
proving that $\delta(\calC')\leq\delta(\calC)=\delta_{opt}$. It is sufficient to
show that for any $t\in S_1\cup\{j,k\}$, $d(t,\calC')\leq
\delta_{opt}$. Comparing with $\calC$, $I_j$ has been moved
leftwards in $\calC'$, and thus, $d(j,\calC')\leq d(j,\calC)\leq
\delta_{opt}$. Recall that $x^r_j\leq x^r_k$ and
$x_k^r(\calC')=x_j^r(\calC)$. We can deduce
$d(k,\calC')=x_k^r(\calC')-x_k^r\leq x_j^r(\calC)-x_j^r\leq
d(j,\calC)\leq \delta_{opt}$. Consider any $t\in S_1$. By the
definition of $S_1$, $x_t^l\geq x_k^l$. On the other hand, since $t$
is to the left of $k$ in $\calC'$, $x_t^l(\calC')\leq x_k^l(\calC')$.
Therefore, we obtain that $d(t,\calC')=x_t^l(\calC')-x_t^l\leq
x_k^l(\calC')-x_k^l=d(k,\calC')$. We have proved above that
$d(k,\calC')\leq \delta_{opt}$, and thus $d(t,\calC')\leq
\delta_{opt}$.
This proves that $\calC'$ is an optimal configuration and $L_3$ is an
optimal list.

Notice that $L_3$ is the list $L'_{opt}$ specified in the lemma
statement. Indeed, in
all above lists from $L_{opt}$ to $L_3$, the relative order of the
indices of $S_0$ (which is $L_{opt}^1[j,k]$) never changes.
This proves Lemma~\ref{lem:inversion}.

\subsection{Proof of Lemma~\ref{lem:case2}}

In this section, we prove Lemma~\ref{lem:case2}.
Assume $L'$ is a canonical list of $\calI[1,i-1]$. Our goal is to prove
that $L$ is also a canonical list of $\calI[1,i]$.

Since $L'$ is a canonical list, there exists an optimal configuration
$\calC$ in which the order the intervals of $\calI[1,i-1]$ is the same
as that in $L'$. Let $L_{opt}$ be the list of indices of the intervals
of $\calI$ in $\calC$.
If, in $L_{opt}$, ${i}$ is before $m$ and after every index of
$\calI[1,i-1]\setminus\{m\}$, then
$L$ is consistent with $L_{opt}$ and thus is a canonical list of
$\calI[1,i]$, so we are done with the proof.

In the following, we
assume $L$ is not consistent with $L_{opt}$. There are two cases. In
the first case, $i$ is after $m$ in $L_{opt}$. In the second case,
$i$ is before $j$ in $L_{opt}$ for some $j\in \calI[1,i-1]\setminus\{m\}$. We
analyze the two cases below. In each case, by performing certain
exchange operations and using Lemma~\ref{lem:inversion}, we will find an optimal
list of all intervals of $\calI$ such that $L$ is consistent with the
list (this will prove that $L$ is an canonical list of
$\calI[1,i]$).

\subsubsection{The First Case}
\label{sec:firstcase}
Assume $i$ is after $m$ in $L_{opt}$. Let $S$ denote the set of
indices strictly between $m$ and $i$ in $L_{opt}$ (so neither $m$ nor
$i$ is in $S$). Since all indices of $\calI[1,i-1]$ are before $m$ in
$L_{opt}$, it holds that $j>i$ for each
index $j\in S$. Let $S'$ be the set of indices $j$ of $S$ such that
$x_j^r\geq x_{i}^r$. Note that for each $j\in S'$, the pair $(i,j)$ is an
inversion.  We consider the general case where neither $S$ nor $S'$ is
empty since the analysis for other cases is similar but easier.

Let $j$ be the rightmost index of $S'$. Again, $(i,j)$ is an
inversion. By Lemma~\ref{lem:inversion}, we can obtain another optimal
list $L'_{opt}$ such that $j$ is after $i$ and positions of the
indices other than those in $S$ are the same as before in $L_{opt}$. Further, the indices
strictly between $m$ and $i$ in $L'_{opt}$ are all in $S$. If
there is an index $j$ between $m$ and $i$ in $L'_{opt}$ such that
$(i,j)$ is an inversion, then we apply Lemma~\ref{lem:inversion}
again. We do this until we obtain an optimal list $L_0$ in which for
any index $j$ strictly between $m$ and $i$, $(i,j)$ is  not an
inversion, and thus $x_j^r<x_{i}^r$ (this further implies that $I_j$ is contained in $I_{i}$ in the input as $i<j$).
Let $S_0$ denote the set of indices strictly between $m$ and $i$ in $L_0$.

Consider the list $L_1$ that is the same as $L_0$ except that we
exchange the positions of $m$ and $i$, i.e., the indices of $S_0$
are now after $i$ and before $m$. In the following, we prove that
$L_1$ is an optimal list. Note that $L$ is consistent with $L_1$,
and thus once we prove that $L_1$ is an optimal list, we also prove
that $L$ is a canonical list of $\calI[1,i]$.
The technique for proving that $L_1$ is an optimal list is
similar to that in Section~\ref{sec:list3}. The details are given below.

Since $L_0$ is an optimal list, there is an optimal configuration
$\calC$ in which the order of the indices of intervals is the same as
$L_0$. Consider the configuration $\calC'$ that is the same as $\calC$
except the following (e.g., see Fig.~\ref{fig:firstcase}):
First, we set $x_{i}^l(\calC')=x_m^l(\calC)$;
second, we shift each interval of $S_0$ leftwards by distance
$|I_m|-|I_{i}|$ (again, if this value is negative, we actually shift
rightwards by its absolute value); third, we set $x_m^r(\calC')=x_{i}^r(\calC)$.
Clearly, the interval order in $\calC'$ is the same as $L_1$.
In the following, we show that $\calC'$ is an optimal
configuration, which will prove that $L_1$ is an optimal list.

\begin{figure}[h]
\begin{minipage}[t]{\textwidth}
\begin{center}
\includegraphics[height=0.5in]{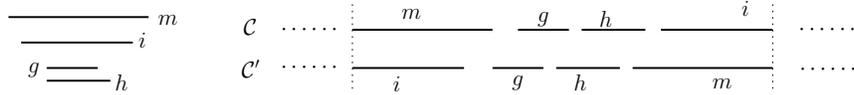}
\caption{\footnotesize Left: Illustrating the intervals $j$, $k$, $g$ and
$h$ at their input positions, where $S_0=\{g,h\}$.
Right: Illustrating the intervals of $S_0\cup \{m,i\}$ in
the configurations $\calC$ and $\calC'$. }
\label{fig:firstcase}
\end{center}
\end{minipage}
\vspace{-0.15in}
\end{figure}

We first show that $\calC'$ is feasible.
As in Section~\ref{sec:list3},
no two intervals of $\calC'$ overlap. Next, we show that every interval
is valid in $\calC'$. It is sufficient to show that all intervals in
$S_0\cup\{m,i\}$ are valid since other intervals do no change
positions from $\calC$ to $\calC'$. Comparing with its position in
$\calC$, $I_m$ has been moved
rightwards in $\calC'$. Thus, $I_m$ is valid in $\calC'$.
Recall that in Case II of our algorithm, it holds that $x_{i}^l\leq
x_m^l(\calC_{L'})$, where $\calC_{L'}$ is the configuration of only the
intervals of $\calI[1,i-1]$ following their order in $L'$.
Since $\calC_{L'}$ is the configuration
constructed by the left-possible placement strategy and
the order of the indices of $\calI[1,i-1]$ in $\calC$ is the same as $L'$,
it holds that $x_m^l(\calC_{L'})\leq x_m^l(\calC)$. Hence, we obtain
$x_{i}^l\leq x_m^l(\calC)$. Since $x_{i}^l(\calC')=x_m^l(\calC)$,
$x_{i}^l\leq x_{i}^l(\calC')$ and $I_{i}$ is valid in
$\calC'$.
Consider any index $j\in S_0$. Recall that $I_j$ is contained in
$I_{i}$ in the input. Thus, $x_j^l\leq x_{i}^r$. Since $i$
is to the left of $j$ in $\calC'$, we have $x_{i}^r(\calC')\leq
x_j^l(\calC')$. Since $x_{i}^r\leq x_{i}^r(\calC')$ (because $I_{i}$ is
valid in $\calC'$), we obtain that $x_j^l\leq x_j^l(\calC')$ and $I_j$
is valid in $\calC'$.
This proves that $\calC'$ is feasible.

We proceed to show that $\calC'$ is an optimal configuration by
proving that $\delta(\calC')\leq \delta(\calC)=\delta_{opt}$. It
suffices to show that for any $j\in S_0\cup\{m,i\}$, $d(j,\calC')\leq
\delta_{opt}$. Comparing with $\calC$, $I_{i}$ has been moved
leftwards in $\calC'$, and thus $d(i,\calC')\leq d(i,\calC)\leq
\delta_{opt}$. Since $x^r_{i}\leq x^r_m$ and
$x_m^r(\calC')=x_{i}^r(\calC)$, we can deduce
$d(m,\calC')=x_m^r(\calC')-x_m^r\leq x_{i}^r(\calC)-x_{i}^r=
d(i,\calC)\leq \delta_{opt}$. Consider any $j\in S_0$. Recall that
$x_j^l\geq x_{i}^l\geq  x_m^l$. On the other hand, since $j$
is to the left of $m$ in $\calC'$, $x_j^l(\calC')\leq x_m^l(\calC')$.
Therefore, $d(j,\calC')=x_j^l(\calC')-x_j^l\leq
x_m^l(\calC')-x_m^l=d(m,\calC')$. We have proved above that
$d(m,\calC')\leq \delta_{opt}$, and thus $d(j,\calC')\leq
\delta_{opt}$.

This proves that $\calC'$ is an optimal configuration and $L_1$ is an
optimal list. As discussed above, this also proves that $L$ is a
canonical list of $\calI[1,i]$. This finishes the proof of the lemma
in the first case.

\subsubsection{The Second Case}
\label{sec:secondcase}

In the second case, $i$ is before $j$ in $L_{opt}$ for some $j\in
\calI[1,i-1]\setminus \{m\}$. We assume there is no other indices of
$\calI[1,i-1]$ strictly between $i$ and $j$ in $L_{opt}$ (otherwise, we take
$j$ as the leftmost such index to the right of $i$).

Let $\widehat{L_0}$ be the list of indices of $\calI[1,i]$ following their
order in $L_{opt}$. Therefore, $\widehat{L_0}$ is a canonical list.
Let $\widehat{L_1}$ be the list the same as $\widehat{L_0}$ except that the order of $i$ and
$j$ is exchanged. In the following, we first show that $\widehat{L_1}$
is also a canonical list of $\calI[1,i]$.  The
proof technique is very similar to the above first case.

Let $S$ denote the set of indices strictly between $i$ and $j$ in $L_{opt}$. By the
definition of $j$, $k>i>j$ holds for each
index $k\in S$. Let $S'$ be the set of indices $k$ of $S$ such that
$x_k^r\geq x_{j}^r$. Hence, for each $k\in S'$, the pair $(j,k)$ is an
inversion of $L_{opt}$. We consider the general case where neither $S$ nor $S'$ is
empty (otherwise the proof is similar but easier).

As in Section~\ref{sec:firstcase}, starting from the rightmost index of $S'$,
we keep applying Lemma~\ref{lem:inversion} to the inversion pairs
and eventually obtain an optimal list $L_0$ in which for any
index $k$ of $L_0$ strictly between $i$ and $j$, $(j,k)$ is not an
inversion and thus $x_k^r<x_j^r$ (hence $I_k\subseteq I_j$ in the
input as $j<k$). Let $S_0$ denote the set of indices strictly
between $i$ and $j$ in $L_0$.

Consider the list $L_1$ that is the same as $L_0$ except that we
exchange the positions of $i$ and $j$, i.e., the indices of $S_0$
are now after $j$ and before $i$. In the following, we prove that
$L_1$ is an optimal list, which will also prove that $\widehat{L_1}$ is
a canonical list of $\calI[1,i]$ since $\widehat{L_1}$ is consistent with
$L_1$.

Since $L_0$ is an optimal list, there is an optimal configuration
$\calC$ in which the order of the intervals is the same as
$L_0$. Consider the configuration $\calC'$ that is the same as $\calC$
except the following (e.g., see Fig.~\ref{fig:secondcase}):
First, we set $x_{j}^l(\calC')=x_{i}^l(\calC)$;
second, we shift each interval of $S_0$ leftwards by distance
$|I_{i}|-|I_{j}|$; third, we set $x_{i}^r(\calC')=x_{j}^r(\calC)$.
Clearly, the interval order of $\calC'$ is the same as $L_1$.
Below, we show that $\calC'$ is an optimal configuration, which will
prove that $L_1$ is an optimal list.

\begin{figure}[h]
\begin{minipage}[t]{\textwidth}
\begin{center}
\includegraphics[height=0.6in]{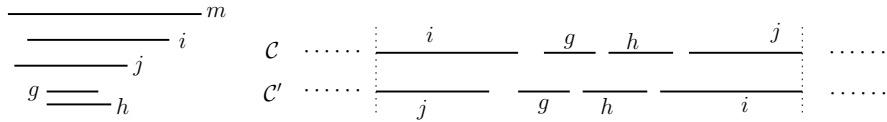}
\caption{\footnotesize Left: Illustrating five intervals
at their input positions, where $S_0=\{g,h\}$.
Right: Illustrating the intervals of $S_0\cup \{i,j\}$ in
the configurations $\calC$ and $\calC'$. }
\label{fig:secondcase}
\end{center}
\end{minipage}
\vspace{-0.15in}
\end{figure}

We first show that $\calC'$ is feasible. As before, no two intervals of $\calC'$ overlap.
Next we prove that all intervals in $S_0\cup\{i,j\}$ are valid in
$\calC'$. Comparing with its position in $\calC$, $I_{i}$ has been moved
rightwards in $\calC'$ and thus is valid. 
Since $j<i$, $x_j^l<x_{i}^l$.
Since $x_{j}^l(\calC')=x_{i}^l(\calC)$ and $x_{i}^l\leq
x_{i}^l(\calC)$ (because $I_{i}$ is valid in $\calC$),
we obtain $x_{j}^l\leq x_{j}^l(\calC')$ and $I_{j}$ is valid in
$\calC'$. Consider any index $k\in S_0$.
Recall that $x_k^l\leq x_k^r \leq x_{j}^r$. Since $k$
is to the right of $j$ in $\calC'$, we have $x_{j}^r(\calC')\leq
x_k^l(\calC')$. Since $x_{j}^r\leq x_{j}^r(\calC')$,
we obtain that $x_k^l\leq x_k^l(\calC')$ and $I_k$
is valid in $\calC'$.
This proves that $\calC'$ is feasible.

We proceed to show that $\calC'$ is an optimal configuration by
proving that for any $k\in S_0\cup\{i,j\}$, $d(k,\calC')\leq
\delta(\calC) = \delta_{opt}$. Comparing with $\calC$, $I_{j}$ has been moved
leftwards in $\calC'$, and thus $d(j,\calC')\leq d(j,\calC)\leq
\delta_{opt}$. Since $m<i$,
$l_{m}$ is to the left of $r_{i}$ in the input. Since $I_m$ is to the
right of $I_{i}$ in $\calC'$, $l_m$ is to the right of $r_{i}$ in
$\calC'$.
This implies that $d(i,\calC')\leq d(m,\calC')$. Since $I_m$ does
not change position from $\calC$ to $\calC'$,
$d(m,\calC')=d(m,\calC)\leq \delta_{opt}$. Thus, we obtain
$d(i,\calC')\leq \delta_{opt}$. Consider any $k\in S_0$.
Since $i<k$, $x^l_{i}\leq x^l_k$.  On the other hand, since $k$
is to the left of $i$ in $\calC'$, $x_k^l(\calC')\leq x_{i}^l(\calC')$.
Therefore, we deduce $d(k,\calC')=x_k^l(\calC')-x_k^l\leq
x_{i}^l(\calC')-x_{i}^l=d(i,\calC')$. We have proved above that
$d(i,\calC')\leq \delta_{opt}$, and thus $d(k,\calC')\leq
\delta_{opt}$.

This proves that $\calC'$ is an optimal configuration and $L_1$ is an
optimal list. As discussed above, this also proves that
$\widehat{L_1}$ is a canonical list of $\calI[1,i]$.

If the right neighbor $j$ of $i$ in $\widehat{L_1}$ is not $m$, then
by the same analysis as above, we can show that the list obtained by
exchanging the order of $i$ and $j$ is still a canonical list of
$\calI[1,i]$. We keep applying the above
exchange operation until we obtain a canonical list $\widehat{L_2}$ of
$\calI[1,i]$ such that the right
neighbor of $i$ in $\widehat{L_2}$ is $m$. Note that $\widehat{L_2}$ is exactly
$L$, and thus this proves that $L$ is a canonical list of $\calI[1,i]$.
This finishes the proof for the lemma in the second case.

Lemma~\ref{lem:case2} is thus proved.

\subsection{Proof of Lemma~\ref{lem:case3}}

We prove Lemma~\ref{lem:case3}.
Assume that $L'$ is a canonical list of $\calI[1,i-1]$. Our goal is to prove
that either $L$ or $L^*$ is a canonical list of $\calI[1,i]$.

As $L'$ is a canonical list, there exists an optimal list $L_{opt}$ of
$\calI$ whose interval order is consistent with $L'$.
Let $\widehat{L_0}$ be the list of indices of
$\calI[1,i]$ following the same order in $L_{opt}$. If $\widehat{L_0}$ is
either $L$ or $L^*$, then we are done with the proof. Otherwise, $i$
must be before $j$ in $\widehat{L_0}$ for some index $j\in
\calI[1,i-1]\setminus\{m\}$. By using the same proof as in
Section~\ref{sec:secondcase}, we can show that
$L^*$ is a canonical list of $\calI[1,i]$. We omit the details.


\subsection{Proof of Lemma~\ref{lem:prune}}

In this section, we prove Lemma~\ref{lem:prune}.
Assume $\calL^*$ has a canonical list $L_0$ of $\calI[1,i]$.
Recall that $L^*_{min}$ is the list of $\calL^*$ with the smallest max-displacement.
Our goal is to
prove that $L^*_{min}$ is also a canonical list of $\calI[1,i]$.

Recall that for each list
$L\in \calL^*$, $i$ and $m$ are the last two indices with $m$ at the
end, and further, in the configuration $\calC_L$ (which is obtained by the
left-possible placement strategy on the intervals in $\calI[1,i]$
following their order in $L$), $x^l_{i}(\calC_L)=x_i^l$ and
$x^l_m(\calC_L)=x_i^r$. Also, each list of $\calL^*$ is generated in
Case III of the algorithm and we have $I_{i}\subseteq I_m$ in the input.

Since $L_0$ is a canonical list of $\calI[1,i]$, there is an optimal list
$L_{opt}$ of $\calI$ that is consistent with $L_0$.
Let $S$ be the set of indices of $\calI[i+1,n]$ before $i$ in $L_{opt}$.
We consider the general case where $S$ is not empty (otherwise
the proof is similar but easier). Let $j$ be the rightmost index
of $S$ in $L_{opt}$. Let $L'_{opt}$ be the list that is the same as
$L_{opt}$ except that we move $j$ right after $i$.
In the following, we show that $L'_{opt}$ is also an optimal list.

Since $L_{opt}$ is an optimal list, there is an optimal configuration
$\calC$ in which the order of the indices of intervals is the same as
$L_{opt}$. Recall that $L_{opt}[j,i]$ is consists of indices of $L_{opt}$ between $j$ and $i$
inclusively.
Consider the configuration $\calC'$ that is the same as
$\calC$ except the following (e.g., see Fig.~\ref{fig:prune}):
First, for each index $k\in L_{opt}[j,i]\setminus\{j\}$,
move $I_k$ leftwards by distance $|I_j|$; second, move $I_j$ rightwards such
that $l_j$ is at $r_{i}$ (after $I_{i}$ is moved leftwards in the
above first step, so that $I_i$ is connected with $I_j$).
Note that the order of intervals of $\calI$ in
$\calC'$ is exactly $L'_{opt}$. In the following, we show that
$\calC'$ is an optimal configuration, which will also prove that $L'_{opt}$ is an
optimal list.

\begin{figure}[h]
\begin{minipage}[t]{\textwidth}
\begin{center}
\includegraphics[height=0.5in]{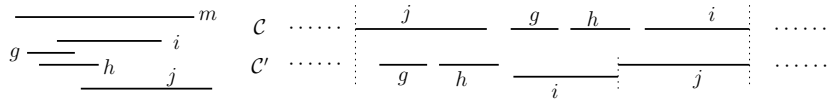}
\caption{\footnotesize Left: Illustrating five intervals
at their input positions, where $L_{opt}[j,i]=\{j,g,h,i\}$.
Right: Illustrating the intervals of $L_{opt}[j,i]$ in
the configurations $\calC$ and $\calC'$. (Interval $i$ is shifted downwards in order to
visually separate it from interval $j$.)}
\label{fig:prune}
\end{center}
\end{minipage}
\vspace{-0.15in}
\end{figure}

We first show that $\calC'$ is feasible.
By our way of setting the positions of intervals in $L_{opt}[j,i]$, no two intervals overlap
in $\calC'$. Next, we show that every interval
is valid in $\calC'$. It is sufficient to show that $I_k$ is valid in $\calC'$ for
every index $k$ in $L_{opt}[j,i]$ since all other intervals do not
move from $\calC$ to $\calC'$. Comparing with its position in $\calC$, $I_{j}$ has
been moved rightwards in $\calC'$ and thus is valid. Suppose $k\neq j$.
By the definition of $j$, $k<j$ and thus
$x_k^l\leq x_j^l$. By our way of constructing $\calC'$,
$x_j^l(\calC)\leq x_k^l(\calC')$. Since $I_j$ is valid in $\calC$, it
holds that $x_j^l\leq x_j^l(\calC)$. Thus, we obtain that $x_k^l\leq
x_k^l(\calC')$ and $I_k$ is valid. This proves that $\calC'$ is
feasible.

We proceed to show that $\calC'$ is an optimal configuration by
proving that $\delta(\calC')\leq \delta(\calC)=\delta_{opt}$. It is sufficient to
show that for any index $k\in L_{opt}[j,i]$, $d(k,\calC')\leq
\delta_{opt}$. If $k$ is not $j$, then comparing with $\calC$,
$I_{k}$ has been moved leftwards, and thus $d(k,\calC')\leq
d(k,\calC)\leq \delta_{opt}$. In the following, we show that
$d(j,\calC')\leq \delta_{opt}$. Indeed, since $m<i<j$,
it holds that $x_m^l\leq x_j^l$. On the other hand, $I_m$ is to
the right of $I_j$ in $\calC'$, and thus, $x_j^l(\calC')\leq x_m^l(\calC')$.
Therefore, we have $d(j,\calC')=x_j^l(\calC')-x_j^l\leq
x_m^l(\calC')-x_m^l=d(m,\calC')$. Since the position of $I_m$ is the
same in $\calC$ and $\calC'$, $d(m,\calC')=d(m,\calC)\leq
\delta_{opt}$. Thus, we have $d(j,\calC')\leq \delta_{opt}$.
This proves that $\calC'$ is an optimal configuration and $L_{opt}'$
is an optimal list.

If there are still indices of $\calI[i+1,n]$ before $i$ in
$L_{opt}'$, then we keep applying the above exchange operations until
we obtain an optimal list $L_{opt}''$ that does not have any index of
$\calI[i+1,n]$ before $i$, and in other words, the indices of
$L_{opt}''$ before $i$ are exactly those in
$\calI[1,i-1]\setminus\{m\}$.

Since $L''_{opt}$ is an optimal list, there is an optimal
configuration $\calC''$ whose interval order is the same as $L''_{opt}$. Let
$\calC'''$ be a configuration that is the same as $\calC''$  except the
following: For each interval $I_k$ with $k\in
\calI[1,i-1]\setminus\{m\}$, we set its position the same as its
position in $\calC_{L_{min}^*}$ (which is the configuration obtained by our
algorithm for the list $L_{min}^*$). Recall that the position
of $I_{i}$ in $\calC_{L_{min}^*}$ is the same as that in the input. On the
other hand, $x_{i}^l\leq x_{i}^l(\calC'')$. Therefore, $\calC'''$
is still a feasible configuration. We claim that $\calC'''$ is also an
optimal configuration. To see this, the maximum displacement of all
intervals in $\calI[1,i-1]\setminus\{m\}$ in $\calC'''$ is at most
$\delta(\calC_{L_{min}^*})$. Recall that $\delta(\calC_{L_{min}^*})\leq
\delta(\calC_{L_0})$. Further, since $L_0$ is a canonical list, it
holds that $\delta(\calC_{L_0})\leq \delta_{opt}$. Thus, we obtain
$\delta(\calC_{L_{min}^*})\leq \delta_{opt}$. Consequently,
the maximum displacement of all
intervals in $\calI[1,i-1]\setminus\{m\}$ in $\calC'''$ is at most $\delta_{opt}$.
Since only intervals of $\calI[1,i-1]\setminus\{m\}$ in $\calC'''$ change positions from $\calC''$
to $\calC'''$, we obtain $\delta(\calC''')\leq \delta_{opt}$ and thus $\calC'''$ is an optimal
configuration.

According to our construction of $\calC'''$, the order of the intervals
of $\calI[1,i]$ in $\calC'''$ is exactly $L_{min}^*$. Therefore, $L_{min}^*$ is
a canonical list of $\calI[1,i]$. This proves Lemma~\ref{lem:prune}.

\section{The Improved Algorithm}
\label{sec:improve}

In this section, we improve our preliminary algorithm to
$O(n\log n)$ time and $O(n)$ space. The key idea is that based on new
observations we are able to prune some ``redundant''
lists from $\calL$ after each step of the algorithm (actually
Lemma~\ref{lem:prune} already gives an example for pruning redundant
lists). More importantly,
although the number of remaining lists in $\calL$ can
still be $\Omega(n)$ in the worst case, the remaining lists of $\calL$
have certain monotonicity properties such that we are able to
implicitly maintain them in $O(n)$ space and update them in $O(\log
n)$ amortized time for each step of the algorithm for processing an interval $I_i$.

In the following, we first give some observations that will help us
to perform the pruning procedure on $\calL$.

\subsection{Observations}
In this section, unless otherwise stated, let $\calL$ be the set
after a step of our preliminary algorithm for
processing an interval $i$.
Recall that for each list $L\in \calL$, we also have a configuration
$\calC_L$ that is built following the left-possible placement
strategy. We
use $x(\calC_L)$ to denote the $x$-coordinate of the right endpoint of the rightmost
interval of $L$ in $\calC_L$.


For any two lists $L_1$ and $L_2$ of $\calL$,
we say that $L_1$ {\em dominates} $L_2$ if the following holds: If
$L_2$ is a canonical list of $\calI[1,i]$, then $L_1$ must also
be a canonical list of $\calI[1,i]$. Hence,
if $L_1$ dominates $L_2$, then $L_2$ is ``redundant'' and can be pruned
from $\calL$.

The subsequent two lemmas give ways to identify redundant lists
from $\calL$. In general, Lemma~\ref{lem:prune1} is for the case where
two lists have different last indices while Lemma~\ref{lem:prune2} is
for the case where two lists have the same last index (notice the
slight differences in the lemma conditions).

\begin{lemma}\label{lem:prune1}
Suppose $L_1$ and $L_2$ are two lists of $\calL$ such that the last
index of $L_1$ is $m'$, the last index of $L_2$ is $m$ (with $m\neq m'$), and $x_{m'}^r\leq x_m^r$.
Then, if $\delta(\calC_{L_1})\leq d(m,\calC_{L_2})$ and
$x(\calC_{L_1})\leq x(\calC_{L_2})$, then $L_1$ dominates $L_2$.
\end{lemma}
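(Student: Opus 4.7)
The plan is to start from an optimal list $L_{opt}$ of $\calI$ consistent with $L_2$ (which exists because $L_2$ is canonical by assumption) and produce from it an optimal list $L'_{opt}$ consistent with $L_1$, thereby certifying that $L_1$ is also canonical.

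The construction proceeds in two stages. Stage 1 transforms $L_{opt}$, while preserving both optimality and consistency with $L_2$, into a list $\widetilde{L}$ whose first $i$ entries coincide with $L_2$ --- equivalently, every index of $\calI[i+1,n]$ is pushed past $m$. Concretely, for each $j\in\calI[i+1,n]$ currently appearing before $m$, I would apply Lemma~\ref{lem:inversion} to the inversion $(m,j)$ when $x_m^r\le x_j^r$, relocating $j$ to the right of $m$ without disturbing the relative order of the $\calI[1,i]$ indices in between; when $x_m^r>x_j^r$ (so $I_j\subsetneq I_m$ in the input) Lemma~\ref{lem:inversion} does not apply, and I would instead mimic the explicit exchange construction from the proof of Lemma~\ref{lem:prune}: shift every intermediate interval leftwards by $|I_j|$ and drop $I_j$ into the vacated slot just after $m$. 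Iterating removes every $\calI[i+1,n]$ index from the prefix, leaving Stage 2, in which I replace the $L_2$ prefix of $\widetilde{L}$ with $L_1$ to obtain $L'_{opt}$.

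To witness that $L'_{opt}$ is optimal, let $\widetilde{\calC}$ be the left-possible placement of $\widetilde{L}$ (optimal by Lemma~\ref{lem:left}), and build $\calC'$ by placing each $I_k$ with $k\in\calI[1,i]$ at its $\calC_{L_1}$ position and each $I_k$ with $k\in\calI[i+1,n]$ at its $\widetilde{\calC}$ position; the resulting interval order is exactly $L'_{opt}$. Feasibility: the two parts do not collide because the leftmost point of $\calI[i+1,n]$ in $\widetilde{\calC}$ is at least $x_m^r(\widetilde{\calC})\ge x_m^r(\calC_{L_2})=x(\calC_{L_2})\ge x(\calC_{L_1})$, using the left-possible minimality underlying Lemma~\ref{lem:left} together with the hypothesis $x(\calC_{L_1})\le x(\calC_{L_2})$. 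Optimality: intervals of $\calI[i+1,n]$ keep their $\widetilde{\calC}$ displacements, all bounded by $\delta_{opt}$; intervals of $\calI[1,i]$ have displacement at most $\delta(\calC_{L_1})\le d(m,\calC_{L_2})\le d(m,\widetilde{\calC})\le \delta_{opt}$, again by the left-possible comparison together with the displacement hypothesis $\delta(\calC_{L_1})\le d(m,\calC_{L_2})$.

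The main obstacle is Stage 1 in the subcase $x_m^r>x_j^r$, where Lemma~\ref{lem:inversion} does not apply and the Lemma~\ref{lem:prune}-style exchange must be adapted carefully; this is the spot where the extra hypothesis $x_{m'}^r\le x_m^r$ is leveraged to keep the displacement of the relocated intervals within $\delta_{opt}$. Once Stage 1 is settled, Stage 2 is a routine plug-in verification based solely on the two numerical hypotheses of the lemma.
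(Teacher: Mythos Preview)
Your Stage~2 argument is essentially what the paper does at the very end, and it is sound; the real problem is Stage~1.  You propose to push \emph{every} $j\in\calI[i+1,n]$ appearing before $m$ in $L_{opt}$ all the way past $m$, handling the subcase $x_j^r<x_m^r$ by the shift-and-reinsert trick from the proof of Lemma~\ref{lem:prune}.  But that trick does not survive the transplant.  In Lemma~\ref{lem:prune} the interval $I_j$ is moved to just after $i$, and the displacement bound $d(j,\calC')\le d(m,\calC')$ comes from the fact that $I_m$ still sits to the \emph{right} of $I_j$'s new slot while $x_m^l\le x_j^l$.  Here you want $I_j$ to land to the right of $I_m$; after the move there is no interval of $\calI[1,i]$ further right to play that role, and indeed one computes $d(j,\calC')=x_m^r(\calC)-x_j^r>x_m^r(\calC)-x_m^r=d(m,\calC)$, which can exceed $\delta_{opt}$.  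The hypothesis $x_{m'}^r\le x_m^r$ does not help at this point, since $m'$ has not entered the picture yet.

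The paper avoids this trap by \emph{not} trying to clear the whole $\calI[i+1,n]$ block past $m$.  It first pushes such indices only past $g$ (the penultimate index of $L_2$), then uses Lemma~\ref{lem:inversion} to expel from the segment $[g,m]$ those $j$ with $x_j^r\ge x_m^r$; the remaining set $S'$ (all satisfying $x_j^r<x_m^r$) is left in place.  The crucial step is then to swap the roles of $m$ and $m'$: in the final list $Q_3$ the prefix $\calI[1,i]\setminus\{m'\}$ is laid out in $L_1$-order, followed by $S'$, followed by $m'$.  Now it is $m'$, not $m$, that sits at the right edge, and its displacement is bounded via $d(m',\calC_{L_1})\le\delta(\calC_{L_1})\le d(m,\calC_{L_2})$ together with $x(\calC_{L_1})\le x(\calC_{L_2})$; the hypothesis $x_{m'}^r\le x_m^r$ is what guarantees $m'$ is actually pushed in $\calC_{L_1}$ so that $x(\calC_{L_1})=x_h^r(\calC_{L_1})+|I_{m'}|$.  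Your plan needs this $m\leftrightarrow m'$ interchange (or an equivalent device) to close the gap.
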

\begin{proof}
Assume $L_2$ is a canonical list of $\calI[1,i]$. Our goal is to prove that $L_1$ is also
a canonical list of $\calI[1,i]$. It is sufficient to
construct an optimal configuration in which the order the intervals of
$\calI[1,i]$ is $L_1$. We let $h$ denote the left neighboring index
of $m'$ in $L_1$ and let $g$ denote the left neighboring index of $m$
in $L_2$.

Since $L_2$ is a canonical list, there is an optimal list $Q$
that is consistent with $L_2$.  Let $S$ denote the set of indices of $\calI[i+1,n]$
before $g$ in $Q$. We consider the general case where $S$ is not
empty (otherwise the proof is similar but easier).

By the similar analysis as in the proof of Lemma~\ref{lem:prune} (we omit the details), we
can obtain an optimal list $Q_1$ that is the same as $Q$
except that all indices of $S$ are now right after $g$ in $Q_1$
(i.e., all indices of $Q$ before $g$ except those in $S$ are
still before $g$ in $Q_1$ with the same relative order, and
all indices of $Q$ after $g$ are now after indices of $S$ in
$Q_1$ with the same relative order).
Therefore, in $Q_1$, the indices before $g$ are exactly those in $\calI[1,i]\setminus\{m\}$.

Recall that $Q_1[g,m]$ denote the sublist of $Q_1$ between $g$ and $m$
including $g$ and $m$.
If there is an index $j$ in $Q_1[g,m]$ such that $(m,j)$ is an
inversion, then as in the proof of Lemma~\ref{lem:case1},
we keep applying Lemma~\ref{lem:inversion} on all such indices $j$
from right to left to obtain another optimal list $Q_2$ such
that for each $j\in Q_2[g,m]$, $(m,j)$ is not an inversion.
Note that the indices before and including $g$ in $Q_1$ are the same as those in $Q_2$.
Let $S'$ denote the set of indices of $Q_2[g,m]\setminus\{g,m\}$.
Again, we consider the general case where $S'$ is not empty.
Note that $S'\subseteq \calI[i+1,n]$.
For each $j\in S'$, since $(m,j)$ is not an inversion and $m<j$, it
holds that $x_j^r<x_m^r$.

Let $Q_3$ be another list that is the same as $Q_2$ except
the following (e.g., see Fig~\ref{fig:prune1}):
First, we move $m'$ right after the indices of $S'$ and
move $m$ before the indices of $S'$ (i.e., the indices of $Q_3$ from
the beginning to $m'$ are indices of $\calI[1,i]\setminus\{m'\}$,
indices of $S'$, and $m'$);
second, we re-arrange the indices of $\calI[1,i]\setminus\{m'\}$
(which are all before indices of $S'$ in $Q_3$) in exactly the same order as in $L_1$.
In this way, $L_1$  is consistent with $Q_3$.
In the following, we show that $Q_3$  is an optimal list, which
will prove that $L_1$ is a canonical list of $\calI[1,i]$ and thus
prove the lemma.

\begin{figure}[h]
\begin{minipage}[t]{\textwidth}
\begin{center}
\includegraphics[height=0.4in]{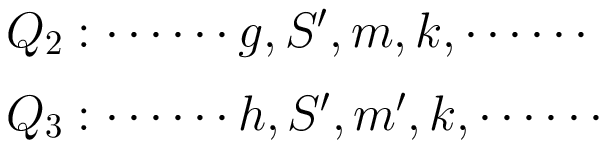}
\caption{\footnotesize Illustrating the two lists $Q_2$ and $Q_3$,
where $k$ is the right neighboring index of $m$ in $Q_2$ and $k$ is
also right neighboring index of $m'$ in $Q_3$. In $Q_2$ (resp., $Q_3$), the indices
strictly before $S'$ are exactly those in
$\calI[1,i]\setminus\{m\}$ (resp., $\calI[1,i]\setminus\{m'\}$).}
\label{fig:prune1}
\end{center}
\end{minipage}
\vspace{-0.15in}
\end{figure}

Since $Q_2$ is an optimal list, there is an optimal
configuration $\calC_2$ whose interval order is $Q_2$. Consider
the configuration $\calC_3$ whose interval order follows
$Q_3$ and whose interval positions are the same as those in $\calC_2$
except the following: First, for each index $j\in
\calI[1,i]\setminus\{m'\}$, we set the position of $I_j$ in the same
as its position in $\calC_{L_1}$ (i.e., the configuration obtained by
our algorithm for $L_1$); second, we place the intervals of $S'$ such that they
do not overlap but connect together (i.e., the right endpoint
co-locates with the left endpoint of the next interval) following their order in
$Q_2$ and the left endpoint of the leftmost
interval of $S'$ is at the right endpoint of $I_h$ (recall that $h$ is
the left neighbor of $m'$ in $L_1$, which is also the rightmost
interval of $\calI[1,i]\setminus\{m'\}$ in $Q_3$;  e.g., see
Fig.~\ref{fig:prune1}); third, we set the
left endpoint of $I_{m'}$ at the right endpoint of the rightmost
interval of $S'$. Therefore, all intervals before and including $m'$
do not have any overlap in $\calC_3$, and the intervals of $S'\cup\{h,m'\}$
essentially connect together.  In the
following, we show that $\calC_3$ is an optimal configuration, which
will prove that $Q_3$ is an optimal list.

We first show that $\calC_3$ is feasible.
We begin with proving that no two intervals overlap. Let $k$ be the
right neighboring interval of $m$ in $Q_2$ (e.g., see
Fig.~\ref{fig:prune1}), and $k$ now becomes
the right neighboring interval of $m'$ in $Q_3$.  To prove no
two intervals of $\calC_3$ overlap, it is sufficient to show that
$I_{m'}$ and $I_k$  do not overlap, i.e., $x_{m'}^r(\calC_3)\leq x_k^l(\calC_3)$.
Note that $x_k^l(\calC_3) =  x_k^l(\calC_2)$ and
$x_m^r(\calC_2)\leq x_k^l(\calC_2)$. Hence, it suffices to prove
$x_{m'}^r(\calC_3)\leq x_m^r(\calC_2)$.

We claim that in the configuration $\calC_{L_1}$, $l_{m'}$ is at $r_{h}$. Indeed, since
$x_{m'}^r\leq x_m^r$ and $I_m$ is to the left of $I_{m'}$ in
$\calC_{L_1}$, it holds that $x^l_{m'}\leq x^l_{m'}(\calC_{L_1})$.
Since $\calC_{L_1}$ is constructed based on the
left-possible placement strategy, we have
$x_{m'}^l(\calC_{L_1})=x_h^r(\calC_{L_1})$, which proves the claim.

Recall that by the definition of $x(\calC_{L_1})$, we have
$x(\calC_{L_1})=x_{m'}^r(\calC_{L_1})$.

Let $l$ be the total length of all intervals of $S'$. By our
way of constructing $\calC_3$, it holds that
$x_{m'}^r(\calC_3)=x^r_{m'}(\calC_{L_1})+l=x(\calC_{L_1})+l$.
On the other hand, since
$L_2$ is consistent with $Q_2$ and $\calC_{L_2}$ is constructed
based on the left-possible placement strategy, it holds that
$x(\calC_{L_2}) + l \leq x_m^r(\calC_2)$. By the lemma condition,
$x(\calC_{L_1})\leq x(\calC_{L_2})$. Hence, we obtain
$x_{m'}^r(\calC_3)= x(\calC_{L_1})+l\leq x(\calC_{L_2}) + l
\leq x_m^r(\calC_2)$. Thus, $I_{m'}$ and $I_k$ do not overlap in $\calC_3$.

We proceed to prove that every interval of $\calC_3$ is valid.
For any interval before $h$ and including $h$ in $Q_3$, since its position in
$\calC_3$ is the same as that in $\calC_{L_1}$, it is valid. For interval
$m'$, since it is valid in $\calC_{L_1}$ and $x_{m'}^r(\calC_3)=x_{m'}^r(\calC_{L_1})+l$,
it is also valid in $\calC_3$. Consider any interval $j\in S'$.
Recall that $x_j^r< x_m^r$. Since $I_m$ is
to the left of $I_j$ in $\calC_3$, comparing with its input position,
$I_j$ must have been moved rightwards
in $\calC_3$. Thus, $I_j$ is valid. For any interval
after $m'$, its position is the same as in $\calC_2$, and thus it is
valid.

The above proves that $\calC_3$ is feasible. In the following, we show
that $\calC_3$ is an optimal configuration by proving that
$\delta(\calC_3)\leq \delta(\calC_2) = \delta_{opt}$. It is sufficient
to show that for any interval $j$ before and including $m'$ in
$\calC_3$, $d(j,\calC_3)\leq \delta_{opt}$.

\begin{itemize}
\item
Consider any interval $j$ before and including $h$ in $\calC_3$. We have
$d(j,\calC_3)=d(j,\calC_{L_1})\leq \delta(\calC_{L_1})$. By lemma
condition, $\delta(\calC_{L_1}) \leq d(m,\calC_{L_2})\leq \delta(\calC_{L_2})$.
Since $L_2$ is consistent with $Q_2$ and $\calC_{L_2}$ is constructed
based on the left-possible placement strategy, it holds that
$\delta(\calC_{L_2})\leq \delta_{opt}$. Therefore, $d(j,\calC_3)\leq
\delta_{opt}$.

\item
Consider interval $m'$. In the following, we show that
$d(m',\calC_3)\leq d(m,\calC_2)$, which will lead to
$d(m',\calC_3)\leq \delta_{opt}$ since $d(m,\calC_2)\leq
\delta_{opt}$.

By lemma condition, $d(m',\calC_{L_1})\leq \delta(\calC_{L_1})\leq
d(m,\calC_{L_2})$. As discussed above, 
$x_{m'}^r(\calC_3)=x_{m'}^r(\calC_{L_1})+l$. Therefore,
$d(m',\calC_3)=d(m',\calC_{L_1})+l$.
On the other hand, as discussed above, $x_{m}^r(\calC_2)\geq
x_{m}^r(\calC_{L_2})+l$. Therefore, $d(m,\calC_2)\geq d(m,\calC_{L_2})+l$.
Due to $d(m',\calC_{L_1}) \leq d(m,\calC_{L_2})$, we obtain
$d(m',\calC_3)\leq d(m,\calC_2)$.

\item
Consider any index $j\in S'$. Recall that $m'\leq i<j$ as $S'\subseteq
\calI[i+1,n]$. Therefore, $x^l_{m'}\leq x_j^l$. On the other
hand, $l_{m'}$ is to the right of $l_j$ in $\calC_3$. Thus, it holds that
$d(j,\calC_3)\leq d(m',\calC_3)$. We have proved above that
$d(m',\calC_3)\leq \delta_{opt}$. Hence, we also obtain
$d(j,\calC_3)\leq \delta_{opt}$.
\end{itemize}

This proves that $\calC_3$ is an optimal configuration. As discussed
above, the lemma follows.
\qed
\end{proof}

\begin{lemma}\label{lem:prune2}
Suppose $L_1$ and $L_2$ are two lists of $\calL$ whose last indices
are the same. Then, if $\delta(\calC_{L_1})\leq \delta(\calC_{L_2})$ and
$x(\calC_{L_1})\leq x(\calC_{L_2})$, then $L_1$ dominates $L_2$.
\end{lemma}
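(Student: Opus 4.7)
The plan is to mirror the proof of Lemma~\ref{lem:prune1}, with the simplification that since $L_1$ and $L_2$ share the same last index $m$, the ``swap'' of last indices present in that proof becomes trivial; only the rearrangement of the prefix of the optimal list is needed. Suppose $L_2$ is a canonical list of $\calI[1,i]$, let $Q$ be an optimal list of $\calI$ consistent with $L_2$, and let $g_1$ and $g_2$ denote the left neighbors of $m$ in $L_1$ and $L_2$, respectively.

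I would first push every index of $\calI[i+1,n]$ that appears before $g_2$ in $Q$ to sit immediately after $g_2$, obtaining an optimal list $Q_1$; then apply Lemma~\ref{lem:inversion} repeatedly to eliminate every inversion $(m,j)$ with $j$ strictly between $g_2$ and $m$ in $Q_1$, obtaining an optimal list $Q_2$. As in Lemma~\ref{lem:prune1}, the resulting set $S'$ of indices strictly between $g_2$ and $m$ in $Q_2$ is contained in $\calI[i+1,n]$ with each $I_j\subseteq I_m$ in the input, and the indices of $Q_2$ before $S'$ are exactly the elements of $\calI[1,i]\setminus\{m\}$ in the order of $L_2$, ending with $g_2$. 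Finally, build $Q_3$ by permuting these indices into the relative order given by $L_1$ (so $g_1$ now sits immediately before $S'$); this makes $L_1$ consistent with $Q_3$, and it remains to show that $Q_3$ is optimal.

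To that end, take an optimal configuration $\calC_2$ with interval order $Q_2$ and build $\calC_3$ by: placing each interval of $\calI[1,i]\setminus\{m\}$ at its $\calC_{L_1}$ position; placing the intervals of $S'$ connected in the order of $Q_2$ starting at $r_{g_1}(\calC_{L_1})$; setting $l_m(\calC_3)=r_{g_1}(\calC_{L_1})+l$, where $l$ is the total length of $S'$; and leaving every interval after $m$ at its $\calC_2$ position. Feasibility rests on the chain $r_m(\calC_3)=r_{g_1}(\calC_{L_1})+l+|I_m|\leq x(\calC_{L_1})+l\leq x(\calC_{L_2})+l\leq r_m(\calC_2)\leq l_k(\calC_2)$, where $k$ is the right neighbor of $m$ in $Q_2$; the first step uses $r_{g_1}(\calC_{L_1})+|I_m|\leq x(\calC_{L_1})$ (no overlap in $\calC_{L_1}$), the middle step is the hypothesis, and the third step uses that $\calC_{L_2}$ is the left-possible placement for $L_2$ while $\calC_2$'s prefix extends $L_2$'s prefix by $S'$. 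For the max-displacement bound, $\delta(\calC_{L_1})\leq\delta(\calC_{L_2})\leq\delta_{opt}$ (the latter from canonicity of $L_2$) handles the intervals in $\calI[1,i]\setminus\{m\}$, while $d(m,\calC_3)$ and $d(j,\calC_3)$ for $j\in S'$ are bounded by $\delta_{opt}$ exactly as in Lemma~\ref{lem:prune1}, using $l_m(\calC_3)\leq l_m(\calC_2)$ together with $m<j$ for $j\in S'$.

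The main obstacle is the third inequality in the feasibility chain, $x(\calC_{L_2})+l\leq r_m(\calC_2)$, which implicitly uses the identity $l_m(\calC_{L_1})=r_{g_1}(\calC_{L_1})$ so that $x(\calC_{L_1})=r_{g_1}(\calC_{L_1})+|I_m|$. In Lemma~\ref{lem:prune1} this identity was guaranteed by $x_{m'}^r\leq x_m^r$; here it must be derived from $x(\calC_{L_1})\leq x(\calC_{L_2})$ combined with the canonicity of $L_2$. In the degenerate case $l_m(\calC_{L_1})=x_m^l>r_{g_1}(\calC_{L_1})$ (a ``gap'' between $g_1$ and $m$ in $\calC_{L_1}$), one must modify the placement of $S'$ in $\calC_3$, for instance by setting $l_m(\calC_3)=\max(r_{g_1}(\calC_{L_1})+l,x_m^l)$ and slotting $S'$ into the gap whenever it fits, so that the chain still closes.
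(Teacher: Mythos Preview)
Your approach imports too much machinery from Lemma~\ref{lem:prune1}. The key simplification you miss is that since $L_1$ and $L_2$ share the last index $m$, you need not move $m$ or the block of indices between $g_2$ and $m$ at all: leave them at their $\calC_1$ positions and only re-place the prefix $\calI[1,i]\setminus\{m\}$ at its $\calC_{L_1}$ positions. This is exactly what the paper does, and it eliminates the inversion-elimination step, the repacking of $S'$, and the relocation of $m$ altogether.

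More seriously, your construction has a gap you do not flag. When you place the intervals of $S'$ connected starting at $r_{g_1}(\calC_{L_1})$, you need each $j\in S'$ to be valid, i.e., $l_j(\calC_3)\geq x_j^l$. In Lemma~\ref{lem:prune1} this follows because the index $m$ (the last index of $L_2$ there) lies in the prefix $\calI[1,i]\setminus\{m'\}$ and hence to the \emph{left} of $S'$ in $\calC_3$, giving $l_j(\calC_3)\geq r_m(\calC_3)\geq x_m^r>x_j^r>x_j^l$. In your setting $m$ is the common last index and sits to the \emph{right} of $S'$, so that argument is unavailable, and nothing else forces $r_{g_1}(\calC_{L_1})\geq x_j^l$. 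Concretely, take $x_m^l=0$, $x_m^r=10$, $x_j^l=5$, $x_j^r=6$, and $r_{g_1}(\calC_{L_1})=l_m(\calC_{L_1})=2$: there is no gap in $\calC_{L_1}$, yet your first $S'$ interval is placed at $2<5$ and is invalid. Your appeal ``exactly as in Lemma~\ref{lem:prune1}'' for the $S'$ intervals therefore fails.

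The paper sidesteps both this issue and the feasibility-chain problem you do acknowledge by keeping $S$ and $m$ fixed. The only nontrivial check is then $x_{g_1}^r(\calC_{L_1})\leq x_t^l(\calC_1)$ for the first index $t$ of $S$, and the paper handles it by a two-case split on whether $x_{g_1}^r(\calC_{L_1})\leq x_{g_2}^r(\calC_{L_2})$. In the second case (a gap before $m$ in $\calC_{L_1}$), the hypothesis $x(\calC_{L_1})\leq x(\calC_{L_2})$ together with left-possible placement forces $m$ to sit at its input position in both $\calC_{L_1}$ and $\calC_{L_2}$, whence $x_{g_1}^r(\calC_{L_1})\leq x_m^l\leq x_t^l\leq x_t^l(\calC_1)$ since $t>i\geq m$. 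This is precisely the clean resolution of your ``degenerate gap case'' that your sketch leaves unfinished.
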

\begin{proof}
Assume $L_2$ is a canonical list of $\calI[1,i]$. Our goal is prove that $L_1$ is also
a canonical list of $\calI[1,i]$. To this end, it is sufficient to
construct an optimal configuration in which the order the intervals of
$\calI[1,i]$ is $L_1$. The proof techniques are similar to (but
simpler than) that for Lemma~\ref{lem:prune1}.

Let $m$ be the last index of $L_1$ and $L_2$.
Let $h$ (resp., $g$) be the left neighboring index of $m$ in $L_1$ (resp.,
$L_2$).

Since $L_2$ is a canonical list, there is an optimal list $Q$
that is consistent with $L_2$.  By the definition of $g$, all indices (if any)
strictly between $g$ and $m$ in $Q$ are from $\calI[i+1,n]$.
Let $S$ denote the set of indices of $\calI[i+1,n]$
before $g$ in $Q$. We consider the general case where $S\neq \emptyset$.

As in the proof of Lemma~\ref{lem:prune1},
we can obtain an optimal list $Q_1$ that is the same as $Q$
except that all indices of $S$ are now right after $g$ in $Q_1$
(i.e., all indices of $Q$ before $g$ except those in $S$ are
still before $g$ in $Q_1$ with the same relative order, and
all indices of $Q$ after $g$ are now after indices of $S$ in
$Q_1$ with the same relative order; e.g., see Fig.~\ref{fig:prune2}).
Therefore, in $Q_1$, the indices before and including $g$ are exactly those
in $\calI[1,i]\setminus\{m\}$.

Let $Q_2$ be another list that is the same as $Q_1$ except
the following (e.g., see Fig.~\ref{fig:prune2}):
We re-arrange the indices before and including $g$ such that
they follow exactly the same order as in $L_1$. Note that
$L_1$  is consistent with $Q_2$. In the following, we show that $Q_2$  is an optimal list,
which will prove the lemma.

\begin{figure}[h]
\begin{minipage}[t]{\textwidth}
\begin{center}
\includegraphics[height=0.4in]{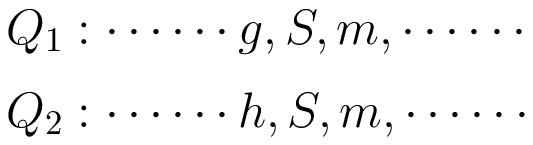}
\caption{\footnotesize Illustrating the two lists $Q_1$ and $Q_2$.
In $Q_1$ (resp., $Q_2$), the indices
strictly before $S$ are exactly those in
$\calI[1,i]\setminus\{m\}$.}
\label{fig:prune2}
\end{center}
\end{minipage}
\vspace{-0.15in}
\end{figure}

Since $Q_1$ is an optimal list, there is an optimal
configuration $\calC_1$ whose interval order is the same as $Q_1$. Consider
the configuration $\calC_2$ that is the same as $\calC_1$
except the following: For each interval $k$ before and
including $g$, we set the position of $I_k$ the same as its position in
$\calC_{L_1}$. Hence, the interval order of $\calC_2$ is the same as
$Q_2$. In the following, we show that $\calC_2$ is an optimal
configuration, which will prove that $Q_2$ is an optimal list.

We first show that $\calC_2$ is feasible. For each interval $k$ before
and including $h$, its position in $\calC_2$ is the same as that in
$\calC_{L_1}$, and thus interval $k$ is still valid in $\calC_2$. Other
intervals are also valid since they do not change their positions from
$\calC_1$ to $\calC_2$.  In the following, we show that no two
intervals overlap in $\calC_2$. Based on our way of constructing
$\calC_2$, it is sufficient to show that $x_{h}^r(\calC_2)\leq
x_t^l(\calC_2)$, where $t$ is the right neighboring index of $h$ in
$Q_2$. Note that $x_h^r(\calC_2) = x_h^r(\calC_{L_1})$
and $x_t^l(\calC_2)=x_t^l(\calC_1)$. In the following, we prove that
$x_h^r(\calC_{L_1}) \leq x_t^l(\calC_1)$. Depending on whether
$x_h^r(\calC_{L_1})\leq x_g^r(\calC_{L_2})$, there are two cases.

\begin{enumerate}
\item
If $x_h^r(\calC_{L_1})\leq x_g^r(\calC_{L_2})$, then since $L_2$ is
consistent with $Q_1$ and $\calC_{L_2}$ is constructed
based on the left-possible placement strategy, we have
$x_g^r(\calC_{L_2})\leq x_g^r(\calC_1)$, and thus, $x_h^r(\calC_{L_1})\leq x_g^r(\calC_1)$.

On the other hand, note that $t$ is also the right neighboring index
of $g$ in $Q_1$. Since $\calC_1$ is feasible,
$x_g^r(\calC_1)\leq x_t^l(\calC_1)$. Thus, we obtain
$x_h^r(\calC_{L_1}) \leq x_t^l(\calC_1)$.

\item
Assume  $x_h^r(\calC_{L_1})> x_g^r(\calC_{L_2})$. By the lemma
condition, we have $x_m^r(\calC_{L_1})=x(\calC_{L_1})\leq
x(\calC_{L_2})=x_m^r(\calC_{L_2})$. Since $x_h^r(\calC_{L_1})>
x_g^r(\calC_{L_2})$ and both $\calC_{L_1}$
and $\calC_{L_2}$ are constructed by the left-possible placement
strategy, it must be that
$x_m^l(\calC_{L_1})=x_m^l(\calC_{L_2})=x_m^l$, i.e., the positions of
$I_m$ in both $\calC_{L_1}$ and $\calC_{L_2}$ are the same as that in
the input.

Since $t$ is in $\calI[i+1,n]$ and $m\leq i$, $x_m^l\leq x_{t}^l$.
Since $x_t^l\leq x_t^l(\calC_{L_1})\leq x_t^l(\calC_1)$, it holds that
$x_m^l\leq x_t^l(\calC_1)$. Since $I_m$ is to the right of $I_h$
in the configuration $\calC_{L_1}$, $x_h^r(\calC_{L_1}) \leq x_m^l(\calC_{L_1})=x_m^l$.
Consequently, we obtain $x_h^r(\calC_{L_1}) \leq x_t^l(\calC_1)$.
\end{enumerate}

This proves that $\calC_2$ is feasible. In the sequel we show that
$\calC_2$ is an optimal configuration by proving that
$\delta(\calC_2)\leq \delta(\calC_1)= \delta_{opt}$. Since the intervals strictly
after $g$ do not change their positions from $\calC_1$ to $\calC_2$,
it is sufficient to show that $d(k,\calC_2)\leq \delta_{opt}$ for any
index $k$ before and including $g$ in $\calC_2$.

Since $x_k^l(\calC_2)=x_k^l(\calC_{L_1})$,
$d(k,\calC_2)=d(k,\calC_{L_1})\leq \delta(\calC_{L_1})$. By lemma
condition, $\delta(\calC_{L_1})\leq \delta(\calC_{L_2})$. Since $L_2$ is
consistent with $Q_1$ and $\calC_{L_2}$ is constructed
based on the left-possible placement strategy, it holds that
$\delta(\calC_{L_2})\leq \delta(\calC_{1})= \delta_{opt}$.
Combining the above discussions, we obtain $d(k,\calC_2)\leq
\delta(\calC_{L_1})\leq  \delta(\calC_{L_2}) \leq \delta_{opt}$.

This proves that $\calC_2$ is an optimal configuration. The lemma thus
follows.
\qed
\end{proof}

Let $E(\calL)$ denote the set of last intervals of all lists of $\calL$.
Our preliminary algorithm guarantees the following property on
$E(\calL)$, which will be useful later for our pruning algorithm given in
Section~\ref{sec:prune}.

\begin{lemma}\label{lem:endindex}
$E(\calL)$ has at most two intervals. Further, if $|E(\calL)|=2$, then
one interval of $E(\calL)$ contains the other one in the input.
\end{lemma}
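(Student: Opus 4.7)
My plan is to prove the lemma by induction on $i$, strengthening the hypothesis with a \emph{position invariant}: whenever $E(\calL)=\{m_1,m_2\}$ with $I_{m_2}\subset I_{m_1}$, every list $L\in\calL$ with last index $m_1$ satisfies $x_{m_1}^l(\calC_L)\ge x_{m_2}^r$, and every list $L\in\calL$ with last index $m_2$ satisfies $x_{m_2}^l(\calC_L)\ge x_{m_1}^r$. The base case $i=1$ is immediate; suppose both statements hold after processing $I_{i-1}$.

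The inductive step first handles $|E(\calL)|=1$, say $E=\{m\}$. Every list shares last index $m$, so the test $x_i^r\ge x_m^r$ either puts every list into Case I (so the new $E=\{i\}$) or into Case II and/or Case III. Using the explicit position formulas in Observations~\ref{obser:case2} and~\ref{obser:case3}, I would directly verify that the new $E$ is a subset of $\{m,i\}$, and when $|E|=2$ both the containment $I_i\subset I_m$ (forced by $m<i$ together with $x_i^r<x_m^r$) and the position invariant hold.

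The more delicate step handles $|E(\calL)|=2$, say $E=\{m_1,m_2\}$ with $I_{m_2}\subset I_{m_1}$. I would split on where $x_i^r$ lies relative to $x_{m_1}^r$ and $x_{m_2}^r$. When $x_i^r\ge x_{m_1}^r$ every list enters Case I and the new $E$ is $\{i\}$. When $x_{m_2}^r\le x_i^r<x_{m_1}^r$, all $m_2$-lists enter Case I and all $m_1$-lists enter Case II or III, so the possible contributions collapse to $E=\{i,m_1\}$ with $I_i\subset I_{m_1}$. The subtle sub-case is $x_i^r<x_{m_2}^r$: one could \emph{a priori} worry that some $m_1$-list enters Case III while some $m_2$-list enters Case II, which would leave three distinct last indices $\{m_1,m_2,i\}$. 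This is exactly where the position invariant is essential; it forces $x_{m_1}^l(\calC_L)\ge x_{m_2}^r>x_i^l$ and $x_{m_2}^l(\calC_{L'})\ge x_{m_1}^r>x_i^l$, so every list in this sub-case must enter Case II and $E$ stays $\{m_1,m_2\}$.

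The main obstacle is carrying the position invariant through each sub-case. The updates are explicit, however: Case II shifts $x_m^l(\calC_L)$ by exactly $|I_i|$, while Case III sets $x_i^l(\calC)=x_m^r(\calC_L)$ and $x_m^l(\calC_{L^*_{min}})=x_i^r$. Each required inequality then reduces to a one-step comparison against the previous invariant or against the defining inequality of Case II or III, and these checks assemble into the full inductive step.
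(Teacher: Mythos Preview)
Your inductive approach with the same three-way split on the position of $x_i^r$ is exactly what the paper does, and your argument is correct. However, the auxiliary \emph{position invariant} you carry through the induction is unnecessary work: since every list $L\in\calL$ is a permutation of all of $\calI[1,i-1]$, both $m_1$ and $m_2$ already appear in $L$. If $m_1$ is the last index of $L$, then $m_2$ precedes $m_1$ in $L$, so feasibility of $\calC_L$ gives $x_{m_2}^r(\calC_L)\le x_{m_1}^l(\calC_L)$; combined with validity of $I_{m_2}$ (i.e.\ $x_{m_2}^r\le x_{m_2}^r(\calC_L)$) this yields $x_{m_1}^l(\calC_L)\ge x_{m_2}^r$ immediately, and the symmetric inequality for $m_2$-ending lists is the same. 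The paper therefore derives your invariant on the spot in the subtle sub-case $x_i^r<x_{m_2}^r$ rather than maintaining it inductively, which spares you from re-verifying it after every Case~II/Case~III update.
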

\begin{proof}
We prove the lemma by induction. Initially, after $I_1$ is processed,
$\calL$ consists of the only list $L=\{1\}$.
Therefore, $E(\calL)=\{1\}$ and the lemma trivially holds.

We assume that the lemma holds after interval $I_{i-1}$ is processed. Let $\calL$ be the set after $I_{i}$ is processed. For differentiation, we let $\calL'$ denote the set $\calL$ before $I_{i}$ is processed.

Depending on whether the size of $E(\calL')$ is $1$ or $2$, there are two cases.

\paragraph{The case $|E(\calL')|=1$.} Let $m$ be the only index of
$E(\calL')$. Hence, for each list $L\in \calL'$, $m$ is the last index
of $L$. Depending on whether $x_m^r\leq x_i^r$, there are two subcases.

\begin{enumerate}
\item
If $x_m^r\leq x_{i}^{r}$, then according to our preliminary algorithm,
Case I of the algorithm happens on every list $L\in \calL'$, and
$i$ is appended at the end of $L$ for each $L\in \calL'$. Therefore,
the last indices of all lists of $\calL$ are $i$, and the lemma
statement holds for $E(\calL)$.

\item
If $x_m^r> x_{i}^{r}$, then note that $I_{i}\subseteq I_{m}$ in the input.
Consider any list $L\in \calL'$.
According to our preliminary algorithm, if $x_{i}^l\leq
x_m^l(\calC_{L})$, then $i$ is inserted into $L$ right before $m$;
otherwise, $i$ is appended at the end of $L$, and further, a new list
$L^*$ is produced in which $m$ is at the end.

Therefore, in this case, $E(\calL)$ has either one index or two
indices. If $|E(\calL)|=2$, then $E(\calL)=\{i,m\}$. Since
$I_{i}\subseteq I_{m}$ in the input, the lemma statement holds on
$E(\calL)$.
\end{enumerate}

\paragraph{The case $|E(\calL')|=2$.} By induction
hypothesis, one interval of $E(\calL')$ contains the other one in the
input. Let $m$ and $m'$ be the two indices of $E(\calL')$,
respectively, such that $I_{m'}\subseteq I_m$ in the input.
Hence, we have $m<m'$ and $x_{m'}^r\leq x_m^r$.

Depending on the $x$-coordinates of right endpoints of $I_{i}$,
$I_m$, and $I_{m'}$ in the input, there are three subcases: $x_m^r\leq
x_{i}^r$, $x_{m'}^r\leq x_{i}^r<x_m^r$, and $x_{i}^r<x_{m'}^r$.

\begin{enumerate}
\item
If $x_m^r\leq x_{i}^r$, then for each list $L\in \calL'$, Case I of
the algorithm
happens, and $i$ is appended at the end of $L$. Therefore, the
last indices of all lists of $\calL$ are $i$, and the lemma
statement holds for $E(\calL)$.

\item
If $x_{m'}^r\leq x_{i}^r<x_m^r$, then consider any list  $L\in
\calL'$. If $m'$ is at the end of $L$, then Case I
happens and $i$ is appended at the end of $L$.
If $m$ is at the end of $L$, then either Case II or Case III of the
algorithm happens. Hence, either $i$ or $m$ will be the last index
of $L$; if a new list $L^*$ is produced in Case III, then its
last index is $m$.

Therefore, after every list of $\calL'$ is processed, the last index
of each list of $\calL$ is either $m$ or $i$, i.e.,
$E(\calL)=\{m,i\}$.
Note that $I_{i}$ is contained in $I_m$ in the input. Hence, the lemma statement
holds for $E(\calL)$.

\item
If $x_{i}^r<x_{m'}^r$, then $I_{i}$ is contained in both $I_m$ and
$I_{m'}$ in the input. Consider any list $L\in \calL'$. Regardless of
whether the last index is $m$ or $m'$, Case I does not happen.

We claim that Case III does not happen either. We prove the claim only for
the case where the last index of $L$ is $m$ (the other case can be
proved similarly). Indeed, in the configuration $\calC_L$, it holds
that $x_{m'}^r\leq x_{m'}^r(\calC_L)$. Since $m$ is the last index of
$L$, we have $x_{m'}^r(\calC_L)\leq x_m^l(\calC_L)$. Since
$x_{i}^r<x_{m'}^r$, we obtain $x_{i}^l\leq
x_{i}^r< x_{m'}^r\leq x_{m'}^r(\calC_L)\leq   x_m^l(\calC_L)$. This implies that Case
III of the algorithm cannot happen.

Hence, Case II happens, and $i$ is inserted into $L$ right before
the last index. Therefore, the last indices of all lists of $\calL$
are either $m$ or $m'$. The lemma statement holds for $E(\calL)$.
\end{enumerate}

This proves the lemma.
\qed
\end{proof}

\subsection{A Pruning Procedure}
\label{sec:prune}

Based on Lemmas~\ref{lem:prune1} and \ref{lem:prune2}, we present an
algorithm that prunes redundant lists from $\calL$ after each step for
processing an interval $I_{i}$. In the following, we describe the
algorithm, whose implementation is discussed in
Section~\ref{sec:imple}.

By Lemma~\ref{lem:endindex}, $E(\calL)$ has at most two indices. If
$E(\calL)$ has two indices, we let $m$ and $m'$ denote the two indices,
respectively, such that $I_{m'}\subseteq I_m$ in the input.
If $E(\calL)$ has only one index, let $m$ denote it and
$m'$ is undefined. Let $\calL_1$ (resp., $\calL_2$) denote the set of
lists of $\calL$ whose last indices are $m'$ (resp., $m$), and
$\calL_1=\emptyset$ if and only if $m'$ is undefined.

Our algorithm maintains several invariants regarding certain monotonicity properties,
as follows, which are crucial to our efficient implementation.

\begin{enumerate}
\item
$\calL$ contains a canonical list of $\calI[1,i]$.

\item
For any two lists $L_1$ and $L_2$ of $\calL$, $x(\calC_{L_1})\neq x(\calC_{L_2})$ and $\delta(\calC_{L_1})\neq \delta(\calC_{L_2})$.

\item
If $\calL_1\neq \emptyset$, then for any lists $L_1\in \calL_1$ and $L_2\in \calL_2$, $x(\calC_{L_1})<x(\calC_{L_2})$.

\item
For any two lists $L_1$ and $L_2$ of $\calL$, $x(\calC_{L_1})<
x(\calC_{L_2})$ if and only if
$\delta(\calC_{L_1})>\delta(\calC_{L_2})$. In other words, if we order
the lists $L$ of $\calL$ increasingly by the values $x(\calC_{L})$,
then the values $\delta(\calC_{L})$ are sorted decreasingly.
\end{enumerate}

After $I_n$ is processed, by the algorithm invariants, if $L$ is the
list of $\calL$ with minimum $\delta(\calC_L)$, then $L$ is an optimal list and
$\delta_{opt}=\delta(\calC_L)$.

Initially after the first interval $I_1$ is processed, $\calL$ has
only one list $L=\{1\}$, and thus, all algorithm invariants trivially
hold. In general, suppose the first $i-1$ intervals have been processed
and all algorithm invariants hold on $\calL$. In the following, we
discuss the general step for processing interval $I_{i}$.

For differentiation, we let $\calL'$ refer to the original set $\calL$
before interval $i$ is processed. Similarly, we use $\calL'_1$ and
$\calL'_2$ to refer to $\calL_1$ and $\calL_2$, respectively.
Let $L'_1,L'_2,\ldots,L'_{a}$ be the lists of $\calL'$ sorted with
$x(\calC_{L'_1})<x(\calC_{L'_2})<\cdots<x(\calC_{L'_a})$, where
$a=|\calL'|$. By the third invariant, we have
$\delta(\calC_{L'_1})>\delta(\calC_{L'_2})>\cdots>\delta(\calC_{L'_a})$.
If $\calL'_1= \emptyset$, let $b=0$; otherwise, let $b$ be the
largest index such that $L'_b\in \calL'_1$, and by the third
algorithm invariant, $\calL_1'=\{L_1',\ldots,L_b'\}$ and
$\calL_2'=\{L_{b+1}',\ldots,L_a\}$.
Depending on whether $\calL'_1=\emptyset$, there are two main
cases.

\subsubsection{The Case $\calL'_1=\emptyset$}
\label{sec:mainfirst}

In this case, for each list $L'\in \calL'$, its last index is $m$.
Depending on whether $x_{m}^r\leq x_{i}^r$, there are two
subcases.

\paragraph{The first subcase $x_{m}^r\leq x_{i}^r$.} In this case,
according to the preliminary
algorithm, for each list $L_j'\in \calL'$, Case I happens and $i$
is appended at the end of $L'_j$, and we use $L_j$ to refer to the updated list of $L_j'$
with $i$. According to our left-possible placement strategy,
$x_{i}^l(\calC_{L_j})=\max\{x(\calC_{L'_j}),x_{i}^l\}$.
Thus, $x(\calC_{L_j})=x_{i}^l(\calC_{L_j})+|I_i|$ and
$d(i,\calC_{L_j})=x_{i}^l(\calC_{L_j}) - x_{i}^l$.

As the index $j$ increases from $1$ to $a$, since the value $x(\calC_{L'_j})$ strictly
increases, $x_{i}^l(\calC_{L_j})$ (and thus $x(\calC_{L_j})$ and
$d(i,\calC_{L_j})$) is monotonically increasing (it may first
be constant and then strictly increases after some index, say, $a_1$).
Formally, we define $a_1$ as follows.
If $x(\calC_{L'_1})> x_{i}^l$, then let $a_1=0$; otherwise,
define $a_1$ to be the largest index $j\in [1,a]$ such that
$x(\calC_{L'_j})\leq x_{i}^l$ (e.g., see Fig.~\ref{fig:defa1}).
In the following, we first assume $a_1\neq 0$. As discussed above, as $j$ increases
in $[1,a]$, $x_{i}^l(\calC_{L_j})$ is constant on $j\in [1,a_1]$ and strictly
increases on $j\in [a_1,a]$.

\begin{figure}[t]
\begin{minipage}[t]{\textwidth}
\begin{center}
\includegraphics[height=0.9in]{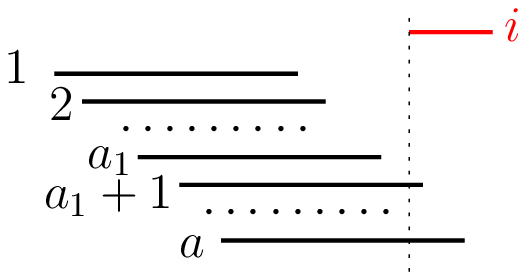}
\caption{\footnotesize Illustrating the definition of $a_1$. The black segments show the positions
of interval $m$ in the configurations $\calC_{L_j'}$ for $j\in [1,a]$, and the
numbers on the left side are the
indices of the lists. The red segment shows the interval $i$ in the input position.}
\label{fig:defa1}
\end{center}
\end{minipage}
\vspace{-0.15in}
\end{figure}

Now consider the value $\delta(\calC_{L_j})$, which is equal to
$\max\{\delta(\calC_{L'_j}),d(i,\calC_{L_j})\}$ by
Observation~\ref{obser:case1}. Recall that
$\delta(\calC_{L'_j})$ is strictly decreasing on $j\in [1,a]$. Observe that $d(i,\calC_{L_j})$ is $0$ on $j\in [1,a_1]$ and
strictly increases on $j\in [a_1,a]$. This implies that
$\delta(\calC_{L_j})$ on $j\in [1,a]$ is a  unimodal function, i.e.,
it first strictly decreases and then strictly increases after some
index, say, $a_2$. Formally, let $a_2$ be the largest index $j\in [a_1+1,a]$ such that
$\delta(\calC_{L_{j-1}})> \delta(\calC_{L_{j}})$,
and if no such index $j$ exists, then let $a_2=a_1$.
The following lemma is proved based on Lemma~\ref{lem:prune2}.

\begin{lemma}\label{lem:110}
\begin{enumerate}
\item
If $a_1>1$, then for each $j\in [1,a_1-1]$, $L_{a_1}$ dominates $L_j$.
\item
If $a_2<a$, then for each $j\in [a_2+1,a]$, $L_{a_2}$ dominates $L_j$.
\end{enumerate}
\end{lemma}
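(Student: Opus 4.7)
\medskip
\noindent\textbf{Proof proposal for Lemma~\ref{lem:110}.}
Since we are in Case~I of the preliminary algorithm, every list $L_j$ obtained from $L'_j$ shares the same last index, namely $i$. Consequently Lemma~\ref{lem:prune2} is the right tool: all that has to be checked in each part is that the pair of monotonicity inequalities
\[
\delta(\calC_{L_{a_k}})\le \delta(\calC_{L_j}) \quad \text{and} \quad x(\calC_{L_{a_k}})\le x(\calC_{L_j})
\]
holds for the claimed range of $j$. The plan is therefore to unpack the formulas for $x_i^l(\calC_{L_j})$, $x(\calC_{L_j})$, and $\delta(\calC_{L_j})$ (the latter via Observation~\ref{obser:case1}) on either side of the threshold index $a_1$, and then combine them with the monotonicity invariants already guaranteed on $\calL'$ (invariants~2 and~4).

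For Part~1, I would start from the defining property of $a_1$: for every $j\in[1,a_1]$ we have $x(\calC_{L'_j})\le x_i^l$, so the left-possible strategy yields
\[
x_i^l(\calC_{L_j}) \;=\; \max\{x(\calC_{L'_j}),\, x_i^l\}\;=\;x_i^l .
\]
Therefore $I_i$ does not move in any of these configurations, which has two consequences: (i) $x(\calC_{L_j})=x_i^l+|I_i|$ is the \emph{same} constant for every $j\in[1,a_1]$, so in particular $x(\calC_{L_{a_1}})=x(\calC_{L_j})$; and (ii) $d(i,\calC_{L_j})=0$, so by Observation~\ref{obser:case1}, $\delta(\calC_{L_j})=\delta(\calC_{L'_j})$. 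Invariant~4 on $\calL'$ then gives $\delta(\calC_{L_{a_1}})=\delta(\calC_{L'_{a_1}})<\delta(\calC_{L'_j})=\delta(\calC_{L_j})$ for every $j<a_1$. Both hypotheses of Lemma~\ref{lem:prune2} are satisfied, so $L_{a_1}$ dominates $L_j$.

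For Part~2, for every $j\in[a_1+1,a]$ we instead have $x(\calC_{L'_j})>x_i^l$, hence $x_i^l(\calC_{L_j})=x(\calC_{L'_j})$ and $x(\calC_{L_j})=x(\calC_{L'_j})+|I_i|$. By invariant~4 the values $x(\calC_{L'_j})$ are strictly increasing in $j$, so $x(\calC_{L_j})$ is strictly increasing on $[a_1,a]$; in particular $x(\calC_{L_{a_2}})<x(\calC_{L_j})$ for every $j>a_2$. For the $\delta$-inequality I read off the definition of $a_2$: since $a_2$ is the \emph{largest} index in $[a_1+1,a]$ at which the $\delta$-sequence strictly decreased, we have $\delta(\calC_{L_{j-1}})\le \delta(\calC_{L_j})$ for every $j\in(a_2,a]$, and iterating this from $a_2+1$ up to any $j>a_2$ gives $\delta(\calC_{L_{a_2}})\le \delta(\calC_{L_j})$. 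A second application of Lemma~\ref{lem:prune2} then yields the desired dominance.

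There is no real obstacle; the only thing that demands care is the crossover at index $a_1$, where the expressions for $x_i^l(\calC_{L_j})$ and for $\delta(\calC_{L_j})$ change form. Once the left-possible rule and Observation~\ref{obser:case1} are applied separately on $[1,a_1]$ and on $[a_1+1,a]$, both halves of the lemma reduce to a direct invocation of Lemma~\ref{lem:prune2}, using invariant~4 on $\calL'$ in Part~1 and the definition of $a_2$ in Part~2.
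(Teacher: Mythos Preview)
Your proposal is correct and follows essentially the same approach as the paper: in both parts you verify the two inequalities needed for Lemma~\ref{lem:prune2} (same last index $i$), using in Part~1 that $x_i^l(\calC_{L_j})=x_i^l$ and $d(i,\calC_{L_j})=0$ for $j\le a_1$ together with the strict decrease of $\delta(\calC_{L'_j})$, and in Part~2 the monotonicity of $x(\calC_{L_j})$ and the definition of $a_2$. The paper phrases the $\delta$-inequality in Part~2 via unimodality while you iterate $\delta(\calC_{L_{j-1}})\le\delta(\calC_{L_j})$ for $j>a_2$ directly from the definition of $a_2$; these are the same argument.
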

\begin{proof}
\begin{enumerate}
\item
Let $k=a_1$ and assume $k>1$. Consider any $j\in [1,k-1]$.
By the definition of $a_1$, $x_{i}^l(\calC_{L_j})=x_{i}^l(\calC_{L_k})=x_{i}^l$. Therefore,
$x(\calC_{L_j})=x(\calC_{L_k})=x_{i}^l+|I_{i}|$. Since
$d(i,\calC_{L_j})=d(i,\calC_{L_k})=0$, we have
$\delta(\calC_{L_j})=\delta(\calC_{L'_j})$ and
$\delta(\calC_{L_k})=\delta(\calC_{L'_k})$.
Since $j<k$, $\delta(\calC_{L'_j})>\delta(\calC_{L'_k})$. Thus,
we obtain $\delta(\calC_{L_j})>\delta(\calC_{L_k})$.

Since $x(\calC_{L_j})=x(\calC_{L_k})$,
$\delta(\calC_{L_j})>\delta(\calC_{L_k})$, and the last indices of $L_j$
and $L_k$ are both $i$,
by Lemma~\ref{lem:prune2}, $L_k$ dominates $L_j$.

\item
Let $k=a_2$ and assume $k<a$. Consider any $j\in [k+1,a]$.
As discussed before, $x(\calC_{L_j})$ is monotonically increasing on
$j\in [1,a]$. Thus, $x(\calC_{L_k})\leq x(\calC_{L_j})$.
By the definition of $a_2$ and since $\delta(\calC_{L_j})$ is a
unimodal function on $j\in [1,a]$, it holds that
$\delta(\calC_{L_{k}})\leq \delta(\calC_{L_{j}})$.
By Lemma~\ref{lem:prune2}, $L_k$ dominates $L_j$.
\end{enumerate}
This proves the lemma.
\qed
\end{proof}

By Lemma~\ref{lem:110},
we let $\calL=\{L_j\ |\  a_1 \leq j\leq a_2\}$. The above is for the
general case where $a_1\neq 0$. If $a_1=0$, then we let $\calL=\{L_j\
|\  1 \leq j\leq a_2\}$.


\begin{observation}\label{obser:40}
All algorithm invariants hold for $\calL$.
\end{observation}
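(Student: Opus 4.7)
The plan is to verify the four algorithm invariants one by one for the pruned set $\calL=\{L_j:\max(1,a_1)\le j\le a_2\}$, using the unimodality analysis already established in the paragraph preceding the observation together with Lemma~\ref{lem:case1} and Lemma~\ref{lem:110}.

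For invariant~1, I would start from the inductive hypothesis that $\calL'$ contained a canonical list $L'_k$ of $\calI[1,i-1]$. Because we are in Case~I, Lemma~\ref{lem:case1} gives that the updated list $L_k$ (with $i$ appended) is a canonical list of $\calI[1,i]$. If $k\in [\max(1,a_1),a_2]$, we are done; otherwise $k<a_1$ or $k>a_2$, and by the two parts of Lemma~\ref{lem:110}, $L_k$ is dominated by $L_{a_1}$ or by $L_{a_2}$, respectively, so the dominating list is retained and is itself a canonical list of $\calI[1,i]$. Invariant~3 is vacuous, because every retained list ends with $i$, so $\calL_1=\emptyset$.

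For invariants~2 and~4 it suffices to show that on the retained range the map $j\mapsto x(\calC_{L_j})$ is strictly increasing while $j\mapsto \delta(\calC_{L_j})$ is strictly decreasing: invariant~2 is then immediate from injectivity, and invariant~4 from opposing strict monotonicities. Strict increase of $x(\calC_{L_j})$ is essentially already recorded in the preceding discussion, since $x_i^l(\calC_{L_j})$ is strictly increasing on $[a_1,a]$ and $x(\calC_{L_j})=x_i^l(\calC_{L_j})+|I_i|$. The strict decrease of $\delta(\calC_{L_j})$ on the retained range is the main obstacle. My plan is to split the retained range according to whether $\delta(\calC_{L_j})=\delta(\calC_{L'_j})$ (on which strict decrease is inherited from invariant~4 on $\calL'$) or $\delta(\calC_{L_j})=d(i,\calC_{L_j})$ (on which the value is strictly increasing in $j$, so by the definition of $a_2$ as the \emph{largest} strict-drop index, at most one index of this form, namely $a_2$ itself, can belong to the retained range). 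The delicate case is the crossover step, where one must rule out the coincidence $\delta(\calC_{L_{a_2-1}})=\delta(\calC_{L_{a_2}})$; this is precisely what the definition of $a_2$ as an index witnessing a strict drop forces.
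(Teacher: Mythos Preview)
Your proposal is correct and follows essentially the same approach as the paper's proof, which is quite terse: the paper simply invokes Lemma~\ref{lem:110} for invariant~1 and asserts that ``by our definitions of $a_1$ and $a_2$,'' $x(\calC_{L_j})$ strictly increases and $\delta(\calC_{L_j})$ strictly decreases on $[a_1,a_2]$, covering invariants~2--4. Your write-up is more explicit---in particular, your identification of the crossover issue (ruling out $\delta(\calC_{L_{a_2-1}})=\delta(\calC_{L_{a_2}})$ via the strict-drop definition of $a_2$) fills in precisely the step the paper leaves to the reader when it claims strict unimodality.
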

\begin{proof}
By Lemma~\ref{lem:110}, the lists that have been removed are redundant. Hence, $\calL$ contains a canonical list of $\calI[1,i]$ and the first algorithm invariant holds.

By our definitions of $a_1$ and $a_2$, when $j$ increases in $[a_1,a_2]$, $x(\calC_{L_j})$ strictly increases and $\delta(\calC_{L_j})$ strictly decreases. Therefore, the last three algorithm invariants hold.
\qed
\end{proof}

The following lemma will be quite useful for the algorithm implementation
given later in Section~\ref{sec:imple}.

\begin{lemma}\label{lem:case11}
If $a_1<a_2$, then for each $j\in [a_1+1,a_2]$,
$x(\calC_{L_j})=x(\calC_{L'_j})+|I_{i}|$. For each list $L_j\in \calL$ with $j\neq a_2$, $\delta(\calC_{L_j})=\delta(\calC_{L'_j})$.
\end{lemma}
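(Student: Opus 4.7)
The plan is to split the lemma into its two parts and handle each via the explicit update formulas from Case~I of the preliminary algorithm combined with the definitions of $a_1$ and $a_2$.

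For the first part, I would start from the left-possible placement rule, which gives $x^l_{i}(\calC_{L_j})=\max\{x(\calC_{L'_j}),x_{i}^l\}$ and hence $x(\calC_{L_j})=x^l_{i}(\calC_{L_j})+|I_{i}|$. For any $j\in[a_1+1,a]$ (in particular for $j\in[a_1+1,a_2]$), the maximality of $a_1$ gives $x(\calC_{L'_j})>x_{i}^l$, so the maximum is attained by $x(\calC_{L'_j})$. The desired identity $x(\calC_{L_j})=x(\calC_{L'_j})+|I_{i}|$ then follows immediately.

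For the second part I would use Observation~\ref{obser:case1}, which gives $\delta(\calC_{L_j})=\max\{\delta(\calC_{L'_j}),\,d(i,\calC_{L_j})\}$, and reduce the goal to showing $d(i,\calC_{L_j})\leq\delta(\calC_{L'_j})$ for every $L_j\in\calL$ with $j\neq a_2$. There are two subcases depending on whether $j\leq a_1$ or $j\in[a_1+1,a_2-1]$. In the first subcase, the definition of $a_1$ yields $x^l_{i}(\calC_{L_j})=x_{i}^l$, so $d(i,\calC_{L_j})=0$ and the conclusion is trivial. In the second subcase I would argue by contradiction: suppose $d(i,\calC_{L_j})>\delta(\calC_{L'_j})$, so that $\delta(\calC_{L_j})=d(i,\calC_{L_j})$. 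By part~1, $d(i,\calC_{L_{j+1}})=x(\calC_{L'_{j+1}})-x_{i}^l>x(\calC_{L'_j})-x_{i}^l=d(i,\calC_{L_j})$, where the strict inequality uses the second algorithm invariant $x(\calC_{L'_{j+1}})>x(\calC_{L'_j})$. Then $\delta(\calC_{L_{j+1}})\geq d(i,\calC_{L_{j+1}})>d(i,\calC_{L_j})=\delta(\calC_{L_j})$, contradicting the fact that $j+1\leq a_2$ means $\delta(\calC_{L_{j}})>\delta(\calC_{L_{j+1}})$ by definition of $a_2$.

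The only mildly delicate step is the contradiction argument in the second subcase, because it requires using the correct direction of monotonicity of both $\delta(\calC_{L'_j})$ (decreasing, by the fourth invariant on $\calL'$) and $d(i,\calC_{L_j})$ (strictly increasing on $j>a_1$, obtained from part~1 together with the second invariant). Once these monotonicities are in place, the unimodal structure of $\delta(\calC_{L_j})$ encoded by the definition of $a_2$ forces the maximum defining $\delta(\calC_{L_j})$ to be attained at $\delta(\calC_{L'_j})$ throughout $[a_1+1,a_2-1]$, completing the proof.
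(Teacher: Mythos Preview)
Your proposal is correct and follows essentially the same approach as the paper's proof: both parts rely on the placement rule $x^l_i(\calC_{L_j})=\max\{x(\calC_{L'_j}),x_i^l\}$, the definition of $a_1$, Observation~\ref{obser:case1}, and a contradiction argument based on the monotonicity of $d(i,\calC_{L_j})$ together with the unimodality encoded by $a_2$. The only cosmetic difference is that the paper compares $d(i,\calC_{L_j})$ directly with $d(i,\calC_{L_{a_2}})$ and $\delta(\calC_{L_j})$ with $\delta(\calC_{L_{a_2}})$, rather than with the neighboring index $j+1$ as you do, and it does not bother to split off the $j\leq a_1$ subcase explicitly (since there the contradiction hypothesis $\delta(\calC_{L_j})\neq\delta(\calC_{L'_j})$ fails immediately because $d(i,\calC_{L_j})=0$).
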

\begin{proof}
By the definition of $a_1$, for any $j\in [a_1+1,a]$, it always holds  that
$x(\calC_{L_j})=x(\calC_{L'_j})+|I_{i}|$. This proves the first lemma statement.

Recall that $\delta(\calC_{L_j})=\max\{\delta(\calC_{L'_j}),d(i,\calC_{L_j})\}$ for each $j\in [1,a]$.

Consider any list $L_j$ with $j\neq a_2$. Assume to the contrary that
$\delta(\calC_{L_j})\neq \delta(\calC_{L'_j})$. Then,
$\delta(\calC_{L_j})=d(i,\calC_{L_j})$. Since
$\delta(\calC_{L_j})=d(i,\calC_{L_j})<d(i,\calC_{L_{a_2}})$,
we obtain  $\delta(\calC_{L_j})\leq \delta(\calC_{L_{a_2}})$, which
contradicts with $\delta(\calC_{L_j}) > \delta(\calC_{L_{a_2}})$.
\qed
\end{proof}

\paragraph{The second subcase $x_{m}^r > x_{i}^r$.}
In this case, for each list $L_j'\in \calL'$, according to our
preliminary algorithm, depending on whether
$x_{i}^l\leq x_{m}^l(\calC_{L'_j})$, either Case II or Case III can
happen. If $x_{i}^l\leq x_{m}^l(\calC_{L'_1})$, then let $c=0$; otherwise,
let $c$ be the largest index $j$ such that  $x_{i}^l >
x_{m}^l(\calC_{L'_j})$ (e.g., see Fig.~\ref{fig:defc}). In the following, we first consider the general case where $1\leq c<a$.

\begin{figure}[t]
\begin{minipage}[t]{\textwidth}
\begin{center}
\includegraphics[height=0.9in]{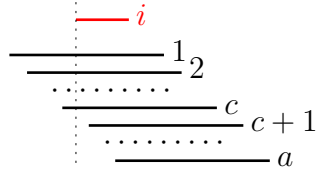}
\caption{\footnotesize Illustrating the definition of $c$. The black segments show the positions
of interval $m$ in the configurations $\calC_{L_j'}$ for $j\in [1,a]$, and the numbers on the right side are the
indices of the lists. The red segment shows the interval $i$ in the input position.}
\label{fig:defc}
\end{center}
\end{minipage}
\vspace{-0.15in}
\end{figure}

For each $j\in [1,c]$, observe that
$x_m^l(\calC_{L'_j})=x(\calC_{L'_j})-|I_m|\leq x(\calC_{L'_c})-|I_m|=x_m^l(\calC_{L'_c})<x_i^l$.
According to our preliminary algorithm, Case III happens, and thus $L_j'$ will
produce two lists: the list $L_j$ by appending $i$ at the end of
$L_j'$, and the new list $L_j^*$ by inserting $i$ in front of $m$
in $L_j'$. Further, 
according to our left-possible placement strategy,
$x^l_{i}(\calC_{L_j})=x(\calC_{L_j'})$ in $\calC_{L_j}$, and
$x^l_{i}(\calC_{L^*_j})=x^l_{i}$ and
$x^l_{m}(\calC_{L^*_j})=x^r_{i}$ in $\calC_{L^*_j}$.
By Observation~\ref{obser:case3},
$\delta(\calC_{L_j})=\max\{\delta(\calC_{L_j'}),d(i,\calC_{L_j})\}$
and
$\delta(\calC_{L^*_j})=\max\{\delta(\calC_{L_j'}),d(m,\calC_{L^*_j})\}$.

\begin{observation}\label{obser:50}
$\delta(\calC_{L^*_c})\leq \delta(\calC_{L^*_j})$ for any $j\in
[1,c]$.
\end{observation}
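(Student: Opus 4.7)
The plan is to unpack the formula for $\delta(\calC_{L^*_j})$ given by Observation~\ref{obser:case3} and observe that one of the two quantities in the max is actually independent of $j$, while the other is monotone in $j$ by the fourth algorithm invariant.

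First, I would recall how $\calC_{L^*_j}$ is constructed in Case~III: the left endpoint of $I_i$ is placed at $x_i^l$ (its input position) and the left endpoint of $I_m$ is placed at $x_i^r$. In particular, $x_m^l(\calC_{L^*_j}) = x_i^r$ for every $j\in[1,c]$, so $d(m,\calC_{L^*_j}) = x_i^r - x_m^l$ does not depend on $j$. Similarly, $d(i,\calC_{L^*_j}) = 0$. By Observation~\ref{obser:case3},
\[
\delta(\calC_{L^*_j}) \;=\; \max\{\delta(\calC_{L'_j}),\ x_i^r - x_m^l\}.
\]

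Next I would invoke the inductive algorithm invariants on $\calL'$: the lists $L'_1,\ldots,L'_a$ are sorted so that $\delta(\calC_{L'_1})>\delta(\calC_{L'_2})>\cdots>\delta(\calC_{L'_a})$. Thus for any $j\in[1,c]$ we have $\delta(\calC_{L'_j})\geq \delta(\calC_{L'_c})$. Since the second quantity $x_i^r-x_m^l$ inside the max is identical for $j$ and $c$, taking the maximum preserves the inequality, yielding
\[
\delta(\calC_{L^*_j}) \;\geq\; \delta(\calC_{L^*_c}),
\]
which is exactly the claim.

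There is really no hard part to this observation: the only thing to be careful about is confirming that the positions of the two last intervals in $\calC_{L^*_j}$ are determined solely by $i$, $m$, and the input data, not by $\calC_{L'_j}$. This was already asserted in the description of Case~III preceding Observation~\ref{obser:case3}, so it may be cited directly. Everything else is a one-line application of the invariant that $\delta(\calC_{L'_j})$ is strictly decreasing in $j$.
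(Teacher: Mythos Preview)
Your proposal is correct and follows essentially the same approach as the paper: both observe that $d(m,\calC_{L^*_j})=x_i^r-x_m^l$ is independent of $j$, and combine this with the monotonicity $\delta(\calC_{L'_j})\geq\delta(\calC_{L'_c})$ to conclude. Your version is slightly more explicit in citing Observation~\ref{obser:case3} and the algorithm invariant, but the argument is the same.
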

\begin{proof}
For any $j\in [1,c]$, note that
$d(m,\calC_{L^*_j})=x^l_{m}(\calC_{L^*_j}) - x_{m}^l = x_{i}^r-x_{m}^l$. Therefore,
$d(m,\calC_{L^*_j})$ is the same for all $j\in [1,c]$. On the other
hand, we have $\delta(\calC_{L'_j})\geq \delta(\calC_{L'_c})$. Thus,
$\delta(\calC_{L^*_c})\leq \delta(\calC_{L^*_j})$. \qed
\end{proof}

By the above observation and Lemma~\ref{lem:prune},  among the new lists $L_j^*$ with
$j=1,2,\ldots,c$, only $L_c^*$ needs to be kept. 

For each $j\in[1,c]$, note that
$x(\calC_{L_j})=x(\calC_{L_j'})+|I_i|$. Since $x(\calC_{L_j'})$ is
strictly increasing on $j\in[1,c]$, $x(\calC_{L_j})$ is also
strictly increasing on $j\in[1,c]$.  Since
$d(i,\calC_{L_j})=x^l_{i}(\calC_{L_j})-x_{i}^l=x(\calC_{L_j'})-x_{i}^l$
for any $j\in [1,c]$, $d(i,\calC_{L_j})$ also strictly increases  on $j\in [1,c]$.
Further, since $\delta(\calC_{L'_j})$ strictly decreases on $j\in [1,c]$,
$\delta(\calC_{L_j})$, which is equal to $\max\{\delta(\calC_{L_j'}),d(i,\calC_{L_j})\}$,
is a unimodal function (i.e., it first strictly decreases and then strictly increases).
Let $c_1$ be the smallest index $j\in [1,c-1]$ such that $\delta(\calC_{L_j})\leq
\delta(\calC_{L_{j+1}})$, and if such an index $j$ does not exist, then let $c_1=c$.

\begin{lemma}\label{lem:130}
If $c_1<c$, then $L_{c_1}$ dominates $L_j$ for any $j\in [c_1+1,c]$.
\end{lemma}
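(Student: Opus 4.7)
The plan is to derive Lemma~\ref{lem:130} as a direct application of Lemma~\ref{lem:prune2}, leveraging the two monotonicity facts already assembled in the paragraph just above the statement. The key point is that every list $L_j$ for $j\in[1,c]$ arises by appending $i$ at the end of $L_j'$ (this is the Case~III branch), so all the $L_j$'s share the same last index $i$; this is exactly the regime in which Lemma~\ref{lem:prune2} applies.

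First, I would verify the $x$-coordinate hypothesis of Lemma~\ref{lem:prune2}. The preceding discussion already records that $x(\calC_{L_j})=x(\calC_{L_j'})+|I_i|$, and since $x(\calC_{L_j'})$ is strictly increasing on $[1,c]$ by the third algorithm invariant, so is $x(\calC_{L_j})$. Hence $x(\calC_{L_{c_1}})\le x(\calC_{L_j})$ for every $j\in[c_1+1,c]$.

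Second, I would verify the displacement hypothesis by exploiting unimodality. We have $\delta(\calC_{L_j})=\max\{\delta(\calC_{L_j'}),d(i,\calC_{L_j})\}$, where $\delta(\calC_{L_j'})$ is strictly decreasing and $d(i,\calC_{L_j})=x(\calC_{L_j'})-x_i^l$ is strictly increasing in $j\in[1,c]$. Consequently, as soon as the max is dominated by the $d(i,\calC_{L_j})$ term at some index, it continues to be dominated by that term and is strictly increasing thereafter. Thus the unimodal function $\delta(\calC_{L_j})$ switches from the strictly decreasing phase to the strictly increasing phase at exactly the index $c_1$ defined in the statement, so $\delta(\calC_{L_{c_1}})\le \delta(\calC_{L_j})$ for every $j\in[c_1+1,c]$.

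Applying Lemma~\ref{lem:prune2} with $L_1=L_{c_1}$ and $L_2=L_j$ then gives that $L_{c_1}$ dominates $L_j$, completing the proof. I do not expect any real obstacle here, since the lemma is essentially a packaging of the unimodality observation together with the previously established pruning criterion; the only small point requiring care is to confirm that the definition of $c_1$ as the \emph{first} non-decrease indeed marks the start of a strictly increasing suffix, which is immediate from the monotonicity of the two terms inside the max.
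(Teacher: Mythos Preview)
Your proposal is correct and follows essentially the same approach as the paper: verify $x(\calC_{L_{c_1}})\le x(\calC_{L_j})$ from monotonicity, verify $\delta(\calC_{L_{c_1}})\le \delta(\calC_{L_j})$ from the unimodality of $\delta(\calC_{L_j})$ on $[1,c]$ together with the definition of $c_1$, note that both lists end in $i$, and invoke Lemma~\ref{lem:prune2}. The only minor slip is that the strict increase of $x(\calC_{L_j'})$ comes from the sorted definition of $L_1',\ldots,L_a'$ and the second invariant (distinctness), not the third invariant, but this does not affect the argument.
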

\begin{proof}
Consider any $j\in [c_1+1,c]$. Since $\delta(\calC_{L_j})$ is a
unimodal function on $j\in [1,c]$, by the definition of $c_1$,
$\delta(\calC_{L_{c_1}})\leq \delta(\calC_{L_j})$. Recall that
$x(\calC_{L_{c_1}})\leq x(\calC_{L_j})$.
Since the last indices of $L_{c_1}$ and $L_j$ are both $i$, by
Lemma~\ref{lem:prune2}, $L_{c_1}$ dominates $L_j$.  \qed
\end{proof}

By the preceding lemma, if $c_1<c$, then we do not have to keep the
lists $L_{c_1+1},\ldots,L_c$ in $\calL$. Let
$S_1=\{L_1,\ldots,L_{c_1}\}$.

Consider any index $j\in [c+1,a]$. By the definition of $c$ and also
due to that $x(\calC_{L_k'})$ is strictly increasing on $k\in [1,a]$,
it holds that $x_m^l(\calC_{L_j'})\geq x_i^l$, and thus
Case II of the preliminary algorithm happens on $L_j'$ and
$L_j$ is obtained by inserting $i$ right before $m$ in $L_j'$.
By Observation~\ref{obser:case2},
$\delta(\calC_{L_j})=\max\{\delta(\calC_{L'_j}),d(m,\calC_{L_j})\}$.
Note that $x(\calC_{L_j})=x(\calC_{L'_j})+|I_{i}|$ and
$x_{m}^r(\calC_{L_j})=x(\calC_{L_j})$.
As $j$ increases in $[c+1,a]$, since $x(\calC_{L_j'})$ strictly
increases, both $x(\calC_{L_j})$ and $d(m,\calC_{L_j})$ strictly increase.
Since $\delta(\calC_{L'_j})$ is strictly decreasing on $j\in [c+1,a]$, we
obtain that $\delta(\calC_{L_j})$  is a unimodal function on $j\in
[c+1,a]$ (i.e., it first strictly decreases and then strictly
increases).

Let $S= \{L_1,\ldots,L_c,L_c^*,L_{c+1},\ldots,L_a\}$.
For convenience, we use $L_{c+0.5}$ to refer to $L_c^*$
(and $L'_{c+0.5}$ refers to $L'_c$);
in this way, the indices of the ordered lists of $S$ are sorted.
Consider the subsequence of the lists of $S$ from $L_{c+0.5}$ to the end
(including $L_{c+0.5}$).
Define $c_2$ to be the index of the first list $L_j$ such that
$\delta(\calC_{L_{j}})\leq \delta(\calC_{L})$, where $L$ is the right
neighboring list of $L_{j}$ in $S$; if such a list $L_j$ does not exist,
then we let $c_2=a$.

\begin{observation}\label{obser:60}
As $j$ increases in $[1,a]$, $x(\calC_{L_j})$ is strictly increasing
except that $x(\calC_{L_{c+0.5}})= x(\calC_{L_{c+1}})$ may be possible.
\end{observation}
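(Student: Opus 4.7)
The plan is to compute a closed-form expression for $x(\calC_{L_j})$ in each of the three regimes ($j\in[1,c]$, $j=c+0.5$, $j\in[c+1,a]$), and then to verify monotonicity piecewise, finishing with the two boundary comparisons at $j=c\to c+0.5$ and $j=c+0.5\to c+1$. This observation is essentially a bookkeeping lemma that unpacks the left-possible placements dictated by Cases II and III of the preliminary algorithm; I do not expect a genuine obstacle, only the need to keep track of three cases at once.

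First I would recall what the preliminary algorithm assigns. For $j\in[1,c]$, Case III produces $L_j$ by appending $i$, and the left-possible strategy puts $l_i$ at $x(\calC_{L_j'})$, so $x(\calC_{L_j})=x(\calC_{L_j'})+|I_i|$. For $j=c+0.5$ (the list $L_c^*$), the construction of Case III places $l_i$ at $x_i^l$ and $l_m$ at $x_i^r$, so $x(\calC_{L_{c+0.5}})=x_i^r+|I_m|=x_i^l+|I_i|+|I_m|$. For $j\in[c+1,a]$, Case II inserts $i$ right before $m$, puts $l_i$ at $x_m^l(\calC_{L_j'})$ and $l_m$ at $x_m^l(\calC_{L_j'})+|I_i|$, whence $x(\calC_{L_j})=x_m^l(\calC_{L_j'})+|I_i|+|I_m|=x(\calC_{L_j'})+|I_i|$.

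With these formulas in hand, strict monotonicity within the two blocks $[1,c]$ and $[c+1,a]$ is immediate: by the second and fourth algorithm invariants (combined with how the lists $L_1',\ldots,L_a'$ were sorted at the start of Section~\ref{sec:prune}), $x(\calC_{L_j'})$ is strictly increasing in $j$, and adding the constant $|I_i|$ preserves strict monotonicity inside each block. It remains to compare the two boundary pairs. For $j=c$ versus $j=c+0.5$, the definition of $c$ gives $x_m^l(\calC_{L_c'})<x_i^l$, so $x(\calC_{L_c'})=x_m^l(\calC_{L_c'})+|I_m|<x_i^l+|I_m|$, and adding $|I_i|$ yields $x(\calC_{L_c})<x(\calC_{L_{c+0.5}})$, strictly. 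For $j=c+0.5$ versus $j=c+1$, the definition of $c$ now gives $x_m^l(\calC_{L_{c+1}'})\geq x_i^l$, hence
\[
x(\calC_{L_{c+1}})=x_m^l(\calC_{L_{c+1}'})+|I_m|+|I_i|\geq x_i^l+|I_m|+|I_i|=x(\calC_{L_{c+0.5}}),
\]
with equality exactly when $x_m^l(\calC_{L_{c+1}'})=x_i^l$. This equality is the sole exception allowed by the statement, so assembling the four strict inequalities and this single possible tie proves the observation. The only real care needed is keeping the Case~III constructions for $L_c$ and $L_c^*$ straight; everything else is arithmetic on the placements prescribed by the left-possible strategy.
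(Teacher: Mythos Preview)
Your proposal is correct and follows essentially the same approach as the paper: compute $x(\calC_{L_j})$ in each of the three regimes as $x_m^l(\calC_{L'_j})+|I_i|+|I_m|$ (equivalently $x(\calC_{L'_j})+|I_i|$) or $x_i^l+|I_i|+|I_m|$, use the strict monotonicity of $x(\calC_{L'_j})$ for the interior of each block, and invoke the defining inequality $x_m^l(\calC_{L'_c})<x_i^l\leq x_m^l(\calC_{L'_{c+1}})$ for the two boundary comparisons. The paper's proof is somewhat terser but makes exactly the same computations.
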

\begin{proof}
Recall that $x(\calC_{L_j})$ is strictly increasing on $j\in [1,c]$ and
$j\in [c+1,a]$, respectively.
Let $l=|I_{i}|+|I_{m}|$.
Note that $x(\calC_{L_c})=x_{m}^l(\calC_{L'_c})+l$,
$x(\calC_{L^*_c})=x^l_{i}+l$, and
$x(\calC_{L_{c+1}})=x^l_{m}(\calC_{L'_{c+1}})+l$.
By our definition of $c$, $x^l_{m}(\calC_{L'_{c}})< x^l_{i}\leq
x^l_{m}(\calC_{L'_{c+1}})$.
Thus, $x(\calC_{L_c})<x(\calC_{L^*_c})\leq x(\calC_{L_{c+1}})$. This
shows that $x(\calC_{L_j})$ is strictly increasing on $j\in [1,a]$
except that $x(\calC_{L^*_c})= x(\calC_{L_{c+1}})$ may be possible.
\qed
\end{proof}

\begin{lemma}\label{lem:140}
\begin{enumerate}
\item
If $c_2<a$, then $L_{c_2}$ dominates $L_j$ for any $L_j\in S$
with $j>c_2$.
\item
If $c_2\geq c+1$ and $x(\calC_{L_{c+0.5}})= x(\calC_{L_{c+1}})$, then
$L_{c+1}$ dominates $L_{c+0.5}$.
\end{enumerate}
\end{lemma}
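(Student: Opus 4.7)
The plan is to derive both parts of Lemma~\ref{lem:140} as direct applications of Lemma~\ref{lem:prune2}. The crucial preliminary observation is that every list in the subsequence $L_{c+0.5}, L_{c+1}, \ldots, L_a$ shares the same last index $m$: the new list $L_c^* = L_{c+0.5}$ ends with $m$ by construction in Case III, while each $L_j$ with $j \in [c+1, a]$ ends with $m$ because in Case II of the preliminary algorithm the index $i$ is inserted immediately before $m$. Consequently, for any pair drawn from this subsequence the ``same-last-index'' hypothesis of Lemma~\ref{lem:prune2} is automatic, and it only remains to verify the two inequalities on $x(\cdot)$ and $\delta(\cdot)$.

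For Part~1, fix any $L_j \in S$ with $j > c_2$. The $x$-inequality $x(\calC_{L_{c_2}}) \leq x(\calC_{L_j})$ will follow immediately from Observation~\ref{obser:60}. For the $\delta$-inequality, the argument will lean on the unimodal structure of $k \mapsto \delta(\calC_{L_k})$ on the subsequence starting at $L_{c+0.5}$: this function equals $\max\{\delta(\calC_{L_k'}), d(m, \calC_{L_k})\}$, with the first term strictly decreasing in $k$ and the second term non-decreasing in $k$ (in fact strictly increasing on $[c+1, a]$). Hence at the first index where strict decrease from one list to the next fails, the defining maximum must be attained (or newly attained) by the $d(m, \cdot)$ branch, and from that point onwards the sequence can only grow. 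This forces $\delta(\calC_{L_{c_2}}) \leq \delta(\calC_{L_j})$ for every $j > c_2$, so Lemma~\ref{lem:prune2} yields the domination.

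For Part~2, the hypothesis $x(\calC_{L_{c+0.5}}) = x(\calC_{L_{c+1}})$ directly supplies the $x$-inequality for the pair $(L_{c+1}, L_{c+0.5})$. The $\delta$-inequality is then read off from $c_2 \geq c+1$: this assumption says exactly that $L_{c+0.5}$ is not the first index of the subsequence at which the strict-decrease pattern fails, i.e., $\delta(\calC_{L_{c+0.5}}) > \delta(\calC_{L_{c+1}})$. Thus all hypotheses of Lemma~\ref{lem:prune2} hold with $L_1 = L_{c+1}$ and $L_2 = L_{c+0.5}$, yielding the desired domination.

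The main obstacle I expect is making the Part~1 $\delta$-inequality rigorous: it must be verified that once the strict-decrease property first fails at $c_2$, the function $\delta(\calC_{L_k})$ stays at or above $\delta(\calC_{L_{c_2}})$ for every subsequent $k$. This will reduce to a short case split on which of $\delta(\calC_{L_k'})$ and $d(m, \calC_{L_k})$ achieves the maximum at $c_2$ and $c_2+1$, exploiting the strict monotonicity of $d(m, \calC_{L_k})$ on $[c+1, a]$ to propagate the non-decrease to all later indices.
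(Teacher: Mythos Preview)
Your proposal is correct and follows essentially the same approach as the paper: both proofs establish unimodality of $\delta(\calC_{L_j})$ on $[c+0.5,a]$ by writing it as $\max\{\delta(\calC_{L'_j}),d(m,\calC_{L_j})\}$ with the first term strictly decreasing and the second non-decreasing (via Observation~\ref{obser:60}), then invoke Lemma~\ref{lem:prune2} using the common last index $m$. Your anticipated ``case split'' for Part~1 is exactly the content of the unimodality claim the paper states directly, and your reading of $c_2\geq c+1$ in Part~2 matches the paper's one-line deduction that $\delta(\calC_{L_{c+0.5}})>\delta(\calC_{L_{c+1}})$.
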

\begin{proof}
We first show that $\delta(\calC_{L_j})$ is a unimodal function on
$j\in [c+0.5,a]$.

Recall that
for each $j\in [c+1,a]$, $\delta(\calC_{L_j})=\max\{\delta(\calC_{L'_j}),d(m,\calC_{L_j})\}$,
and
$\delta(\calC_{L^*_j})=\max\{\delta(\calC_{L'_j}),d(m,\calC_{L^*_j})\}$.
For each $j\in [c+0.5,a]$, since $m$ is the last index of $L_j$,
we have $d(m,\calC_{L_{j}})=x(\calC_{L_j})-x_m^r$. 
By Observation~\ref{obser:60},
%
$d(m,\calC_{L_{j}})$ is strictly increasing on $[c+0.5,a]$ except that
$d(m,\calC_{L_{c+0.5}}) = d(m,\calC_{L_{c+1}})$ may be possible.
Since $\delta(\calC_{L'_j})$ on $j\in
[1,a]$ is strictly decreasing, $\delta(\calC_{L_j})$ is a unimodal
function on $j\in [c+0.5,a]$.

By the definition of $c_2$, $\delta(\calC_{L_j})$
is strictly decreasing on $[c+0.5,c_2]$ and monotonically increasing
on $[c_2,a]$.

Consider any list $L_j\in S$ with $j>c_2$. By our previous discussion,
$\delta(\calC_{L_{c_2}})\leq \delta(\calC_{L_j})$ and
$x(\calC_{L_{c_2}})\leq x(\calC_{L_j})$.
Since the last indices of both $L_{c_2}$ and $L_j$ are $m$, by
Lemma~\ref{lem:prune2}, $L_{c_2}$ dominates $L_j$.

If $c_2\geq c+1$ and $x(\calC_{L_{c+0.5}})= x(\calC_{L_{c+1}})$, by the
definition of $c_2$, $\delta(\calC_{L_{c+0.5}})>
\delta(\calC_{L_{c+1}})$. Since the last indices of both $L_{c+0.5}$
and $L_{c+1}$ are $m$, by Lemma~\ref{lem:prune2}, $L_{c+1}$
dominates $L_{c+0.5}$.
The lemma thus follows. \qed
\end{proof}

Let $S_2=\{L_{c+0.5},L_{c+1},\ldots,L_{c_2}\}$ and we remove
$L_{c+0.5}$ from $S_2$ if $c_2\geq c+1$ and
$x(\calC_{L_{c+0.5}})=x(\calC_{L_{c+1}})$. In the following, we combine
$S_1$ and $S_2$ to obtain the set $\calL$. We consider the lists of $S_2$ in
order. Define $c'$ to be the index $j$ of the first list $L_j$ such that
$\delta(\calC_{L_{c_1}})>\delta(\calC_{L_j})$, and if no such list
$L_j$ exists, then let $c'=c_2+1$.

\begin{lemma}\label{lem:150}
If $L_{c'}$ is not the first list of $S_2$ or $c'=c_2+1$, then for each list $L_j$
of $S_2$ with $j<c'$, $L_{c_1}$ dominates $L_j$.
\end{lemma}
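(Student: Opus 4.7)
The plan is to invoke Lemma~\ref{lem:prune1} with $L_1 = L_{c_1}$ (whose last index is $i$) and $L_2 = L_j$ (whose last index is $m$) for each $L_j \in S_2$ with $j < c'$. Since we are in the subcase $x_m^r > x_i^r$, so $I_i \subseteq I_m$ in the input, the hypothesis $x_i^r \leq x_m^r$ of Lemma~\ref{lem:prune1} holds automatically. The geometric condition $x(\calC_{L_{c_1}}) \leq x(\calC_{L_j})$ is also immediate from Observation~\ref{obser:60}: the values $x(\calC_{L_k})$ are strictly increasing in $k \in [1,a]$ (with at most the single exception $x(\calC_{L_{c+0.5}}) = x(\calC_{L_{c+1}})$), and since $c_1 \leq c < c + 0.5 \leq j$ one gets $x(\calC_{L_{c_1}}) < x(\calC_{L_j})$. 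The whole difficulty is therefore concentrated in verifying the displacement hypothesis $\delta(\calC_{L_{c_1}}) \leq d(m,\calC_{L_j})$.

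To attack the displacement inequality, I would first record the weaker bound $\delta(\calC_{L_{c_1}}) \leq \delta(\calC_{L_j})$, which follows for $j < c'$ in $S_2$ from the definition of $c'$. Combining this with the decomposition $\delta(\calC_{L_j}) = \max\{\delta(\calC_{L'_j}),\,d(m,\calC_{L_j})\}$, when $d(m,\calC_{L_j})$ realises the maximum the target inequality is immediate. The remaining case is $\delta(\calC_{L_j}) = \delta(\calC_{L'_j}) > d(m,\calC_{L_j})$, and there I intend to exploit the monotonicity of $d(m,\calC_{L_k}) = x(\calC_{L_k}) - x_m^r$ in $k$ on $S_2$, which is non-strictly increasing by Observation~\ref{obser:60}. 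This reduces the problem to verifying the inequality at the first list of $S_2$; the hypothesis that $L_{c'}$ is not the first list of $S_2$ (or that $c' = c_2+1$) is precisely what forces $\delta(\calC_{L_{c_1}}) \leq \delta(\calC_{L_{\mathrm{first}}})$ at that first list, giving the right starting point.

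The main obstacle is the sub-case in which $\delta(\calC_{L_{\mathrm{first}}}) = \delta(\calC_{L'_{\mathrm{first}}}) > d(m, \calC_{L_{\mathrm{first}}})$, because there the already-known bound only compares $\delta(\calC_{L_{c_1}})$ with $\delta(\calC_{L'_{\mathrm{first}}})$, not with the strictly smaller quantity $d(m, \calC_{L_{\mathrm{first}}})$. My plan is to translate both sides of the desired inequality into explicit form using $d(i,\calC_{L_{c_1}}) = x(\calC_{L'_{c_1}}) - x_i^l$ (which governs the ``increasing branch'' of the unimodal sequence $\delta(\calC_{L_k})$ on $k \in [1,c]$) and the closed form $d(m, \calC_{L_k}) = x(\calC_{L_k}) - x_m^r$, then use the characterisation of $L_{c_1}$ as the minimiser of that unimodal sequence, together with the algorithm invariants carried from the previous step, to close the gap. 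Should this purely algebraic chain prove insufficient, I will fall back on a direct constructive exchange argument mirroring the proofs of Lemmas~\ref{lem:prune1} and \ref{lem:prune2}, explicitly constructing an optimal configuration whose interval order is consistent with $L_{c_1}$ by re-placing the indices before and including $h$ according to $\calC_{L_{c_1}}$ while leaving the tail intact.
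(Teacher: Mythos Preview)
Your overall framework is exactly the paper's: invoke Lemma~\ref{lem:prune1} with $(L_1,L_2)=(L_{c_1},L_j)$, dispatch $x_{i}^r\leq x_m^r$ and $x(\calC_{L_{c_1}})\leq x(\calC_{L_j})$ easily, and concentrate on $\delta(\calC_{L_{c_1}})\leq d(m,\calC_{L_j})$. Where you diverge is in how you attack that last inequality, and there you are working much harder than necessary.

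The detour through ``reduce to the first list of $S_2$'' and the contingency plans (explicit algebraic unpacking of $d(i,\calC_{L_{c_1}})$, or a bespoke exchange argument) are all unneeded. The one observation you are not using is that the \emph{old} $\delta$-values---those attached to $\calL'$ before step~$i$---are already strictly decreasing by the fourth algorithm invariant. Since every $L_j\in S_2$ arises from some $L'_j$ with index at least $c\geq c_1$ (remember $L'_{c+0.5}=L'_c$), you get immediately
\[
\delta(\calC_{L'_j})\;\leq\;\delta(\calC_{L'_{c_1}})\;\leq\;\delta(\calC_{L_{c_1}}).
\]
Now combine this with what you already have, $\delta(\calC_{L_{c_1}})\leq\delta(\calC_{L_j})=\max\{\delta(\calC_{L'_j}),\,d(m,\calC_{L_j})\}$: the first argument of the max is at most $\delta(\calC_{L_{c_1}})$, so the max being at least $\delta(\calC_{L_{c_1}})$ forces $d(m,\calC_{L_j})\geq\delta(\calC_{L_{c_1}})$ (equivalently, the paper phrases this as $\delta(\calC_{L_j})=d(m,\calC_{L_j})$). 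That is the whole argument---no reduction to the first list, no case split on which branch of the max is active at $L_{\mathrm{first}}$, and no fallback exchange construction.

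In short: your ``obstacle'' sub-case dissolves once you bound $\delta(\calC_{L'_j})$ against $\delta(\calC_{L'_{c_1}})$ rather than trying to bound $\delta(\calC_{L_{c_1}})$ against $d(m,\calC_{L_{\mathrm{first}}})$ directly.
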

\begin{proof}
We assume that $L_{c'}$ is not the first list of $S_2$ or $c'=c_2+1$.

Note that we have proved in the proof of Lemma~\ref{lem:140} that
$\delta(\calC_{L_j})$ on $j\in [c+0.5,c_2]$ is strictly decreasing.
By the
definition of $c'$, it holds that
$\delta(\calC_{L_{c_1}})\leq \delta(\calC_{L_j})$ for any $L_j\in S_2$
with $j<c'$.

Consider any list $L_j$ of $S_2$ with $j<c'$.

Recall that $\delta(\calC_{L_j})=\max\{\delta(\calC_{L'_j}),d(m,\calC_{L_j})\}$.
We claim that $\delta(\calC_{L_j})=d(m,\calC_{L_j})$. Indeed,
note that 
$\delta(\calC_{L_j'})\leq \delta(\calC_{L'_{c_1}})\leq  \delta(\calC_{L_{c_1}})$.
Since $\delta(\calC_{L_{c_1}})\leq \delta(\calC_{L_j})$, we obtain
$\delta(\calC_{L'_j}) \leq \delta(\calC_{L_j})$, and thus,
$\delta(\calC_{L_j})=d(m,\calC_{L_j})$.

Consequently, we have
$\delta(\calC_{L_{c_1}})\leq d(m,\calC_{L_j})$ and
$x(\calC_{L_{c_1}})\leq x(\calC_{L_j})$ (by Observation~\ref{obser:60}).
Further, the last index of $L_{c_1}$ is $i$ and the last index of
$L_j$ is $m$, with $x_{i}^r\leq x_m^r$.
By Lemma~\ref{lem:prune1}, $L_{c_1}$ dominates
$L_j$.


The lemma thus follows. \qed
\end{proof}

We remove from $S_2$ all lists $L_j$ with $j<c'$, and let
$\calL=S_1\cup S_2$. In general, if $c'\neq c_2+1$, then
$\calL=\{L_1,\ldots,L_{c_1},L_{c'},\ldots,L_{c_2}\}$;
otherwise, $\calL=\{L_1,\ldots,L_{c_1}\}$.

The above discussion is for the general case where $1\leq c<a$.
If $c=0$, then $L_c^*$, $c_1$ and $c'$ are all undefined, and
we have $\calL=\{L_{1},\ldots,L_{c_2}\}$.
If $c=a$, then $\calL=\{L_1,\ldots,L_{c_1}\}$ if $\delta(L_{c_1})\leq
\delta(L_c^*)$ and $\calL=\{L_1,\ldots,L_{c_1},L_c^*\}$ otherwise.

\begin{observation}
All algorithm invariants hold on $\calL$.
\end{observation}
\begin{proof}
We only consider the most general case where $1\leq c<a$ and
$c'\neq c_2+1$, since other cases can be proved in a similar but
easier way.

By Lemmas~\ref{lem:130}, \ref{lem:140}, and \ref{lem:150},
all pruned lists are redundant and thus
$\calL$ contains a canonical list of $\calI[1,i]$. The first
algorithm invariant holds.

If $x(\calC_{L_{c+0.5}})=x(\calC_{L_{c+1}})$, then $L_{c+0.5}$ and
$L_{c+1}$ cannot be both in $\calL$ by Lemma~\ref{lem:140}(2). Thus,
by Observation~\ref{obser:60}, $x(\calC_{L_j})$ strictly increases in
$[1,a]$. Recall that for any list $L_j\in \calL$,
the last index of $L_j$ is $i$ if $j\leq c_1$ and $m$ otherwise.
Recall that $I_{i}$ is contained in $I_m$ in the input. Thus, the
fourth algorithm invariant holds.

Further, our definitions of $c_1$, $c'$, and $c_2$ guarantee that
$\delta(\calC_{L})$ on all lists $L$ following their order in $\calL$
is strictly decreasing.
Therefore, the other two algorithm invariants also hold.
\qed
\end{proof}

The following lemma will be useful for the algorithm implementation.

\begin{lemma}\label{lem:case12}
For each list $L_j\in \calL$, if $L_j\neq L_c^*$, then
$x(\calC_{L_j})=x(\calC_{L'_j})+|I_{i}|$; if $L_j\not\in \{L_c^*,
L_{c_1}, L_{c_2}\}$, then $\delta(\calC_{L_j})=\delta(\calC_{L'_j})$.
\end{lemma}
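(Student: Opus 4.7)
The plan is to split into two sub-cases based on whether $L_j$ arises from Case III (so $j\leq c_1\leq c$ and $L_j$ is obtained by appending $i$ to the end of $L'_j$) or from Case II (so $j\geq c+1$ and $L_j$ is obtained by inserting $i$ immediately before $m$ in $L'_j$). In both sub-cases, both the $x$-coordinate identity and the $\delta$-identity will be read off directly from the construction in the preliminary algorithm together with Observations~\ref{obser:case2} and~\ref{obser:case3} and the monotonicity facts already established in this section.

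For the $x$-coordinate equation, suppose first that $j\leq c_1$. Because $I_i\subseteq I_m$ in the input and $x_m^l(\calC_{L'_j})\geq x_m^l$, we get $x(\calC_{L'_j})=x_m^l(\calC_{L'_j})+|I_m|\geq x_m^r\geq x_i^r\geq x_i^l$, so the left-possible placement of $i$ lands at $x_i^l(\calC_{L_j})=x(\calC_{L'_j})$, giving $x(\calC_{L_j})=x(\calC_{L'_j})+|I_i|$. Suppose next that $j\geq c+1$. Then by Case II the algorithm places $l_i$ at $x_m^l(\calC_{L'_j})$ and $l_m$ at $x_m^l(\calC_{L'_j})+|I_i|$, so $x(\calC_{L_j})=x_m^r(\calC_{L_j})=x_m^l(\calC_{L'_j})+|I_i|+|I_m|=x(\calC_{L'_j})+|I_i|$. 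This covers all $L_j\in\calL$ except $L_c^*$, proving the first statement.

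For the $\delta$-equation, the key observation is that whenever $\delta(\calC_{L_j})>\delta(\calC_{L_{j+1}})$ for consecutive surviving lists, the max in Observation~\ref{obser:case1}, \ref{obser:case2}, or~\ref{obser:case3} \emph{must} be attained by $\delta(\calC_{L'_j})$. Indeed, if instead the max at $L_j$ were the displacement term ($d(i,\calC_{L_j})$ in sub-case A or $d(m,\calC_{L_j})$ in sub-case B), then since the corresponding displacement is strictly increasing in $j$ on the relevant integer range (shown inside the main text before Lemmas~\ref{lem:130} and~\ref{lem:140}), we would have $\delta(\calC_{L_{j+1}})\geq d(\,\cdot\,,\calC_{L_{j+1}})>d(\,\cdot\,,\calC_{L_j})=\delta(\calC_{L_j})$, contradicting the strict decrease. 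Now, for any surviving $L_j\notin\{L_c^*,L_{c_1},L_{c_2}\}$, the construction of $c_1$ and $c_2$ as the unique minima of the corresponding unimodal functions (together with the exclusion of $L_c^*$) guarantees that $L_j$ lies in the strictly decreasing portion of its unimodal sequence, so $\delta(\calC_{L_j})>\delta(\calC_{L_{\text{next}}})$ holds and the observation above yields $\delta(\calC_{L_j})=\delta(\calC_{L'_j})$.

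The only subtle point is the boundary between $S_1$ and $S_2$: a surviving $L_j$ with $j<c_1$ is strictly before the minimum of the unimodal sequence on $[1,c]$, and a surviving integer $j\in[c+1,c_2-1]$ is strictly before the minimum of the unimodal sequence on $[c+0.5,a]$, so the two sub-cases together handle every surviving list in $\calL\setminus\{L_c^*,L_{c_1},L_{c_2}\}$. I expect this indexing bookkeeping to be the main (albeit routine) obstacle, since one must track the definitions of $c$, $c_1$, $c'$, and $c_2$ carefully; however no new geometric argument is needed beyond the Observations and the unimodality already used to define $c_1$ and $c_2$.
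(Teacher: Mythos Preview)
Your proposal is correct and follows essentially the same approach as the paper. The only cosmetic difference is that for the $\delta$-identity the paper compares $d(k,\calC_{L_j})$ directly with $d(k,\calC_{L_{c_1}})$ (respectively $d(k,\calC_{L_{c_2}})$) rather than with the immediate successor $L_{j+1}$, but both arguments rest on the same two ingredients you identify: the strict monotonicity of the displacement term on each sub-range and the strict decrease of $\delta(\calC_{L_j})$ on $[1,c_1]$ and on $[c+0.5,c_2]$ coming from the definitions of $c_1$ and $c_2$.
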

\begin{proof}
If $L_j\neq L_c^*$, then we have discussed before that
$x(\calC_{L_j})=x(\calC_{L'_j})+|I_{i}|$ always holds regardless of
whether the last index of $L_j$ is $i$ or $m$.

If $L_j\not\in \{L_c^*, L_{c_1}, L_{c_2}\}$, assume to the contrary
that $\delta(\calC_{L_j})\neq \delta(\calC_{L'_j})$. Then, since
$\delta(\calC_{L_j})=\max\{\delta(\calC_{L'_j}),d(k,\calC_{L_j})\}$,
we obtain that $\delta(\calC_{L_j})=d(k,\calC_{L_j})$,
where $k$ is the last index of $\calC_{L_j}$ ($k$ is $i$ if $j\leq
c$ and $m$ otherwise). Note that $j$ is either in $[1,c_1]$ or
$[c',c_2]$. We discuss the two cases below.

\begin{enumerate}
\item
If $j\in [1,c_1]$, then the last index of $L_j$ is $i$.
Since $L_j\neq L_{c_1}$, $j<c_1$ holds. We have discussed before that
$d(i,\calC_{L_j})\leq d(i,\calC_{L_{c_1}})$. Thus, we can deduce
$\delta(\calC_{L_j})=d(i,\calC_{L_j})\leq d(i,\calC_{L_{c_1}})\leq
\delta(\calC_{L_{c_1}})$. However, we have already proved that
$\delta(\calC_{L_j})>\delta(\calC_{L_{c_1}})$. Thus, we obtain
contradiction.

\item
If $j\in [c',c_2]$, the analysis is similar. In this case the
last index of $L_j$ is $m$ and $j<c_2$. Since
$j<c_2$, we have discussed before that $d(m,\calC_{L_j})\leq
d(m,\calC_{L_{c_2}})$. Thus, we can deduce
$\delta(\calC_{L_j})=d(m,\calC_{L_j})\leq d(m,\calC_{L_{c_2}})\leq
\delta(\calC_{L_{c_2}})$. However, we have already proved that
$\delta(\calC_{L_j})>\delta(\calC_{L_{c_2}})$. Thus, we obtain
contradiction.
\end{enumerate}

The lemma thus follows.
\qed
\end{proof}

\subsubsection{The Case $\calL_1'\neq \emptyset$}
\label{sec:mainsecond}

We then consider the case where $\calL_1'\neq \emptyset$. In this case, recall that
$\calL_1'=\{L_1',\ldots,L_b'\}$ and $\calL_2'=\{L_{b+1}',\ldots,L_a'\}$. For each $L'_j\in \calL'$, the last index of
$L'_j$ is $m'$ if $j\leq b$ and $m$ otherwise. Recall that
$I_{m'}\subseteq I_{m}$ in the input. As in the proof of
Lemma~\ref{lem:endindex}, there are three subcases:
$x_{i}^r\geq x_{m}^r$, $x_{m'}^r\leq x_{i}^r< x_{m}^r$, and
$x_{i}^r< x_{m'}^r$.

\paragraph{The first subcase $x_{i}^r\geq x_{m}^r$.}
In this case, for each $L_j'\in \calL'$, Case I of the preliminary
algorithm happens and $L_j$ is obtained by appending $i$ at
the end of $L_j'$. Our pruning procedure for this subcase is similar to the first subcase
in Section~\ref{sec:mainfirst}, and we briefly discuss it below.

First, for each $L_j'\in \calL'$, $x_{i}^l(\calC_{L_j})=\max\{x(\calC_{L'_j}),x_{i}^l\}$ and
$\delta(\calC_{L_j})=\max\{\delta(\calC_{L'_j}),d(i,\calC_{L_j})\}$.
We define $a_1$ and $a_2$ in exactly the same way as in the first subcase
of Section~\ref{sec:mainfirst}, and further, Lemma~\ref{lem:110} still
holds. Similarly, we let $\calL$ consist of only those lists $L_j$
with $j\in [a_1,a_2]$. By the similar analysis,
Observation~\ref{obser:40} and Lemma~\ref{lem:case11} still hold. We
omit the details.

\paragraph{The second subcase $x_{m'}^r\leq x_{i}^r< x_{m}^r$.}
In this case, we first apply the similar pruning procedure for the first
(resp., second) subcase of
Section~\ref{sec:mainfirst} to set $\calL_1'$ (resp., $\calL_2'$), and then
we combine the results. The details are given below.

For set $\calL_1'$, the last indices of all lists of $\calL_1'$ are
$m'$. Since $x_{m'}^r\leq x_{i}^r$, for each $L_j'\in \calL'_1$, Case
I of the preliminary algorithm happens and $L_j$ is obtained by appending $i$ at the end of $L_j'$. We define $a_1$ and $a_2$ in the similar way as in
the first subcase of Section~\ref{sec:mainfirst} but with respect to
the indices in $[1,b]$. In fact, since $x_{i}^r< x_{m}^r$,
it holds that $x_{i}^l\leq  x_{i}^r \leq x_{m}^r\leq
x(\calC_{L'_1})$, and consequently, $a_1=0$.
Similarly, Lemma~\ref{lem:110} also holds with respect to the indices
of $[1,b]$. Further, as $j$
increases in $[1,a_2]$, $x(\calC_{L_j})$ is strictly increasing and
$\delta(\calC_{L_j})$ is strictly decreasing. Let $S'_1=\{L_1,L_2,\ldots,L_{a_2}\}$.

For set $\calL_2'$, the last indices of all its lists are
$m$. Since $x_{i}^r<x_{m}^r$, for each list $L_j'\in \calL_2$,
either Case II or Case III of the algorithm happens. We define $c$ in the similar way
as in the second subcase of Section~\ref{sec:mainfirst} but with
respect to the indices of $[b+1,a]$.
Specifically,
if $x_{i}^l\leq x_{m}^l(\calC_{L'_{b+1}})$, then let $c=b$; otherwise,
let $c$ be the largest index $j\in [b+1,a]$ such that  $x_{i}^l >
x_{m}^l(\calC_{L'_j})$. We consider the most
general case where $b+1\leq c<a$ (other cases are similar but easier).

For each $j\in [b+1,c]$, there is also a new list $L^*_j$. Similar to
Observation~\ref{obser:40}, $\delta(\calC_{L^*_c})\leq
\delta(\calC_{L^*_j})$ for any $j\in [b+1,c]$. Hence, among the new lists
$L_j^*$ with $j=b+1,\ldots,c$, only $L_c^*$ needs to be kept.
Let $S'=\{L_{b+1},\ldots,L_c,L_c^*,L_{c+1},\ldots,L_a\}$.
We also use $L_{c+0.5}$ to refer to $L_c^*$.
We define the three indices $c_1$, $c_2$, and $c'$ in the similar way as
in the second subcase of Section~\ref{sec:mainfirst} but with respect
to the ordered lists in $S'$.
Similarly, Observation~\ref{obser:60}, Lemmas~\ref{lem:130},
\ref{lem:140}, and \ref{lem:150} all hold with respect to the lists in
$S'$. Let
$S'_2=\{L_{b+1},\ldots,L_{c_1},L_{c'},\ldots,L_{c_2}\}$.


Finally, we combine the lists of the two sets $S'_1$ and $S'_2$ to
obtain $\calL$, as follows.
Recall that $L_{a_2}$ is the last list of $S'_1$. We consider the
lists of $S_2'$ in order. Define $b'$ to be the index $j$ of the first
list $L_j$ of $S_2'$ such that
$\delta(\calC_{L_{a_2}})>\delta(\calC_{L_j})$, and if no such list
$L_j$ exists, then let $b'=c_2+1$.


\begin{lemma}\label{lem:170}
\begin{enumerate}
\item
$x(\calC_{L_{a_2}})<x(\calC_{L_{b+1}})$.
\item
If $b'>b+1$, then $L_{a_2}$ dominates $L_{j}$ for any list $L_j\in
S_2'$ with $j< b'$.
\end{enumerate}
\end{lemma}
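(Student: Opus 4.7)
\noindent\textbf{Proof plan for Lemma~\ref{lem:170}.}
My plan for part~1 is a direct calculation using the case-specific placement formulas of the preliminary algorithm applied to the two relevant source lists. The list $L_{a_2}$ comes from applying Case~I to $L'_{a_2}\in \calL_1'$ (since the last index of $L'_{a_2}$ is $m'$ and $x_{m'}^r\leq x_i^r$), so by the left-possible placement strategy $x(\calC_{L_{a_2}})=\max\{x(\calC_{L'_{a_2}}),x_i^l\}+|I_i|$. In the general situation $b+1\leq c$ being considered, the list $L_{b+1}$ comes from applying Case~III to $L'_{b+1}\in \calL_2'$ (the last index is $m$, we have $x_i^r<x_m^r$, and by the definition of $c$ also $x_i^l>x_m^l(\calC_{L'_{b+1}})$), so $x(\calC_{L_{b+1}})=x(\calC_{L'_{b+1}})+|I_i|$. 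Thus it suffices to prove $\max\{x(\calC_{L'_{a_2}}),x_i^l\}<x(\calC_{L'_{b+1}})$. The first term is smaller than $x(\calC_{L'_{b+1}})$ by the third algorithm invariant applied to $L'_{a_2}\in \calL'_1$ and $L'_{b+1}\in \calL'_2$. For the second term, $x_i^l<x_i^r<x_m^r\leq x_m^r(\calC_{L'_{b+1}})=x(\calC_{L'_{b+1}})$, where the third inequality uses the validity of $I_m$ in $\calC_{L'_{b+1}}$.

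For part~2, I would combine part~1 with Lemmas~\ref{lem:prune1} and~\ref{lem:prune2}. Let $L_j\in S_2'$ with $j<b'$. By the definition of $b'$ we already have $\delta(\calC_{L_{a_2}})\leq \delta(\calC_{L_j})$. Combining part~1 with the monotonicity of $x(\calC_{L_j})$ on $j\in[b+1,a]$ established earlier (an analogue of Observation~\ref{obser:60} for this subcase), I get $x(\calC_{L_{a_2}})<x(\calC_{L_j})$. It then remains only to invoke the appropriate pruning lemma, where the choice depends on whether the last index of $L_j$ is $i$ or $m$.

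If $L_j\in\{L_{b+1},\ldots,L_{c_1}\}$, its last index is $i$ (as Case~III appends $i$ at the end), which matches the last index of $L_{a_2}$. In this situation Lemma~\ref{lem:prune2} applies directly and yields the domination. If instead $L_j\in\{L_{c'},\ldots,L_{c_2}\}$ (or $L_j=L_{c+0.5}=L_c^*$ when applicable), then the last index of $L_j$ is $m$ while that of $L_{a_2}$ is $i$. Since $x_i^r\leq x_m^r$ holds in this subcase, I would invoke Lemma~\ref{lem:prune1}, which additionally requires $\delta(\calC_{L_{a_2}})\leq d(m,\calC_{L_j})$. To establish this, I would argue that $\delta(\calC_{L_j})=d(m,\calC_{L_j})$ by comparing to $\delta(\calC_{L'_j})$. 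Specifically, $a_2<j$ in $\calL'$ gives $\delta(\calC_{L'_{a_2}})>\delta(\calC_{L'_j})$ by the fourth algorithm invariant; combined with $\delta(\calC_{L_{a_2}})\geq \delta(\calC_{L'_{a_2}})$ (Observation~\ref{obser:case1}) and $\delta(\calC_{L_j})\geq \delta(\calC_{L_{a_2}})$, we get $\delta(\calC_{L_j})>\delta(\calC_{L'_j})$; and since $\delta(\calC_{L_j})=\max\{\delta(\calC_{L'_j}),d(m,\calC_{L_j})\}$ (Observation~\ref{obser:case2} or~\ref{obser:case3}), the max must be attained by the second argument.

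The main obstacle I anticipate is carefully tracking which lists in $S_2'$ have which last index and choosing the correct pruning lemma accordingly; the identification $\delta(\calC_{L_j})=d(m,\calC_{L_j})$ needed for applying Lemma~\ref{lem:prune1} is the only nontrivial step, and it rests on chaining together the fourth invariant on $\calL'$ with the defining equality of $\delta$ on the updated configurations. The routine corner cases (e.g., $c=b$, $c_2=c+0.5$, or $L_c^*$ being removed from $S_2'$) admit essentially the same argument with no substantive changes.
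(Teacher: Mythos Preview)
Your proposal is correct and follows essentially the same approach as the paper: for part~1 you compare $x(\calC_{L_{a_2}})$ and $x(\calC_{L_{b+1}})$ via their shared $+|I_i|$ shift from the source configurations (the paper streamlines this slightly by first noting $a_1=0$, so the $\max$ collapses), and for part~2 you split on the last index of $L_j$ and invoke Lemma~\ref{lem:prune2} or Lemma~\ref{lem:prune1} after establishing $\delta(\calC_{L_j})=d(m,\calC_{L_j})$ by the same chain of inequalities through $\delta(\calC_{L'_{a_2}})>\delta(\calC_{L'_j})$.
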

\begin{proof}
For $L_{a_2}$, since $a_1=0$,
we have $x(\calC_{L_{a_2}})=x(\calC_{L'_{a_2}})+|I_{i}|$.
For $L_{b+1}$, it holds that
$x(\calC_{L_{b+1}})=x(\calC_{L'_{b+1}})+|I_{i}|$.
Since  $x(\calC_{L'_{a_2}})< x(\calC_{L'_{b+1}})$,
we have $x(\calC_{L_{a_2}})<x(\calC_{L_{b+1}})$. This proves the first
statement of the lemma.

Next we prove the second lemma statement. Assume
$b'>b+1$. Consider any list $L_j\in S_2'$ with $j<b'$. In the following, we show that
$L_{a_2}$ dominates $L_j$.

Recall that the values $\delta(L)$ of the lists $L$ of $S'_2$ are strictly
decreasing following their order in $S'_2$.
By the definition of $b'$,
$\delta(\calC_{L_{a_2}})\leq \delta(\calC_{L_{j}})$.
Note that the last index of $L_j$ can be
either $i$ or $m$, and the last index of $L_{a_2}$ is $i$.

If the last index of $L_j$ is $i$, then since
$\delta(\calC_{L_{a_2}})\leq \delta(\calC_{L_{j}})$ and
$x(\calC_{L_{a_2}})<x(\calC_{L_{b+1}})\leq x(\calC_{L_{j}})$, by Lemma~\ref{lem:prune2},
$L_{a_2}$ dominates $L_j$.

If the last index of $L_j$ is $m$, then
$\delta(\calC_{L_{j}})=\max\{\delta(\calC_{L'_{j}}),
d(m,\calC_{L_{j}})\}$.
Recall that $\delta(\calC_{L_{a_2}})=\max\{\delta(\calC_{L'_{a_2}}),
d(i,\calC_{L_{a_2}})\}$ and $\delta(\calC_{L'_{a_2}})>\delta(\calC_{L'_{j}})$.
Due to $\delta(\calC_{L_{a_2}})\leq \delta(\calC_{L_{j}})$, we can
deduce
$\delta(\calC_{L'_{j}})<\delta(\calC_{L'_{a_2}})\leq
\delta(\calC_{L_{a_2}})\leq \delta(\calC_{L_{j}})$.
Therefore, $\delta(\calC_{L_{a_2}})\leq \delta(\calC_{L_j})=d(m,\calC_{L_{j}})$.
Again, $x(\calC_{L_{a_2}})<x(\calC_{L_{b+1}})\leq x(\calC_{L_{j}})$.
Since the last index of $L_{a_2}$ is $i$ and that of
$L_{j}$ is $m$, with $I_{i}\subseteq I_{m}$ in the input,
by Lemma~\ref{lem:prune1}, $L_{a_2}$ dominates $L_j$.
\qed
\end{proof}

By Lemma~\ref{lem:170}, we let $\calL$ be the union of the lists of
$S'_1$ and the lists of $S'_2$ after and including $b'$ (if $b'=c_2+1$,
then $\calL=S_1'$).

\begin{observation}
All algorithm invariants hold on $\calL$.
\end{observation}
\begin{proof}
As the analysis in Section~\ref{sec:mainfirst}, $S'_1\cup S'_2$ must
contain a canonical list of $\calI[1,i]$. In light of
Lemma~\ref{lem:170}(2), $\calL$ also contains a canonical list.

Also, the values of $x(\calC_{L})$ for all lists $L$ of $S'_1$ (resp.,
$S'_2$) are strictly increasing. By Lemma~\ref{lem:170}(1), the values of
$x(\calC_{L})$ for all lists $L$ of $\calL$ are also strictly increasing.
On the other hand, the values of $\delta(\calC_{L})$ for all lists $L$ of $S'_1$ (resp.,
$S'_2$) are strictly decreasing. The definition of $b'$ makes sure
that the values of $\delta(\calC_{L})$ for all lists $L$ of $\calL$ must be  strictly
decreasing. Also, note that the lists of $\calL$ whose last indices
are $i$ are all before the lists whose last indices are $m$.

Hence, all algorithm invariants hold on $\calL$.
\qed
\end{proof}

The following lemma will be useful for the algorithm implementation.

\begin{lemma}\label{lem:case22}
For each list $L_j\in \calL$, if $L_j\neq L_c^*$, then
$x(\calC_{L_j})=x(\calC_{L'_j})+|I_{i}|$; if $L_j\not\in \{L_{a_2},L_c^*,
L_{c_1}, L_{c_2}\}$, then $\delta(\calC_{L_j})=\delta(\calC_{L'_j})$.
\end{lemma}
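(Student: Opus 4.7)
The plan is to mirror the proof of Lemma~\ref{lem:case12}, carrying out a case analysis based on which regime of the preliminary algorithm produced each $L_j$ in the second subcase of Section~\ref{sec:mainsecond} (where $x_{m'}^r\leq x_i^r<x_m^r$ and $\calL_1'\neq\emptyset$). The only structural difference from Lemma~\ref{lem:case12} is the existence of a new group of lists coming from $\calL_1'$, which is precisely why $L_{a_2}$ is added to the exception set in the second statement.

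For the first claim, I would treat the three regimes in which a list $L_j$ with $L_j\neq L_c^*$ can arise. If $j\in[1,b]$ (so $L_j'\in\calL_1'$), then Case~I applies and we have $x_i^l(\calC_{L_j})=\max\{x(\calC_{L'_j}),x_i^l\}$; as noted in the text, here $a_1=0$ so $x_i^l\leq x(\calC_{L'_j})$, and thus $x(\calC_{L_j})=x(\calC_{L'_j})+|I_i|$. If $j\in[b+1,c]$, then Case~III applies to produce $L_j$ by appending $i$, and the left-possible strategy places $l_i$ at $x_m^r(\calC_{L'_j})=x(\calC_{L'_j})$, giving the same identity. If $j\in[c+1,a]$, then Case~II inserts $i$ right before $m$ and shifts $I_m$ rightwards by exactly $|I_i|$, again yielding $x(\calC_{L_j})=x(\calC_{L'_j})+|I_i|$. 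This exhausts all $L_j\in\calL$ with $L_j\neq L_c^*$.

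For the second claim, I would argue by contradiction exactly as in Lemma~\ref{lem:case12}. By Observations~\ref{obser:case1}, \ref{obser:case2}, and \ref{obser:case3}, we always have $\delta(\calC_{L_j})=\max\{\delta(\calC_{L'_j}),d(k,\calC_{L_j})\}$, where $k\in\{i,m\}$ is the last index of $L_j$. Assume $\delta(\calC_{L_j})\neq\delta(\calC_{L'_j})$; then $\delta(\calC_{L_j})=d(k,\calC_{L_j})>\delta(\calC_{L'_j})$. I would now split into three subcases matching where $L_j$ sits in $\calL=S_1'\cup\{L\in S_2':j\geq b'\}$. (i) If $L_j\in S_1'$ with $j<a_2$, then $k=i$, and on $[1,b]$ the quantity $d(i,\calC_{L_j})=x(\calC_{L'_j})-x_i^l$ is strictly increasing in $j$, so $\delta(\calC_{L_j})=d(i,\calC_{L_j})<d(i,\calC_{L_{a_2}})\leq\delta(\calC_{L_{a_2}})$, contradicting the monotonicity invariant $\delta(\calC_{L_j})>\delta(\calC_{L_{a_2}})$. (ii) If $L_j$ lies in $S_2'$ with last index $i$ (so $j<c_1$), the same argument against $L_{c_1}$ gives a contradiction. (iii) If $L_j$ lies in $S_2'$ with last index $m$ (so $j<c_2$), the analogous argument against $L_{c_2}$ yields a contradiction, because $d(m,\calC_{L_j})=x(\calC_{L_j})-x_m^r$ is strictly increasing in $j$ on that range by Observation~\ref{obser:60}.

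The main obstacle is purely bookkeeping: correctly tracking, for each index $j\in\calL$, which of $a_2,c_1,c_2$ is the relevant ``peak'' list against which to derive a contradiction, and verifying that the monotonicity of $d(k,\calC_{L_\cdot})$ holds within each range (which follows from the monotonicity of $x(\calC_{L'_\cdot})$ and the first statement of the lemma). No new exchange-argument or feasibility analysis is required; all the heavy lifting has already been done in Section~\ref{sec:mainsecond} and in the proof of Lemma~\ref{lem:case12}.
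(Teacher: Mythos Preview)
Your proposal is correct and follows essentially the same approach as the paper's own proof, which simply says to apply the analysis of Lemma~\ref{lem:case11} for $j\leq b$ and Lemma~\ref{lem:case12} for $j>b$ and omits the details. You have filled in those details accurately, including the correct identification of the three regimes for the first claim and the appropriate peak list ($L_{a_2}$, $L_{c_1}$, or $L_{c_2}$) against which to derive the contradiction in each subcase of the second claim.
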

\begin{proof}
Consider any list $L_j\in \calL$.

If $L_j\neq L_c^*$, then since $a_1=0$,
$x(\calC_{L_j})=x(\calC_{L'_j})+|I_{i}|$ always holds regardless
whether the last index of $L_j$ is $i$ or $m$.

Assume $L_j\not\in \{L_{a_2}, L_c^*, L_{c_1}, L_{c_2}\}$. To prove
that $\delta(\calC_{L_j})=\delta(\calC_{L'_j})$, if $j\leq
b$, then we can apply the analysis in the proof of
Lemma~\ref{lem:case11}; otherwise, we can apply the analysis in the
proof of Lemma~\ref{lem:case12}. We omit the details.
\qed
\end{proof}

\paragraph{The third subcase $x_{i}^r< x_{m'}^r$.}

In this case, for each list $L_j'\in \calL'$, as analyzed in the proof
of Lemma~\ref{lem:endindex}, only Case II of our preliminary algorithm
happens, and thus $L_j$ is obtained from $L_j'$ by inserting $i$
into $L_j'$ right before the last index. Further, it holds that
$x(\calC_{L_j})=x(\calC_{L'_j})+|I_{i}|$ regardless of whether the last index of $L_j'$ is
$m$ or $m'$.
Since $x(\calC_{L'_j})$ is strictly
increasing on $j\in [1,a]$, $x(\calC_{L_j})$ is also
strictly increasing on $j\in [1,a]$.

Consider any list $L'_j\in \calL'$ with $j\leq b$. Recall that
the last index of $L'_j$ is $m'$. By Observation~\ref{obser:case2},
$\delta(\calC_{L_j})=\max\{\delta(\calC_{L'_j}),d(m',\calC_{L_j})\}$,
and
$d(m',\calC_{L_j})=x_{m'}^r(\calC_{L_j})-x_{m'}^r=x(\calC_{L_j})-x_{m'}^r$.
Thus,  $d(m',\calC_{L_j})$ strictly
increases on $j\in [1,b]$. Since  $\delta(\calC_{L'_j})$ strictly decreases on
$j\in [1,b]$, $\delta(\calC_{L_j})$ is a unimodal function on
$j\in [1,b]$ (i.e., it first strictly decreases and then strictly
increases). If $\delta(\calC_{L_{1}})\leq \delta(\calC_{L_2})$, then
let $e_1=1$; otherwise, define $e_1$ to be the largest index $j\in
[2,b]$ such that
$\delta(\calC_{L_{j-1}})>\delta(\calC_{L_j})$. Hence,
$\delta(\calC_{L_j})$ is strictly decreasing on $j\in [1,e_1]$.

\begin{lemma}\label{lem:190}
If $e_1<b$, then $L_{e_1}$ dominates $L_j$ for any $j\in [e_1+1,b]$.
\end{lemma}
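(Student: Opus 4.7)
The plan is to apply Lemma~\ref{lem:prune2} directly, since both $L_{e_1}$ and $L_j$ (for any $j \in [e_1+1,b]$) share the same last index $m'$. To invoke that lemma we must verify the two hypotheses $\delta(\calC_{L_{e_1}}) \leq \delta(\calC_{L_j})$ and $x(\calC_{L_{e_1}}) \leq x(\calC_{L_j})$.

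The second inequality is the easy one: we have already observed (at the beginning of the analysis of this third subcase) that $x(\calC_{L_k}) = x(\calC_{L'_k}) + |I_i|$ for every $k \in [1,a]$, and that $x(\calC_{L'_k})$ is strictly increasing in $k$ by the third algorithm invariant on $\calL'$. Since $j > e_1$, this gives $x(\calC_{L_{e_1}}) < x(\calC_{L_j})$.

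For the first inequality I would invoke the unimodality of $\delta(\calC_{L_k})$ on $k \in [1,b]$, which has just been established from the facts that $d(m',\calC_{L_k})$ is strictly increasing and $\delta(\calC_{L'_k})$ is strictly decreasing on $[1,b]$. By the definition of $e_1$ as the largest index such that the sequence is still strictly decreasing up to $e_1$, we have $\delta(\calC_{L_{e_1}}) \leq \delta(\calC_{L_{e_1+1}})$; combined with the unimodality, the sequence is monotonically non-decreasing on $[e_1, b]$, and therefore $\delta(\calC_{L_{e_1}}) \leq \delta(\calC_{L_j})$ for every $j \in [e_1+1, b]$.

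With both hypotheses verified and the last indices of $L_{e_1}$ and $L_j$ equal (both $m'$), Lemma~\ref{lem:prune2} yields that $L_{e_1}$ dominates $L_j$, completing the proof. There is no serious obstacle here—the bulk of the work has already been done in setting up the monotonicity properties of $x(\calC_{L_k})$ and $d(m',\calC_{L_k})$ and in proving Lemma~\ref{lem:prune2}; the present lemma is simply a clean application of that machinery to the newly-defined pivot index $e_1$.
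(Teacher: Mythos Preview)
Your proposal is correct and follows essentially the same approach as the paper: verify $x(\calC_{L_{e_1}}) < x(\calC_{L_j})$ from the strict monotonicity of the $x$-values, verify $\delta(\calC_{L_{e_1}}) \leq \delta(\calC_{L_j})$ from the unimodality of $\delta(\calC_{L_k})$ on $[1,b]$ together with the definition of $e_1$, and then apply Lemma~\ref{lem:prune2} since both lists end in $m'$. (One minor remark: the strict increase of $x(\calC_{L'_k})$ is really the way the lists are indexed combined with the second invariant, not the third invariant per se, but this does not affect the argument.)
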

\begin{proof}
Assume $e_1<b$ and let $j$ be any index in $[e_1+1,b]$. By our
definition of $e_1$ and since $\delta(\calC_{L_j})$ is unimodal on
$[1,b]$, it holds that $\delta(\calC_{L_{e_1}})\leq \delta(\calC_{L_j})$. Recall that
$x(\calC_{L_{e_1}})<x(\calC_{L_j})$. Since the last indices of both
$L_{e_1}$ and $L_j$ are $m'$, by Lemma~\ref{lem:prune2},
$L_{e_1}$ dominates $L_j$. \qed
\end{proof}

Due to Lemma~\ref{lem:190}, let $S_1=\{L_1,L_2,\ldots,L_{e_1}\}$.

Consider any list $L'_j\in \calL'$ with $j> b$. Recall that
the last index of $L'_j$ is $m$. Similarly as above,
$\delta(\calC_{L_j})=\max\{\delta(\calC_{L'_j}),d(m,\calC_{L_j})\}$
and $d(m,\calC_{L_j})=x(\calC_{L_j})-x_{m}^r$.
Similarly, $\delta(\calC_{L_j})$ is a unimodal function on
$j\in [b+1,a]$. If $\delta(\calC_{L_{b+1}})\leq \delta(\calC_{L_{b+2}})$, then we let $e_2=b+1$;
otherwise, define $e_2$ to be the largest index $j\in [b+1,a]$ such that
$\delta(\calC_{L_{j-1}})>\delta(\calC_{L_j})$. Hence,
$\delta(\calC_{L_j})$ is strictly decreasing on $j\in [b+1,e_2]$.
By a similar proof as Lemma~\ref{lem:190}, we can show that
if $e_2<a$, then $L_{e_2}$ dominates $L_j$ for any $j\in [e_2+1,a]$.
Let $S_2=\{L_{b+1},L_{b+2},\ldots,L_{e_2}\}$.

We finally combine $S_1$ and $S_2$ to obtain $\calL$ as follows.
Define $b'$ to be the smallest index $j$ of $[b+1,e_2]$ such that
$\delta(\calC_{L_{e_1}})> \delta(\calC_{L_{j}})$, and if no such index
exists, then let $b'=e_2+1$.

\begin{lemma}\label{lem:200}
If $b'>b+1$, then $L_{e_1}$ dominates $L_{j}$ of $S_2$
for any $j\in [b+1,b'-1]$.
\end{lemma}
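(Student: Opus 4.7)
The plan is to apply Lemma~\ref{lem:prune1}, since in this third subcase $L_{e_1}$ and $L_j$ have different last indices ($m'$ and $m$, respectively) with $I_{m'}\subseteq I_m$ in the input, hence $x_{m'}^r\leq x_m^r$. So I would fix any $j\in [b+1,b'-1]$ and verify the two numerical hypotheses of Lemma~\ref{lem:prune1}: that $x(\calC_{L_{e_1}})\leq x(\calC_{L_j})$ and that $\delta(\calC_{L_{e_1}})\leq d(m,\calC_{L_j})$.

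The $x$-coordinate inequality is immediate: it was already observed for this subcase that $x(\calC_{L_k})=x(\calC_{L'_k})+|I_i|$ for every $k\in [1,a]$, and since $x(\calC_{L'_k})$ is strictly increasing in $k$ (algorithm invariant) and $e_1\leq b<b+1\leq j$, we get $x(\calC_{L_{e_1}})<x(\calC_{L_j})$.

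The harder condition is $\delta(\calC_{L_{e_1}})\leq d(m,\calC_{L_j})$; here the main obstacle is that the natural bound gives only $\delta(\calC_{L_{e_1}})\leq \delta(\calC_{L_j})$ and $\delta(\calC_{L_j})$ is a max of two things, so we must argue that $d(m,\calC_{L_j})$ is in fact the one attaining the max. The plan is as follows. By Observation~\ref{obser:case2} applied to Case~II, we have $\delta(\calC_{L_j})=\max\{\delta(\calC_{L'_j}),\,d(m,\calC_{L_j})\}$ and $\delta(\calC_{L_{e_1}})=\max\{\delta(\calC_{L'_{e_1}}),\,d(m',\calC_{L_{e_1}})\}$, so in particular $\delta(\calC_{L_{e_1}})\geq \delta(\calC_{L'_{e_1}})$. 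Since $e_1<j$, the algorithm invariant on $\calL'$ gives $\delta(\calC_{L'_{e_1}})>\delta(\calC_{L'_j})$. Combining these with the assumption $\delta(\calC_{L_{e_1}})\leq \delta(\calC_{L_j})$ (which follows from the definition of $b'$ and the fact, proved as for Lemma~\ref{lem:190}, that $\delta(\calC_{L_k})$ is strictly decreasing on $k\in [b+1,e_2]$), we obtain
\[
\delta(\calC_{L'_j})<\delta(\calC_{L'_{e_1}})\leq \delta(\calC_{L_{e_1}})\leq \delta(\calC_{L_j}).
\]
Thus $\delta(\calC_{L'_j})<\delta(\calC_{L_j})$, forcing $\delta(\calC_{L_j})=d(m,\calC_{L_j})$. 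Consequently $\delta(\calC_{L_{e_1}})\leq d(m,\calC_{L_j})$, as required.

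With all hypotheses of Lemma~\ref{lem:prune1} verified, the lemma yields that $L_{e_1}$ dominates $L_j$, completing the proof. The essential non-routine step is the chain of inequalities that uses the invariant on $\calL'$ plus the Observation~\ref{obser:case2} identity to promote the bound $\delta(\calC_{L_{e_1}})\leq \delta(\calC_{L_j})$ into the stronger bound $\delta(\calC_{L_{e_1}})\leq d(m,\calC_{L_j})$ that Lemma~\ref{lem:prune1} requires.
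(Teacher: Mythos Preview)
Your proposal is correct and follows essentially the same approach as the paper's proof: you apply Lemma~\ref{lem:prune1} after establishing $x(\calC_{L_{e_1}})<x(\calC_{L_j})$ and promoting $\delta(\calC_{L_{e_1}})\leq \delta(\calC_{L_j})$ to $\delta(\calC_{L_{e_1}})\leq d(m,\calC_{L_j})$ via the chain $\delta(\calC_{L'_j})<\delta(\calC_{L'_{e_1}})\leq \delta(\calC_{L_{e_1}})\leq \delta(\calC_{L_j})$, exactly as the paper does.
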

\begin{proof}
Assume $b'>b+1$ and let $j$ be any index in $[b+1,b'-1]$.
Since $\delta(\calC_{L_j})$ is strictly decreasing on $j\in
[b+1,e_2]$, by the definition of $b'$,
$\delta(\calC_{L_{e_1}})\leq \delta(\calC_{L_j})$.

Recall that
$\delta(\calC_{L_j})=\max\{\delta(\calC_{L'_j}),d(m,\calC_{L_j})\}$,
$\delta(\calC_{L_{e_1}})=\max\{\delta(\calC_{L'_{e_1}}),d(m',\calC_{L_{e_1}})\}$,
and $\delta(\calC_{L'_j})<\delta(\calC_{L'_{e_1}})$.
Hence, we obtain $\delta(\calC_{L'_j})<\delta(\calC_{L'_{e_1}})\leq
\delta(\calC_{L_j})$, and thus $\delta(\calC_{L_j})=d(m,\calC_{L_{j}})$.
Since $\delta(\calC_{L_{e_1}})\leq \delta(\calC_{L_j})$,
$\delta(\calC_{L_{e_1}})\leq d(m,\calC_{L_{j}})$.
Further, recall that $x(\calC_{L_{e_1}})< x(\calC_{L_j})$.
Then, Lemma~\ref{lem:prune1} applies since the last index of $L_{e_1}$
is $m'$ and that of $L_{j}$ is $m$, with $x_{m'}^r\leq x_{m}^r$.
By Lemma~\ref{lem:prune1}, $L_{e_1}$ dominates $L_{j}$.
\qed
\end{proof}

In light of Lemma~\ref{lem:200}, we let $\calL=S_1\cup
\{L_{b'},\ldots,L_{e_2}\}$ if $b'\neq e_2+1$ and $\calL=S_1$ otherwise.
By similar analysis as before, we can show that all
algorithm invariants hold on $\calL$, and we omit the details.
The following lemma will be useful for the algorithm implementation.

\begin{lemma}\label{lem:case23}
For each list $L_j\in \calL$, $x(\calC_{L_j})=x(\calC_{L'_j})+|I_{i}|$;
if $L_j\not\in \{L_{e_1}, L_{e_2}\}$, then $\delta(\calC_{L_j})=\delta(\calC_{L'_j})$.
\end{lemma}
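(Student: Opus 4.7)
The first claim is essentially immediate from the description of this subcase. Since $x_i^r < x_{m'}^r \le x_m^r$, Case II of the preliminary algorithm applies to every list $L_j'\in\calL'$, so $L_j$ is obtained by inserting $i$ right before the last index (which is $m'$ if $j\le b$ and $m$ otherwise). By the rule of Case II, the last index of $L_j'$ is shifted rightwards by exactly $|I_i|$, while the inserted interval $i$ is placed to its left. Hence the rightmost endpoint in $\calC_{L_j}$ lies exactly $|I_i|$ to the right of the rightmost endpoint in $\calC_{L_j'}$, giving $x(\calC_{L_j})=x(\calC_{L_j'})+|I_i|$ for every $L_j\in\calL$. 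This step requires no new argument; it is just the formula that was already recorded at the start of the subcase.

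For the second claim, my plan is a contradiction argument using the monotonicity properties established just above. Fix $L_j\in\calL$ with $L_j\notin\{L_{e_1},L_{e_2}\}$, and let $k$ be the last index of $L_j$ (so $k=m'$ when $j\le b$ and $k=m$ when $j>b$). By Observation~\ref{obser:case2}, $\delta(\calC_{L_j})=\max\{\delta(\calC_{L_j'}),d(k,\calC_{L_j})\}$, so if $\delta(\calC_{L_j})\ne\delta(\calC_{L_j'})$, then necessarily $\delta(\calC_{L_j})=d(k,\calC_{L_j})$. I will derive a contradiction from this.

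Since $\calL=S_1\cup\{L_{b'},\ldots,L_{e_2}\}$, such a $j$ must lie in $[1,e_1-1]$ or in $[b',e_2-1]$. The two ranges are handled symmetrically. Consider first $j\in[1,e_1-1]$, so $k=m'$. Using $d(m',\calC_{L_j})=x(\calC_{L_j})-x_{m'}^r$ together with the first claim, $d(m',\calC_{L_j})$ is strictly increasing in $j$, and therefore $d(m',\calC_{L_j})<d(m',\calC_{L_{e_1}})\le \delta(\calC_{L_{e_1}})$. On the other hand, by the definition of $e_1$, $\delta(\calC_{L_j})$ is strictly decreasing on $[1,e_1]$, so $\delta(\calC_{L_j})>\delta(\calC_{L_{e_1}})$. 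Combining with the assumption $\delta(\calC_{L_j})=d(m',\calC_{L_j})$ yields $\delta(\calC_{L_{e_1}})<\delta(\calC_{L_{e_1}})$, a contradiction. The case $j\in[b',e_2-1]$ is exactly analogous, using $d(m,\calC_{L_j})=x(\calC_{L_j})-x_m^r$ and the strict decrease of $\delta(\calC_{L_j})$ on $[b+1,e_2]$.

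The only potential pitfall is making sure $L_{e_1}$ and $L_{e_2}$ actually belong to $\calL$ (so that the comparison to them is meaningful); this is guaranteed by the construction $S_1=\{L_1,\ldots,L_{e_1}\}$ and $\calL\supseteq\{L_{b'},\ldots,L_{e_2}\}$ when the latter set is nonempty. Nothing further is needed.
\qed
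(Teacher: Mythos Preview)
Your proposal is correct and takes essentially the same approach as the paper. The paper's own proof simply notes that the first claim was already established in the subcase description and that the second claim follows ``by the similar analysis as in Lemma~\ref{lem:case12}'' with details omitted; you have written out precisely that contradiction argument (the monotonicity of $d(k,\calC_{L_j})$ versus the strict decrease of $\delta(\calC_{L_j})$ on $[1,e_1]$ and $[b+1,e_2]$), so your write-up is in fact more complete than the paper's.
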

\begin{proof}
We have shown that $x(\calC_{L_j})=x(\calC_{L'_j})+|I_{i}|$ for
any $j\in [1,a]$.

Consider any list $L_j\in \calL$ and $j\not\in\{e_1,e_2\}$. By the
similar analysis as in Lemma~\ref{lem:case12}, we can show that
$\delta(\calC_{L_j})=\delta(\calC_{L'_j})$. The details are omitted.
\qed
\end{proof}

\subsection{The Algorithm Implementation}
\label{sec:imple}

In this section, we implement our pruning algorithm described in
Section~\ref{sec:prune} in $O(n\log n)$ time and $O(n)$ space.
We first show how to compute the optimal value $\delta_{opt}$
and then show how to construct an optimal list $L_{opt}$ in
Section~\ref{sec:optlist}.

Since $\calL$ may have $\Theta(n)$ lists and each list may have
$\Theta(n)$ intervals, to avoid $\Omega(n^2)$ time, the key idea
is to maintain the lists of $\calL$ implicitly. We show that it is
sufficient to maintain the ``$x$-values'' $x(\calC_{L})$ and the
``$\delta$-values'' $\delta(\calC_{L})$ for all lists $L$ of $\calL$,
as well as the list index $b$ and the interval indices $m'$ and $m$.
To this end, and in particular, to update the $x$-values and the $\delta$-values after each interval $I_{i}$ is processed, our implementation heavily relies  on
Lemmas~\ref{lem:case11}, \ref{lem:case12}, \ref{lem:case22}, and
\ref{lem:case23}. Intuitively, these lemmas guarantee that although the
$x$-values of all lists of $\calL$ need to change, all but a
constant number of them increase by the same amount, which can be
updated implicitly in constant time; similarly, only a constant number
of $\delta$-values need to be updated. The details are given below.

Let $\calL=\{L_1,L_2,\ldots,L_a\}$ such that $x(\calC_{L_j})$ strictly
increases on $j\in [1,a]$, and thus, $\delta(\calC_{L_j})$ strictly
decreases on $j\in [1,a]$ by the algorithm invariants.

We maintain a balanced binary search tree $T$ whose leaves from left to
right correspond to the ordered lists of $\calL$. Let $v_1,\ldots,v_a$
be the leaves of $T$ from left to right, and thus, $v_j$ corresponds
to $L_j$ for each $j\in [1,a]$. For each $j\in [1,a]$, $v_j$ stores
a {\em $\delta$-value} $\delta(v_j)$ that is equal to
$\delta(\calC_{L_j})$, and $v_j$ stores another {\em
$x$-value} $x(v_j)$ that is equal to $x(\calC_{L_j})-R$, where $R$ is
a {\em global shift} value maintained by the algorithm.

In addition, we maintain a pointer $p_b$ pointing to the leaf $v(b)$
of $T$ if $b\neq 0$ and $p_b=null$ if $b=0$.  We also maintain the
interval indices $m$ and $m'$. Again, if $p_b=null$, then $m'$ is
undefined.

Initially, after $I_1$ is processed, $\calL$ consists of the single
list $L=\{1\}$. We set $R=0$, $m=1$, and $p_b=null$. The tree $T$ consists of only one
leaf $v_1$ with $\delta(v_1)=0$ and $x(v_1)=x_1^r$.

In general, we assume $I_{i-1}$ has been processed and $T$, $m$, $m'$,
$p_b$, and $R$ have been correctly maintained. In the following, we
show how to update them for processing $I_{i}$. In particular, we
show that processing $I_{i}$ takes $O((k+1)\log n)$ time, where $k$
is the number of lists removed from $\calL$ during processing
$I_{i}$. Since our algorithm will generate at most $n$ new lists for
$\calL$ and each list will be removed from $\calL$ at most once, the total time of
the algorithm is $O(n\log n)$.

As in Section~\ref{sec:prune}, we let $\calL'=\{L_1',L_2',\ldots,L_a'\}$ denote the original
set $\calL$ before $I_{i}$ is processed.  Again, if $b\neq 0$, then
$\calL_1'=\{L_1',\ldots,L_b'\}$ and $\calL_2'=\{L_{b+1}',\ldots,L_a'\}$.
We consider the five subcases discussed in Section~\ref{sec:prune}.

\subsubsection{The Case $\calL_1'=\emptyset$}
\label{sec:implemainfirst}

In this case, the last indices of all lists of $\calL'$ are $m$.

\paragraph{The first subcase $x_{m}^r\leq x_{i}^r$.}
In this case, in general we have $\calL=\{L_j\ |\ a_1\leq j\leq a_2\}$. We first find $a_1$ and remove the lists $L_1,\ldots,L_{a_1-1}$ if $a_1>1$ as follows.

Starting from the leftmost leaf $v_1$ of $T$, if $x(v_1)+R$ (which is
equal to $x(\calC_{L_1'})$) is larger than $x_{i}^l$, then $a_1=0$ and
we are done. Otherwise, we consider the next leaf $v_2$. In general,
suppose we are considering leaf $v_j$. If $x(v_j)+R> x_{i}^l$,
then we stop with $a_1=j-1$. Otherwise, 
we remove leaf $v_{j-1}$ (not $v_j$) from $T$ and continue to consider
the next leaf $v_{j+1}$ if $j\neq a$ (if $j=a$, then we stop with $a_1=a$).

If $a_1\neq 0$, then the above has found the leaf $v_{a_1}$.
In addition, we update $x(v_{a_1})=x_{i}^r-R-|I_i|$ (we have minus $|I_i|$ here because
later we will increase $R$ by $|I_i|$).

Next we find $a_2$ and remove the lists $L_{a_2+1},\ldots,L_a$ (by
removing the corresponding leaves from $T$) if $a_2<a$, as follows.
Recall that for each $j\in [a_1+1,a]$,
$\delta(\calC_{L_j})=\max\{\delta(\calC_{L'_j}),d(i,\calC_{L_j})\}$,
with $\delta(\calC_{L'_j})=\delta(v_j)$ and
$d(i,\calC_{L_j})=x_{i}^l(\calC_{L_j})-x_{i}^l=
x(\calC_{L'_j})-x_{i}^l=x(v_j)+R-x_{i}^l$.  Hence, we have
$\delta(\calC_{L_j})=\max\{\delta(v_j),x(v_j)+R-x_{i}^l\}$.

If $a_1=a$, then we have $a_2=a_1$ and we are done. Otherwise we do the following.
Starting from the rightmost leaf $v_a$ of $T$, we check whether
$\max\{\delta(v_{a-1}),x(v_{a-1})+R-x_{i}^l\}\leq
\max\{\delta(v_a),x(v_a)+R-x_{i}^l\}$. If yes, we remove $v_a$ from
$T$ and continue to consider $v_{a-1}$. In general, suppose we are
considering $v_j$. If $j=a_1$, then we stop with $a_2=a_1$. Otherwise,
we check whether $\max\{\delta(v_{j-1}),x(v_{j-1})+R-x_{i}^l\}\leq
\max\{\delta(v_j),x(v_j)+R-x_{i}^l\}$. If yes, we remove $v_j$ from
$T$ and proceed on $v_{j-1}$. Otherwise, we stop with $a_2=j$.

Suppose the above procedure finds leaf $v_j$ with $a_2=j$. We further
update  $\delta(v_j)=\max\{\delta(v_j),x(v_j)+R-x_{i}^l\}$. By
Lemma~\ref{lem:case11}, we do not need to update other
$\delta$-values.

The above has updated the tree $T$. In addition, we update
$R=R+|I_{i}|$, which actually implicitly updates all $x$-values
by Lemma~\ref{lem:case11}. Finally, we update $m=i$ since the last
indices of all updated lists of $\calL$ are now $i$.

This finishes our algorithm for processing $I_{i}$.
Clearly, the total time is $O((k+1)\log n)$ since removing each leaf
of $T$ takes $O(\log n)$ time, where $k$ is the number of leaves
that have been removed from $T$.

\paragraph{The second subcase $x_{m}^r> x_{i}^r$.}

In this case, roughly speaking, we should compute the set $\calL=\{L_1,\ldots,L_{c_1},L_{c'},L_{c'+1},\ldots,L_{c_2}\}$.

We first compute the index $c$, i.e., find the leaf $v_c$ of $T$. This can be done
by searching $T$ in $O(\log n)$ time as follows. Note that for a list $L_j'$, to check
whether $x_{i}^l> x_m^l(\calC_{L'_j})$, since
$x_m^l(\calC_{L'_j})=x(\calC_{L'_j})-|I_m|=x(v_j)+R-|I_m|$, it
is equivalent to checking whether $x_{i}^l> x(v_j)+R-|I_m|$, which is
equivalent to  $x_{i}^l-R+|I_m|> x(v_j)$. Consequently, $v_c$
is the rightmost leaf $v$ of $T$ such that $x_{i}^l-R+|I_m|>
x(v)$, and thus $v_c$ can be found by searching $T$ in $O(\log n)$ time.

Next, we find $c_1$, and remove the leaves $v_j$ with $j\in [c_1+1,c]$
if $c_1<c$, as follows (note that if the above step finds $c=0$, then
we skip this step).

Recall that for each $j\in [1,c]$,
$\delta(\calC_{L_j})=\max\{\delta(\calC_{L'_j}),d(i,\calC_{L_j})\}$,
with $\delta(\calC_{L'_j})=\delta(v_j)$ and
$d(i,\calC_{L_j})=x_{i}^l(\calC_{L_j})-x_{i}^l=x(\calC_{L'_j})-x_{i}^l
=x(v_j)+R-x_{i}^l$. Hence, we have
$\delta(\calC_{L_j})=\max\{\delta(v_j),x(v_j)+R-x_{i}^l\}$.

Starting from $v_c$, we first check whether
$\delta(\calC_{L_{c-1}})>\delta(\calC_{L_c})$, by computing
$\delta(\calC_{L_{c-1}})$ and $\delta(\calC_{L_c})$ as above.
If yes, then $c_1=c$ and we stop. Otherwise, we remove $v_c$ and proceed on considering
$v_{c-1}$.  In general, suppose we are considering $v_j$. If $j=1$,
then we stop with $c_1=1$. Otherwise, we check
whether $\delta(\calC_{L_{j-1}})>\delta(\calC_{L_j})$. If yes, then
$c_1=j$; otherwise, we remove $v_j$ and proceed on $v_{j-1}$.

In addition, after $v_{c_1}$ is found as above, we update
$\delta(v_{c_1})=\max\{\delta(v_{c_1}),x(v_{c_1})+R-x_{i}^l\}$.


Next, consider the new list $L^*_c$, which is $L_{c+0.5}$. We have
$\delta(\calC_{L_c^*})= \max\{\delta(\calC_{L_c'}),d(m,\calC_{L^*_c})\}=
\max\{\delta(\calC_{L_c'}),x_{m}^l(\calC_{L^*_c})-x_m^l\}$. Since
$\delta(\calC_{L_c'})=\delta(v_c)$ and
$x_{m}^l(\calC_{L^*_c})=x_{i}^r$, we have
$\delta(\calC_{L_c^*})= \max\{\delta(v_c),x_{i}^r-x_m^l\}$ (if the above has removed $v_c$,
then we temporarily keep the value $\delta(v_c)$ before $v_c$ is removed). Also, recall that
$x(\calC_{L_c^*})=x_{i}^r+|I_m|$. Therefore, we can compute both
$\delta(\calC_{L_c^*})$ and $x(\calC_{L_c^*})$ in constant time.
We insert a new leaf $v_{c+0.5}$ to $T$ corresponding to $L^*_c$, with
$\delta(v_{c+0.5})=\delta(\calC_{L_c^*})$ and
$x(v_{c+0.5})=x(\calC_{L_c^*})-R-|I_{i}|$ (the minus $|I_i|$
is due to that later we will increase $R$ by $|I_i|$).

Next, we determine $c_2$, and remove the leaves $v_j$ with $j\in [c_2+1,a]$
if $c_2<a$, as follows. Recall that for each $j\in [c+1,a]$,
$\delta(\calC_{L_j})=\max\{\delta(\calC_{L'_j}),d(m,\calC_{L_j})\}$,
with $\delta(\calC_{L'_j})=\delta(v_j)$ and
$d(m,\calC_{L_j})=x_{m}^r(\calC_{L_j})-x_{m}^r=x(\calC_{L'_j})+|I_{i}|-x_{m}^r
=x(v_j)+R+|I_{i}|-x_{m}^r$. Hence, we have
$\delta(\calC_{L_j})=\max\{\delta(v_j),x(v_j)+R+|I_{i}|-x_{m}^r\}$,
which can be computed in constant time once we access the leaf $v_j$.

Starting from the rightmost leaf $v_a$, in general, suppose we are
considering a leaf $v_j$. If $j=c+0.5$, then we stop with $c_2=c+0.5$.
Otherwise, let $v_h$ be the left neighboring leaf of $v_j$ (so $h$ is
either $j-1$ or $j-0.5$). We check whether $\delta(\calC_{L_h})>
\delta(\calC_{L_j})$ (the two values can be computed as above).
If yes, we stop with $c_2=j$; otherwise, we remove $v_j$ from $T$
and proceed on considering $v_h$.

If the above procedure returns $c_2\geq c+1$, then we further check
whether $x(\calC_{L_{c}^*})=x(\calC_{L_{c+1}})$. If yes, then we
remove the leaf $v_{c+0.5}$ from $T$. If $c_2\geq c+1$, we also need to
update 
$\delta(v_{c_2})=\max\{\delta(v_{c_2}),x(v_{c_2})+R+|I_{i}|-x_{m}^r\}$.

Finally, we determine $c'$ and remove all leaves strictly
between $v_{c_1}$ and $v_{c'}$, as follows.
Recall that given any leaf $v_j$ of $T$, we can compute
$\delta(\calC_{L_j})$ in constant time.
Starting from the right neighboring leaf of
$v_{c_1}$, in general, suppose we are considering a leaf
$v_j$. If $\delta(\calC_{L_{c_1}})\leq \delta(\calC_{L_j})$, then we
remove $v_j$ and proceed on the right neighboring leaf of $v_j$. This
procedure continues until either $\delta(\calC_{L_{c_1}}) >
\delta(\calC_{L_j})$ or $v_j$ is the rightmost leaf and has been
removed.

In addition, we update $R=R+|I_{i}|$. In light of
Lemma~\ref{lem:case12} and by our way of setting the
value $x(v_{c+0.5})$, this updates all $x$-values.
Also, the above has ``manually'' set the values $\delta(v_{c_1})$,
$\delta(v_{c_2})$, and $\delta(v_{c_{c+0.5}})$, by Lemma~\ref{lem:case12},
all $\delta$-values have been updated.
Finally, we update $m$, $m'$, and $p_b$ as follows.

In the general case where $1\leq c<a$ and $c'\neq c_2+1$, we set
$m'=i$ and $p_b$ to the leaf $v_{c_1}$. If $c'=c_2+1$, then the last
indices of all lists of $\calL$ are $i$, and thus we set $m=i$ and
$p_b=null$.
If $c=0$, then the last indices of all lists of $\calL$ are $m$, then
we do not need to update anything.
If $c=a$, then if $L_c^*\not\in \calL$, then
the last indices of all lists of
$\calL$ are $i$ and we set $m=i$ and $p_b=null$, and if $L_c^*\in
\calL$, then we set $m'=i$ and $p_b$ to $v_{c_1}$.

This finishes processing $I_{i}$. The total time is again as claimed
before.

\subsubsection{The Case $\calL_1'\neq \emptyset$}

In this case, $\calL'_1=\{L_1',\ldots,L_b'\}$ and
$\calL'_2=\{L_{b+1}',\ldots,L_a'\}$. The last indices of all lists of
$\calL_1'$ (resp., $\calL_2'$) are $m'$ (resp., $m$).
Note that the pointer $p_b$ points to the leaf $v_b$.

\paragraph{The first subcase $x_{i}^r\geq x_m^r$.}
In this case, the implementation is similar to the first subcase of
Section~\ref{sec:implemainfirst}, so we omit the details.

\paragraph{The second subcase $x_{m'}^r\leq x_{i}^r< x_m^r$.}

As our algorithm description in Section~\ref{sec:mainsecond}, we first
apply the similar implementation as the first subcase of
Section~\ref{sec:implemainfirst} on the leaves from $v_1$ to $v_b$, and then
apply the similar implementation as the second subcase of
Section~\ref{sec:implemainfirst} on the leaves from $v_{b+1}$ to
$v_a$. So the leaves of the current tree corresponding to the lists in
$S_1'\cup S_2'$, i.e.,
$\{L_1\ldots,L_{a_2},L_{b+1},\ldots,L_{c_1},L_{c'},\ldots,L_{c_2}\}$,
as defined in the second subcase of Section~\ref{sec:mainsecond}.

Next, we determine $b'$ and remove all leaves from $T$ strictly between
$v_{a_2}$ and $v_{b'}$. Starting from the right neighboring leaf of
$v_{a_2}$, in general, suppose we are considering a leaf
$v_j$. If $\delta(\calC_{L_{a_2}})\leq \delta(\calC_{L_j})$ (as before, these two
values can be computed in constant time once we have access to
$v_{a_2}$ and $v_j$), then we
remove $v_j$ and proceed on the right neighboring leaf of $v_j$. This
procedure continues until either $\delta(\calC_{L_{a_2}}) >
\delta(\calC_{L_j})$ or $v_j$ is the rightmost leaf and has been
removed.

Finally, we update $R=R+|I_{i}|$. To update $p_b$, $m$, and $m'$,
depending on the values $c, c'$ and $b'$, there are various cases. In the
general case where $b+1\leq c<a$, $c'\neq c_2+1$, and $b'\neq c_2+1$,
we update $p_b=v_{c_1}$ and $m'=i$. We omit the discussions for other special cases.

\paragraph{The third subcase $x_{i}^r< x_{m'}^r$.}

In this case, starting from $v_b$, we first remove all leaves from
$v_{e_1+1}$ to $v_b$. The algorithm is very similar as before and we
omit the details. Then, starting from $v_a$, we remove all leaves from
$v_{e_2+1}$ to $v_a$. Finally, starting from
$v_{e_1}$, we remove all leaves strictly between $v_{e_1}$ to
$v_{b'}$. In addition, we update $R=R+|I_{i}|$. In the general
case where $b'\neq e_2+1$, we set $p_b$ pointing to leaf $v_{e_1}$; otherwise,
we set $m=m'$ and $p_b=null$.

This finishes processing $I_{i}$ for all five subcases.
The algorithm finishes once $I_n$ is processed, after which
$\delta_{opt}=\delta(v)$, where $v$ is the rightmost leaf of $T$ (as
$\delta(v)$ is the smallest among all leaves of $T$). Again, the total
time of the algorithm is $O(n\log n)$. Clearly, the space used by our algorithm is $O(n)$.

\subsection{Computing an Optimal List}
\label{sec:optlist}
As discussed above, after $I_n$ is processed, the list (denoted by $L_{opt}$) corresponding
to the rightmost leaf (denoted by $v_{opt}$) of $T$ is an optimal
list, and $\delta_{opt}=\delta(v_{opt})$.
However, since our algorithm does not maintain the list $L_{opt}$
explicitly, $L_{opt}$ is not available after the algorithm finishes.
In this section, we give a way (without changing the complexity asymptotically) to maintain more information during the
algorithm such that after it finishes, we can reconstruct $L_{opt}$ in
additional $O(n)$ time.

We first discuss some intuition. Consider a list $L\in \calL$ before interval
$I_{i}$ is processed. During processing $I_{i}$ for $L$, observe
that the position of $i$ in the updated list $L$ is uniquely
determined by the input position of the last
interval $I_m$ of $L$ (i.e., depending on whether $x_{i}^r\geq
x_m^r$). However, uncertainty happens when $L$ generates
another ``new'' list $L^*$. More specifically, suppose $L$ is a canonical
list of $\calI[1,i-1]$. If there is no new list $L^*$, then by our
observations (i.e., Lemmas~\ref{lem:case1} and \ref{lem:case2}), the
updated $L$ is a canonical list of $\calI[1,i]$. Otherwise, we know (by
Lemma~\ref{lem:case3}) that one of $L$
and $L^*$ is a canonical list of $\calI[1,i]$, but we do not know
exactly which one is. This
is where the uncertainty happens and indeed this is why we
need to keep both $L$ and $L^*$ (thanks to Lemma~\ref{lem:prune}, we
only need to keep one such new list). Therefore, in order to reconstruct
$L_{opt}$, if processing $I_{i}$ generates a new list $L^*$ in
$\calL$, then we need to keep the relevant information about $L^*$.
The details are given below.

Specifically, we maintain an additional binary tree $T'$ (not a search tree). As in $T$, the leaves of $T'$ from left to right
correspond to the ordered lists of $\calL$. Consider a leaf $v$ of
$T'$ that corresponds to a list $L\in \calL$. Suppose after
processing $I_{i}$, $L$ generates a new list $L^*$ in $\calL$.
Let $m$ be the last index of the original $L$ (before $I_i$ is processed).
According to our algorithm, we know
that the last two indices of the updated $L$ are $m$ and $i$ with $i$ as the last index and
the last two indices of $L^*$ are $i$ and $m$ with $m$ as the last index.
Correspondingly, we update the
tree $T'$ as follows. First, we store $i$ at $v$, e.g., by setting
$A(v)=i$, which means that there are two choices for processing
$I_{i}$. Second, we create two children $v_1$ and $v_2$
for $v$ and they correspond to the lists $L$ and $L^*$, respectively.
Thus, $v$ now becomes an internal node.
Third, on the new edge $(v,v_1)$, we store an ordered pair $(m,i)$,
meaning that $m$ is before $i$ in $L$; similarly, on the edge
$(v,v_2)$, we store the pair $(i,m)$. In this way, each internal node of $T'$
stores an interval index and each edge of $T'$ stores an ordered pair.

After the algorithm finishes, we reconstruct the list $L_{opt}$ in the
following way. Let $\pi$ be the path from the root to the rightmost leaf $v_{opt}$
of $T'$. We will construct $L_{opt}$ by considering
all intervals  from $I_1$ to $I_{n}$
and simultaneously considering the nodes in $\pi$.
Initially, let $L_{opt}=\{1\}$.
Then, we consider $I_2$
and the first node of $\pi$ (i.e., the root of $T'$).
In general, suppose we are considering $I_{i}$ and a node $v$ of
$\pi$. We first assume that $v$ is an internal node (i.e., $v\neq v_{opt}$).

If $i<A(v)$, then only Case I or Case II of our preliminary algorithm
happens, and we insert $i$ into $L_{opt}$ based on whether $x_{i}^r\geq
x_m^r$ (specifically, if $x_{i}^r\geq
x_m^r$, then we append $i$ at the end of $L_{opt}$; otherwise, we insert $i$ right before
the last index of $L_{opt}$) and then proceed on $I_{i+1}$.

If $i\geq A(v)$ (in fact, $i$ must be equal to $A(v)$), then we insert $i$
into $L_{opt}$ based on the ordered pair of the
next edge of $v$ in $\pi$ (specifically, if $i$ is at the second
position of the pair, then $i$ is appended at the end of $L_{opt}$;
otherwise, $i$ is inserted right before the last index of $L_{opt}$) and
then proceed on the next node of $\pi$ and $I_{i+1}$.

If $v=v_{opt}$, then we
insert $i$ into $L_{opt}$ based on whether $x_{i}^r\geq x_m^r$ as above, and then proceed on $I_{i+1}$.
The algorithm finishes once $I_{n}$ is processed, after which
$L_{opt}$ is constructed.  It is easy to see that the algorithm
runs in $O(n)$ time and $O(n)$ space.

Once $L_{opt}$ is computed, 
we can apply the left-possible placement
strategy to compute an optimal configuration in additional $O(n)$ time.

\begin{theorem}
Given a set of $n$ intervals on a line, the interval separation
problem is solvable in $O(n\log n)$ time and $O(n)$ space.
\end{theorem}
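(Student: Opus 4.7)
The plan is to assemble the theorem from the machinery already developed in Sections~\ref{sec:pre}--\ref{sec:improve}. First, by Lemma~\ref{lem:reduction} it suffices to solve the one-direction problem, since an optimal two-direction solution is obtained by shifting every interval of the one-direction optimum leftward by $\delta_{opt}/2$; this post-processing costs $O(n)$ time. Second, by Lemma~\ref{lem:left}, once an optimal list $L_{opt}$ is in hand, the left-possible placement strategy produces an optimal configuration in another $O(n)$ time. Thus the whole problem reduces to computing $\delta_{opt}$ and a single optimal list $L_{opt}$ for the one-direction case, which is exactly what the improved algorithm of Section~\ref{sec:improve} is designed to do.

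Next I would describe the driver of the algorithm. After sorting the input intervals by their left endpoints in $O(n\log n)$ time, the algorithm processes $I_1,\ldots,I_n$ one by one, implicitly maintaining the set $\calL$ of candidate lists through the balanced binary search tree $T$ (together with the pointer $p_b$, the indices $m,m'$, and the global shift $R$) and, for the purpose of later reconstruction, the auxiliary binary tree $T'$ described in Section~\ref{sec:optlist}. Correctness of the algorithm follows from the algorithm invariants established in Section~\ref{sec:prune}, which are in turn consequences of Lemmas~\ref{lem:case1}--\ref{lem:prune} and Lemmas~\ref{lem:prune1}--\ref{lem:endindex}. In particular, after $I_n$ is processed, the rightmost leaf $v_{opt}$ of $T$ satisfies $\delta(v_{opt})=\delta_{opt}$, and the reconstruction procedure walks the root-to-$v_{opt}$ path of $T'$ while replaying the insertion rules from Cases~I--III to output $L_{opt}$ in $O(n)$ time.

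For the complexity bound, the key observation (already made at the start of Section~\ref{sec:imple}) is that each step that processes $I_i$ spends $O((k_i+1)\log n)$ time, where $k_i$ is the number of leaves deleted from $T$ during that step. Since each step inserts at most one brand-new leaf (from a Case-III new list $L_c^*$) and the initial tree has a single leaf, the total number of leaf creations across the whole algorithm is at most $n$, hence $\sum_i k_i \le n$, and the total running time telescopes to $O(n\log n)$. The tree $T$, the tree $T'$, the pointer $p_b$, and the scalar $R$ together occupy $O(n)$ space, matching the claimed space bound. The main obstacle in this argument is not any new technical fact, but rather the bookkeeping needed to verify that the implementation in Section~\ref{sec:imple} truly updates all $x$-values and $\delta$-values correctly in $O(\log n)$ amortized time per processed interval; this is exactly the content of Lemmas~\ref{lem:case11}, \ref{lem:case12}, \ref{lem:case22}, and \ref{lem:case23}, which guarantee that all but a constant number of the stored values change by a common additive shift $|I_i|$ (absorbed into the global $R$), so that only $O(1)$ explicit updates are needed per step.

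Combining these pieces, the overall algorithm runs in $O(n\log n)$ time and $O(n)$ space: $O(n\log n)$ for the initial sort, $O(n\log n)$ amortized for the $n$ update steps on $T$ and $T'$, $O(n)$ for reconstructing $L_{opt}$ from $T'$, $O(n)$ for running the left-possible placement strategy on $L_{opt}$ (Lemma~\ref{lem:left}), and finally $O(n)$ for the leftward shift by $\delta_{opt}/2$ that converts the one-direction optimum into a two-direction optimum (Lemma~\ref{lem:reduction}). This establishes the theorem.
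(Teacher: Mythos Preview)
Your proposal is correct and follows the same approach as the paper, which in fact states the theorem without an explicit proof environment, treating it as the culmination of Sections~\ref{sec:pre}--\ref{sec:optlist}. Your write-up simply makes that assembly explicit---reduction to the one-direction case (Lemma~\ref{lem:reduction}), the implicit maintenance of $\calL$ via $T$ with the amortized $O((k_i+1)\log n)$ bound and $\sum_i k_i\le n$, reconstruction of $L_{opt}$ via $T'$, and the left-possible placement (Lemma~\ref{lem:left})---exactly as the paper develops it.
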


\section{Conclusions}
\label{sec:conclude}

In this paper, we present an $O(n\log n)$ time and $O(n)$ space algorithm for solving the interval separation problem.
By a linear-time reduction from the integer element distinctness problem \cite{ref:LubiwA91,ref:YaoLo91}, we
can obtain an $\Omega(n\log n)$ time lower bound for the problem under the algebraic decision
tree model, which implies the optimality of our algorithm.

Given a set of $n$ integers $A=\{a_1,a_2,\ldots,a_n\}$, the element distinctness problem is to ask
whether there are two elements of $A$ that are equal. The problem has an $\Omega(n\log n)$ time lower bound
under the algebraic decision tree model \cite{ref:LubiwA91,ref:YaoLo91}. We create a set $\calI$ of
$n$ intervals as an
instance of our interval separation problem as follows. For each $a_i\in A$, we create an
interval $I_i$ centered at $a_i$ with length $0.1$. Let $\calI$ be the set of all intervals.
Since all elements of $A$ are integers,
it is easy to see that no two elements of $A$ are equal if and only if
no two intervals of $\calI$ intersect. On the other hand, no two intervals of $\calI$ intersect
if and only if the optimal value $\delta_{opt}$ in our interval separation problem on $\calI$
is equal to zero. This completes the reduction. This reduction actually shows that even if all
intervals have the same length, the interval separation problem still has an $\Omega(n\log n)$ time lower bound.

\bibliography{reference}

\begin{thebibliography}{10}

\bibitem{ref:AndrewsMi15}
A.M. Andrews and H.~Wang.
\newblock Minimizing the aggregate movements for interval coverage.
\newblock In {\em Proc. of the 14th Algorithms and Data Structures Symposium
  (WADS)}, pages 28--39, 2015.
\newblock Full version published online in {\em Algorithmica}, 2016.

\bibitem{ref:Bar-NoyMa13}
A.~Bar-Noy, D.~Rawitz, and P.~Terlecky.
\newblock Maximizing barrier coverage lifetime with mobile sensors.
\newblock In {\em Proc. of the 21st European Symposium on Algorithms (ESA)},
  pages 97--108, 2013.

\bibitem{ref:BhattacharyaOp09}
B.~Bhattacharya, B.~Burmester, Y.~Hu, E.~Kranakis, Q.~Shi, and A.~Wiese.
\newblock Optimal movement of mobile sensors for barrier coverage of a planar
  region.
\newblock {\em Theoretical Computer Science}, 410(52):5515--5528, 2009.

\bibitem{ref:ChenAl13}
D.Z. Chen, Y.~Gu, J.~Li, and H.~Wang.
\newblock Algorithms on minimizing the maximum sensor movement for barrier
  coverage of a linear domain.
\newblock {\em Discrete and Computational Geometry}, 50:374--408, 2013.

\bibitem{ref:ChenOp15}
D.Z. Chen, X.~Tan, H.~Wang, and G.~Wu.
\newblock Optimal point movement for covering circular regions.
\newblock {\em Algorithmica}, 69:379--399, 2015.

\bibitem{ref:ChrobakA06}
M.~Chrobak, C.~D\"urr, W.~Jawor, L.~Kowalik, and M.~Kurowski.
\newblock A note on scheduling equal-length jobs to maximize throughput.
\newblock {\em Journal of Scheduling}, 9(1):71--73, 2006.

\bibitem{ref:ChrobakOn07}
M.~Chrobak, W.~Jawor, J.~Sgall, and T.~Tich\'y.
\newblock Online scheduling of equal-length jobs: Randomization and restarts
  help.
\newblock {\em SIAM Journal of Computing}, 36(6):1709--1728, 2007.

\bibitem{ref:CzyzowiczOn09}
J.~Czyzowicz, E.~Kranakis, D.~Krizanc, I.~Lambadaris, L.~Narayanan, J.~Opatrny,
  L.~Stacho, J.~Urrutia, and M.~Yazdani.
\newblock On minimizing the maximum sensor movement for barrier coverage of a
  line segment.
\newblock In {\em Proc. of the 8th International Conference on Ad-Hoc, Mobile
  and Wireless Networks}, pages 194--212, 2009.

\bibitem{ref:CzyzowiczOn10}
J.~Czyzowicz, E.~Kranakis, D.~Krizanc, I.~Lambadaris, L.~Narayanan, J.~Opatrny,
  L.~Stacho, J.~Urrutia, and M.~Yazdani.
\newblock On minimizing the sum of sensor movements for barrier coverage of a
  line segment.
\newblock In {\em Proc. of the 9th International Conference on Ad-Hoc, Mobile
  and Wireless Networks}, pages 29--42, 2010.

\bibitem{ref:GareySc81}
M.R. Garey, D.S. Johnson, B.B. Simons, and R.E. Tarjan.
\newblock Scheduling unit-time tasks with arbitrary release times and
  deadlines.
\newblock {\em SIAM Journal on Computing}, 10:256--269, 1981.

\bibitem{ref:KleinbergAl05Ch4}
J.~Kleinberg and E.~Tardos.
\newblock {\em Algorithm Design}, chapter~4.
\newblock Addison-Wesley, Boston, MA, USA, 2005.

\bibitem{ref:LangSc76}
T.~Lang and E.B. Fern\'andez.
\newblock Scheduling of unit-length independent tasks with execution
  constraints.
\newblock {\em Information Processing Letters}, 4:95--98, 1976.

\bibitem{ref:LawlerSe93}
E.L. Lawler, J.K. Lenstra, A.H.G.~Rinnooy Kan, and D.B. Shmoys.
\newblock {\em {\em Sequencing and scheduling: Algorithms and complexity, in}
  Handbooks in Operations Research and Management Science 4, {\em S.C. Graves,
  A.H.G. Rinnooy Kan and P.H. Zipkin (eds.)}}.
\newblock Elsevier, 1993.

\bibitem{ref:LiMi11}
M.~Li, X.~Sun, and Y.~Zhao.
\newblock Minimum-cost linear coverage by sensors with adjustable ranges.
\newblock In {\em Proc. of the 6th International Conference on Wireless
  Algorithms, Systems, and Applications}, pages 25--35, 2011.

\bibitem{ref:LiAl15}
S.~Li and H.~Wang.
\newblock Algorithms for minimizing the movements of spreading points in linear
  domains.
\newblock In {\em Proc. of the 27th Canadian Conference on Computational
  Geometry (CCCG)}, 2015.

\bibitem{ref:LubiwA91}
A.~Lubiw and A.~R\'{a}cz.
\newblock A lower bound for the integer element distinctness problem.
\newblock {\em Information and Computation}, 94:83--92, 1991.

\bibitem{ref:MehrandishOn11}
M.~Mehrandish.
\newblock {\em On Routing, Backbone Formation and Barrier Coverage in Wireless
  Ad Doc and Sensor Networks}.
\newblock PhD thesis, Concordia University, Montreal, Quebec, Canada, 2011.

\bibitem{ref:MehrandishMi11}
M.~Mehrandish, L.~Narayanan, and J.~Opatrny.
\newblock Minimizing the number of sensors moved on line barriers.
\newblock In {\em Proc. of IEEE Wireless Communications and Networking
  Conference (WCNC)}, pages 653--658, 2011.

\bibitem{ref:SimonsA78}
B.~Simons.
\newblock A fast algorithm for single processor scheduling.
\newblock In {\em Proceedings of the 19th Annual Symposium on Foundations of
  Computer Science}, pages 246--252, 1978.

\bibitem{ref:VakhaniaA13}
N.~Vakhania.
\newblock A study of single-machine scheduling problem to maximize throughput.
\newblock {\em Journal of Scheduling}, 16(4):395--403, 2013.

\bibitem{ref:VakhaniaMi13}
N.~Vakhania and F.~Werner.
\newblock Minimizing maximum lateness of jobs with naturally bounded job data
  on a single machine in polynomial time.
\newblock {\em Theoretical Computer Science}, 501:72--81, 2013.

\bibitem{ref:YaoLo91}
A.C. Yao.
\newblock Lower bounds for algebraic computation trees with integer inputs.
\newblock {\em SIAM Journal on Computing}, 20:655--668, 1991.

\end{thebibliography}
\bibliographystyle{plain}

\end{document}